\documentclass[aps,prx,twocolumn,superscriptaddress,nofootinbib,longbibliography]{revtex4-2}
\usepackage{amsmath,amssymb,amsthm,booktabs,graphicx,array,placeins}
\usepackage[hidelinks,hypertexnames=false]{hyperref}
\newcommand{\argmax}{\operatorname*{arg\,max}}
\newcommand{\MLD}{\mathrm{MLD}}
\newcommand{\MP}{\mathrm{MP}}
\newtheorem{smtheorem}{Theorem}
\newtheorem{smproposition}[smtheorem]{Proposition}
\newtheorem{smlemma}[smtheorem]{Lemma}
\newtheorem{smdefinition}[smtheorem]{Definition}

\begin{document}

\title{Entropic Rigidity in Quantum Memories: How Geometry and Algebra Control the Onset of Degeneracy Corrections}

\author{Yixin Zhao}
\email{zhaoyixin22@mails.ucas.ac.cn}
\affiliation{Beijing Key Laboratory of Fault-Tolerant Quantum Computing,
Beijing Academy of Quantum Information Sciences, Beijing 100193, China}
\affiliation{Beijing National Laboratory for Condensed Matter Physics,
Institute of Physics, Chinese Academy of Sciences, Beijing 100190, China}
\affiliation{University of Chinese Academy of Sciences,
Beijing 100190, China}

\author{Fei Yan}
\email{yanfei@baqis.ac.cn}
\affiliation{Beijing Key Laboratory of Fault-Tolerant Quantum Computing,
Beijing Academy of Quantum Information Sciences, Beijing 100193, China}

\date{\today}

\begin{abstract}
Maximum-probability (MP) decoding selects the most probable microscopic
error, whereas degenerate maximum-likelihood (MLD) decoding includes the
configurational entropy of an entire logical sector. Using code-capacity
Pauli noise to isolate rigidity intrinsic to the code, we determine the first
physical error weight $m$ at which their logical winner sets become disjoint,
even under optimal MP tie resolution. Code distance imposes the universal
bound $m\geq h=\lceil d/2\rceil$. We define the entropic rigidity depth $r$
through $m=h+r$ and certify a three-level hierarchy: $r=0$ for planar surface
codes and two concatenated families, $r=1$ for odd-distance square toric codes
and the Gross $[[144,12,12]]$ quantum low-density-parity-check code, and
$r=2$ for a separable family with hypergraph product and bivariate bicycle
descriptions. The onset fixes the leading operational failure gap, proportional
to the $m$th power of the physical noise strength. Geometry and algebra
therefore provide quantifiable controls of configurational entropy and an
exact benchmark for low-noise decoder selection.
\end{abstract}

\maketitle

\section{Introduction}

Configurational entropy can overturn a ground-state preference in a finite
disordered system. In quantum error correction, this competition occurs
within the ensemble of physical errors compatible with each measured
syndrome. Standard statistical-mechanical mappings place Pauli
decoding on the Nishimori line, the manifold on which the couplings of a
disordered spin model are matched to the physical noise
\cite{DennisEtAl2002,WangHarringtonPreskill2003,ChubbFlammia2019}.
A measured syndrome partitions the compatible microscopic errors into
logical sectors. Maximum-probability (MP) decoding selects the most probable
microscopic configuration and therefore implements a zero-temperature
ground-state rule. Degenerate maximum-likelihood decoding (MLD) instead
selects the logical sector with the largest partition function and therefore
implements the corresponding finite-temperature free energy rule. A logical
sector can consequently lose through microscopic energy while winning through
configurational multiplicity.

Building on this mapping, sum-product recursions, tensor network contractions,
and tensor enumerators provide exact or controlled approximations to logical
partition functions for concatenated, topological, and more general stabilizer
codes
\cite{RahnDohertyMabuchi2002,Poulin2006,FerrisPoulin2014,
CaoLackey2023,CaoEtAl2024}.
Exact and approximate MLD algorithms, together with the complexity boundary
of optimal stabilizer decoding, have also been developed extensively
\cite{DuclosCianciPoulin2010,BravyiSucharaVargo2014,IyerPoulin2015,
deMartiEtAl2024,CaoEtAl2025Exact,CaoEtAl2026Review}.
Within this line of work, minimal multi-goal trellises provide exact max-product
and sum-product
implementations of nondegenerate and degenerate maximum-likelihood decoding,
with constructions that reduce the associated trellis complexity
\cite{StylianouEtAl2025Trellis}.
Together, these methods provide exact or approximate logical sector
probabilities and decoding decisions. A
complementary low-noise literature studies rare logical failures, importance
sampling, parity and boundary effects, finite-size partition functions,
aggregate success ratios, decodability boundaries, and exact finite-noise
behavior
\cite{BravyiVargo2013,Fowler2013,Fowler2013Erratum,BenhemouEtAl2025,
ShuttyEtAl2026,Wichette2026,EnglishEtAl2025}.
More recently, circuit-level rare-event methods have introduced decoder
failure spectra, exact counting of minimum-fault failure configurations, and
importance-sampling techniques for specified codes, circuits, and decoders
\cite{BeverlandEtAl2025FailFast}.

These works establish the distinction between the most probable microscopic
error and the most probable logical sector, and provide exact or approximate
methods for evaluating the corresponding decisions. Here we study a different
observable: the minimum physical error weight at which the set-valued MP and
MLD logical winner sets become disjoint, even after optimal MP tie resolution.
Unequal logical sector partition functions alone do not imply a strict
decision split, because MP and MLD may still share a logical winner.
Selecting a single winner from each set using an arbitrary tie convention
can nevertheless make these overlapping winner sets appear disjoint.

For a syndrome $s$ and an interior low-noise Pauli ray
$(p_X,p_Y,p_Z)=t(x,y,z)$, with $x,y,z>0$, $x+y+z=1$, and $t\to0$,
let $\Pr_t(E)$ denote the probability of a physical Pauli error $E$.
The two statistical weights are
\begin{equation}
 \begin{aligned}
 T_{s,L}(t)&=\max_{E\in C_sL}\Pr_t(E),\\
 M_{s,L}(t)&=\sum_{E\in C_sL}\Pr_t(E),
 \end{aligned}
 \label{eq:messages}
\end{equation}
where $C_s$ is the stabilizer coset of a fixed Pauli representative with
syndrome $s$, and $C_sL$ is the sector with logical label $L$. Below,
``logical sector,'' ``logical coset,'' and ``logical class'' refer to this
same set. The quantity $T_{s,L}$ is the peak probability within a logical
sector, whereas $M_{s,L}$ is its total probability mass. Through the
Nishimori condition, the low-noise limit $t\to0$ corresponds to the
zero-temperature limit of the effective decoding model
\cite{BeverlandEtAl2019,Wichette2026}.

Writing $w_{\min}(s)$ for the minimum physical error weight compatible with
syndrome $s$, and retaining every maximizer under each decision rule, we
define $m$ as the smallest value of $w_{\min}(s)$ for which the MP and MLD
logical winner sets are disjoint for all sufficiently small $t$. Since the
winner sets are disjoint, no syndrome-dependent MP tie resolution, even the
optimal one, can eliminate the separation.

Code distance imposes a universal energy locking obstruction. For every
stabilizer code under code-capacity Pauli noise, a strict decision split is
impossible below $h\equiv\lceil d/2\rceil$, where $d$ is the code distance,
the minimum weight of a nontrivial logical Pauli. Distance alone does not
determine when the split occurs. The number of
additional entropy neutral layers beyond $h$ defines the entropic rigidity
depth $r$, so that
\begin{equation}
 m=h+r,
 \label{eq:rigidity-intro}
\end{equation}
where the code-dependent integer $r$ is the entropic rigidity depth defined
precisely in Sec.~\ref{sec:framework}. It counts the consecutive excitation
layers beginning at $h$ on which recursion, topology, or parity-check algebra
still prevents configurational entropy from changing the logical decision.
For every family studied here, we certify all layers
$h,\ldots,h+r-1$ as entropy neutral and construct a strict split witness at
$h+r$. The six code families therefore form a three-level physical
classification. Planar surface codes and the two concatenated
renormalization group families have $r=0$. Odd-distance square toric codes
and the Gross $[[144,12,12]]$ quantum low-density-parity-check (qLDPC) code
have $r=1$. One separable family $\mathcal H_q$, admitting both hypergraph
product and bivariate bicycle descriptions, has $r=2$. Code distance
determines when entropy is first allowed to matter, while entropic rigidity
determines when it actually changes the logical decision.

This finite-size hierarchy complements threshold physics. Thresholds locate
the critical noise separating ordered and disordered phases as $d\to\infty$,
whereas $m$ identifies, at fixed code size and as $t\to0$, the first
microscopic excitation layer at which MP becomes strictly suboptimal relative
to MLD. Recent below-threshold surface code demonstrations show continuing
experimental progress toward lower-noise finite-size memories
\cite{Google2025}. As physical noise is reduced further toward the regime
certified here, the onset exponent and its coefficient provide a principled
benchmark for quantifying the leading performance penalty of MP relative to
MLD and for assessing when degeneracy-aware logical coset summation becomes
relevant to a prescribed accuracy.

We use code capacity Pauli noise as a controlled physical baseline rather
than as a proxy for a particular hardware implementation. It isolates the
intrinsic entropy generated by code geometry, logical topology, and parity
check algebra before a syndrome extraction circuit introduces scheduling
dependent fault multiplicities, propagated hook errors, and temporal
boundaries. The exact relation $m=h+r$ therefore provides a theorem level
reference point against which changes caused by a specified circuit or
decoder can be assessed.

The article follows this rigidity hierarchy. Sec.~\ref{sec:framework}
establishes the universal energy locking obstruction and the balanced
factorization framework. Sec.~\ref{sec:mechanisms} realizes successive
rigidity depths through recursion, boundary topology, and qLDPC algebra.
Sec.~\ref{sec:operational} converts the onset into a finite noise operational
benchmark. Sec.~\ref{sec:discussion} explains the conditions required to
extend the framework to detector fault hypergraphs and discusses the
implications for scalable quantum memories.

\section{Universal obstruction and entropic rigidity}
\label{sec:framework}

The baseline $h$ introduced above follows from a universal obstruction. For
every distance-$d$ stabilizer code and every interior Pauli-noise direction,
\begin{equation}
 m \ge h .
 \label{eq:lower}
\end{equation}
Indeed, for a physical Pauli error $E$, write
$\operatorname{wt}(E)$ for the number of physical qubits on which $E$
acts nontrivially and $\operatorname{syn}(E)$ for its binary stabilizer
syndrome, whose $i$th component is $0$ if $E$ commutes with stabilizer
generator $g_i$ and $1$ if it anticommutes. The minimum physical-error
weight consistent with syndrome $s$ is then
$w_{\min}(s):=\min\{\operatorname{wt}(E):\operatorname{syn}(E)=s\}$.
For a syndrome $s_\star$ realizing the onset,
$m=w_{\min}(s_\star)$. Choose $L_{\rm MP}$ and $L_{\rm MLD}$ from its
disjoint MP and MLD winner sets. The MP-winning sector contains a weight-$m$
representative and therefore contributes at order $\Theta(t^m)$. The
MLD-winning sector must also begin at weight $m$: a sector whose first
nonzero contribution is only $O(t^{m+1})$ cannot exceed a sector with a
positive $\Theta(t^m)$ contribution as $t\to0$. Calling the two weight-$m$
representatives
$E_{\rm MP}$ and $E_{\rm MLD}$, their common syndrome and distinct logical
labels make $E_{\rm MP}^{\dagger}E_{\rm MLD}$ a nontrivial logical Pauli.
We therefore have
\begin{equation*}
 d\leq \operatorname{wt}(E_{\rm MP}^{\dagger}E_{\rm MLD})
 \leq \operatorname{wt}(E_{\rm MP})+\operatorname{wt}(E_{\rm MLD})=2m.
\end{equation*}
Thus all syndrome layers below $h$ form a universal \emph{energy locking
window}: distance forbids an equal-order logical competitor before
configurational multiplicity can alter the decision.  This obstruction is
independent of code geometry and decoder implementation and holds for every
interior Pauli-noise direction, including arbitrarily anisotropic ones.

The half distance bound identifies the first layer at which a split is
possible, but it does not determine whether the onset occurs there or is
delayed by additional rigidity. To resolve the onset exactly, we expand the
probability mass of each logical sector in powers of $t$. For a code on $n$
qubits, this expansion terminates at physical error weight $n$, so the
comparison is finite.
For each logical sector, let $P_{s,L}^{(j)}$ denote its total
composition-weighted multiplicity at physical error weight $j$, as defined in
Supplemental Material Sec.~S1 \cite{Supplement}. Arranging these coefficients
in increasing physical weight gives the \emph{lexicographic spectrum}
$\boldsymbol P_{s,L}=(P_{s,L}^{(w_s)},P_{s,L}^{(w_s+1)},\ldots,
P_{s,L}^{(n)})$, where $w_s=w_{\min}(s)$. After evaluation along the fixed
Pauli ray $(x,y,z)$, MLD selects, for sufficiently small $t$, the logical
sectors with lexicographically maximal spectra.

MP instead compares the largest probability of a single microscopic error.
We write $T_{s,L}$ for the leading composition monomial of $T_{s,L}(t)$.
This \emph{tropical peak} is the largest monomial among the minimum weight
representatives in sector $L$. Writing
$\operatorname*{LexMax}_{L}\boldsymbol P_{s,L}$ for the set of
lexicographic maximizers, a strict eventual split occurs precisely when
\begin{equation}
 \arg\max_L T_{s,L}
 \cap
 \operatorname*{LexMax}_{L}\boldsymbol P_{s,L}
 =
 \varnothing .
 \label{eq:spectral-criterion}
\end{equation}
This comparison retains the higher-weight coefficients needed to resolve ties
that the minimum-weight layer alone cannot distinguish.
The set-valued winner definition removes dependence on an arbitrary tie
convention, while the lexicographic spectrum determines the genuine eventual
MLD winner when successive low-weight coefficients are tied.

The leading layer of this comparison admits a structural representation. At
physical error weight $w$, the composition-decorated
\emph{balanced-factorization atlas} $\mathfrak F_w$ decomposes into fibers
indexed by syndrome. Within the fiber of a fixed syndrome $s$, the vertices
are the weight-$w$ physical errors compatible with $s$, partitioned by their
logical labels and counted once. An edge joins two vertices belonging to
different logical classes. Each edge records the factorization
$E^\dagger F$ of a nontrivial logical Pauli into its endpoint errors, with
$\operatorname{wt}(E^\dagger F)\leq2w$. Physically, MP compares the
ground-state preference of the competing sectors, whereas MLD also includes
the multiplicity of low-lying states. A single high-probability error can
therefore make one sector the MP winner, while a larger collection of
lower-probability errors can make another sector the MLD winner
[Fig.~\ref{fig:rigidity}(a)].

This structure exposes a second invariant beyond distance. We call the
candidate layer $h+j$ \emph{entropy neutral} if, for every syndrome $s$ with
$w_{\min}(s)=h+j$, an exact atlas certificate shows that the MP and MLD
winner sets have a nonempty intersection. Such a certificate may take the
form of a pairing that preserves composition, a fiber with one vertex in each
competing class, an
invariant renormalization group (RG) return, or an algebraic packing
argument; in every case, neutrality is established exactly rather than
inferred from sampling. The \emph{entropic rigidity depth} $r$ is the number
of consecutive candidate layers that are entropy neutral, beginning at $h$. If the
layers $h,\ldots,h+r-1$ are certified neutral and a strict witness with
disjoint MP and MLD winner sets exists at $h+r$, then the onset is exactly
$m=h+r$. Thus $r$ counts how many excitation layers allowed by distance
remain unable to convert configurational entropy into a logical decision
[Fig.~\ref{fig:rigidity}(b)].

This yields an \emph{obstruction, transfer, and rigidity principle}. Distance
obstructs every split below $h$; exact pairing or return maps transfer
neutrality or a strict split between related fibers; and geometric or
algebraic rigidity determines how many candidate layers remain neutral before
the first imbalance appears. Supplemental Material Sec.~S2 states the general
certificate criterion, and Secs.~S3--S8 establish the layer assignments used
here \cite{Supplement}.
The resulting neutral layer hierarchy is summarized in
Fig.~\ref{fig:rigidity}(b). Its recursive, geometric, and algebraic
realizations are developed in Sec.~\ref{sec:mechanisms}.

\begin{figure}[t]
\centering
\noindent
\begin{minipage}[t]{\columnwidth}
  \vspace{0pt}\noindent%
  \begin{minipage}[t]{1.60em}\vspace{0.12em}\makebox[\linewidth][c]{\normalsize (a)}\end{minipage}%
  \begin{minipage}[t]{\dimexpr\linewidth-1.60em\relax}\vspace{0pt}%
    \includegraphics[width=\linewidth]{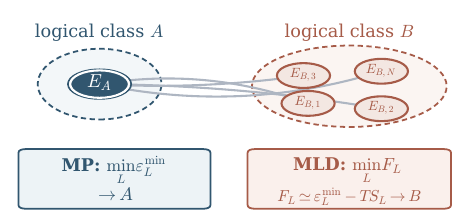}%
  \end{minipage}%
\end{minipage}

\vspace{0.25em}

\noindent
\begin{minipage}[t]{\columnwidth}
  \vspace{0pt}\noindent%
  \begin{minipage}[t]{1.60em}\vspace{0.12em}\makebox[\linewidth][c]{\normalsize (b)}\end{minipage}%
  \begin{minipage}[t]{\dimexpr\linewidth-1.60em\relax}\vspace{0pt}%
    \includegraphics[width=\linewidth]{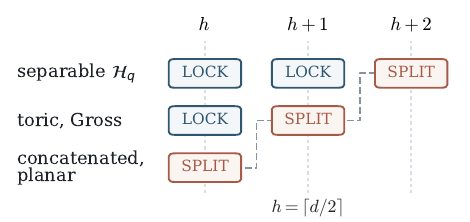}%
  \end{minipage}%
\end{minipage}
\caption{\textbf{Competition between peak probability and coset mass, and the
entropic rigidity depth.}
(a) In an atlas fiber at fixed $(s,w)$, MP selects the class containing the most
probable microscopic representative, whereas MLD can select a competing class
whose larger multiplicity lowers its logical sector free energy. Gray edges
join representatives that share a syndrome but belong to different logical
classes. (b) Exact certificates for neutral layers and the first strict witness give
$m=h+r$: concatenated and planar families have $r=0$, toric and Gross
memories have $r=1$, and the separable family $\mathcal H_q$ has $r=2$.
Here $\mathcal H_q$ is a single family admitting both hypergraph-product and
bivariate bicycle descriptions.}
\label{fig:rigidity}
\end{figure}

\section{Mechanisms of entropic rigidity}
\label{sec:mechanisms}

The mechanisms below form a single hierarchy of constraints on low order
multiplicity. Exact recursive closure transfers an onset witness through every
concatenation depth. Boundary geometry then distinguishes an open corridor
that permits immediate path proliferation from periodic homology that locks
the first candidate layer. Sparse parity-check algebra imposes the strongest
restriction by matching the Pauli compositions of entire low-weight fibers.
In every case, an exact neutrality certificate excludes separation between MP
and MLD throughout the protected layers, and a constructive witness identifies
the first imbalanced layer. Figs.~\ref{fig:rg-invariants},
\ref{fig:boundary-mechanism}, and \ref{fig:algebraic-mechanisms} follow this
progression from recursive transfer to geometric and algebraic rigidity.

\begin{figure}[t]
\centering
\includegraphics[width=\columnwidth,trim=0 7mm 0 0,clip]{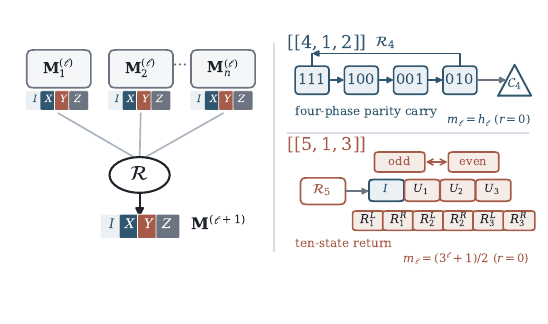}
\caption{\textbf{Exact recursive returns at rigidity depth $r=0$.}
For the $[[4,1,2]]$ family, the parity vector of the three nontrivial
logical sector weights follows a four phase cycle that closes under the exact
RG map. For the $[[5,1,3]]$ family, the identity state, three overlap states
$U_a$, and their six directed resolutions $R_a^L,R_a^R$ form a ten state
alphabet that also closes under the RG map. In both families the first
distance allowed layer remains a strict witness after every recursive step,
so $m_\ell=h_\ell$ at all depths.}
\label{fig:rg-invariants}
\end{figure}

\subsection{Exact renormalization invariants}
Recursive concatenation induces a discrete renormalization group (RG) map on
the four logical sector partition functions
\cite{RahnDohertyMabuchi2002,Poulin2006,EvansStephens2012}. The MLD contraction
sums the products of all compatible child messages, whereas its MP counterpart
retains the largest compatible product. At low noise, projection onto the
\emph{leading germ} records the first nonzero exponent and its positive
coefficient. The exponents obey tropical rules: addition keeps the smaller
exponent and multiplication adds exponents. When exponents tie, the exact RG
also adds their coefficients and therefore retains the degeneracy discarded by
ordinary tropicalization \cite{LiuWangZhang2021}. Tensor enumerator contraction
supplies this coefficient-sensitive RG map
\cite{FerrisPoulin2014,CaoLackey2023,CaoEtAl2024}.

The problem at arbitrary depth reduces to a finite certificate based on an
invariant cone: a finite set of branch selection inequalities is mapped into
itself by exact seed
contraction, and a strict return propagates an entering witness by induction.
The finite state labels retain only the parity or overlap data needed to
determine the leading coefficient. Closure means that no new leading pattern
appears after another concatenation step. Once a half-distance split witness
returns to the same certified state, induction propagates that witness to
every depth. The first distance-allowed layer is therefore already
imbalanced, giving $r=0$.
The general finite-certificate transfer principle is established in
Supplemental Material Sec.~S2, while the exact recurrences, invariant regions,
and certificates at every depth for the $[[5,1,3]]$ and $[[4,1,2]]$ families are
given in Secs.~S3 and S4, respectively \cite{Supplement}.

Two inequivalent RG mechanisms realize rigidity depth $r=0$ and saturate
Eq.~(\ref{eq:lower}) [Fig.~\ref{fig:rg-invariants}]. The first uses the
standard $[[4,1,2]]$ subcode obtained by fixing one logical degree of freedom
of $[[4,2,2]]$, expressed in a local Clifford frame adapted to the noise. If
$(0,a_\ell,b_\ell,c_\ell)$ are the minimum weights of its four zero-syndrome
logical sectors, the RG map induces the parity cycle
$\pi_\ell=(a_\ell,b_\ell,c_\ell)\bmod2:
111\to100\to001\to010\to111$. This parity determines whether the two
half-logical representatives in the witness are disjoint or overlap on one
site. The four phases and three nontrivial target sectors therefore reduce the
all-depth support construction to twelve finite recipes. Their exact return
proves onset at half distance for every depth inside a certified anisotropic
cell, demonstrating that the invariant cone mechanism persists away from
symmetric noise.

For the cyclic perfect $[[5,1,3]]$ code, let
$a\in\{1,2,3\}$ label the target nontrivial logical Pauli. The state $U_a$
records two same-syndrome half-distance representatives whose product is a
minimum logical of type $a$ and whose supports overlap on one site. The states
$R_a^L$ and $R_a^R$ are the two directed resolutions of that overlap, with
endpoint weights $(h_\ell,h_\ell-1)$ and
$(h_\ell-1,h_\ell)$, respectively. Together with the identity state, these
give a ten-state alphabet. An exact five-child substitution maps this alphabet
into itself, while two phases track the alternating logical labels of the
endpoints. The substitution preserves the common syndrome, target logical
product, half-distance endpoint weights, and single overlap, so the same
witness type returns at the next depth. This closure proves the exact onset at
half distance for every depth $\ell\geq2$ throughout the open analytic region shown in
Fig.~\ref{fig:onsets}, which includes points with a normalized witness
margin of order one. Thus both recursive mechanisms certify half-distance
saturation at every concatenation depth throughout explicit open noise
regions. This complements earlier studies of recursive decoding and phase
diagrams by determining the first tie-safe decision split between MP and MLD
\cite{SommersHuseGullans2025}.

\subsection{Boundary topology as an entropic valve}
Open boundaries and periodic closure impose opposing constraints on logical
coset degeneracy at low excitation order
[Fig.~\ref{fig:boundary-mechanism}]. For the displayed planar witness, class
$A$ has two minimum weight representatives, while class $B$ has four corridor
completions with the same weight and syndrome. More generally, the reflected
corridor family increases the leading mass of class $B$ without increasing
its peak probability. The first allowed layer $h$ is therefore already
imbalanced, so $r=0$.
Throughout the certified open region in Fig.~\ref{fig:onsets}, the peak favors
one class but the summed mass favors the other, attaining the universal bound.

\begin{figure}[t]
\centering
\noindent
\begin{minipage}[t]{\columnwidth}
  \vspace{0pt}\noindent%
  \begin{minipage}[t]{1.55em}\vspace{0.10em}\makebox[\linewidth][c]{\normalsize (a)}\end{minipage}%
  \begin{minipage}[t]{\dimexpr\linewidth-1.55em\relax}\vspace{0pt}%
    \includegraphics[width=\linewidth,trim=0 2mm 0 2.8mm,clip]{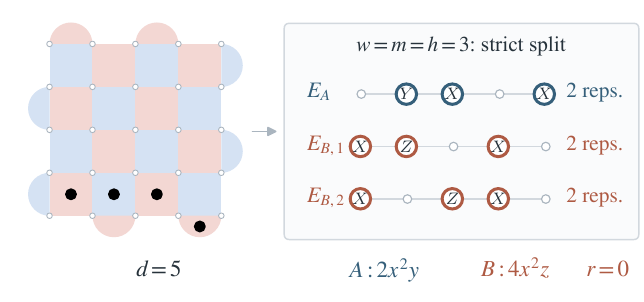}%
  \end{minipage}%
\end{minipage}
\vspace{0em}

\noindent
\begin{minipage}[t]{\columnwidth}
  \vspace{0pt}\noindent%
  \begin{minipage}[t]{1.55em}\vspace{0.10em}\makebox[\linewidth][c]{\normalsize (b)}\end{minipage}%
  \begin{minipage}[t]{\dimexpr\linewidth-1.55em\relax}\vspace{0pt}%
    \includegraphics[width=\linewidth,trim=0 2mm 0 2.8mm,clip]{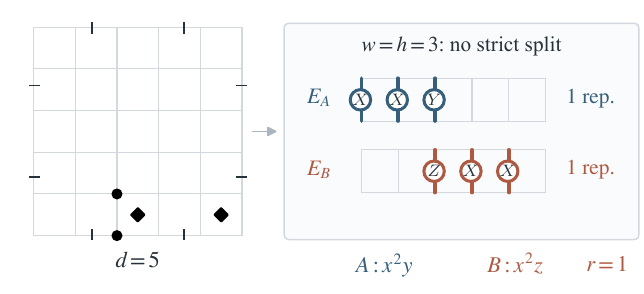}%
  \end{minipage}%
\end{minipage}
\vspace{0em}

\noindent
\begin{minipage}[t]{\columnwidth}
  \vspace{0pt}\noindent%
  \begin{minipage}[t]{1.55em}\vspace{0.10em}\makebox[\linewidth][c]{\normalsize (c)}\end{minipage}%
  \begin{minipage}[t]{\dimexpr\linewidth-1.55em\relax}\vspace{0pt}%
    \includegraphics[width=\linewidth,trim=0 2mm 0 2.8mm,clip]{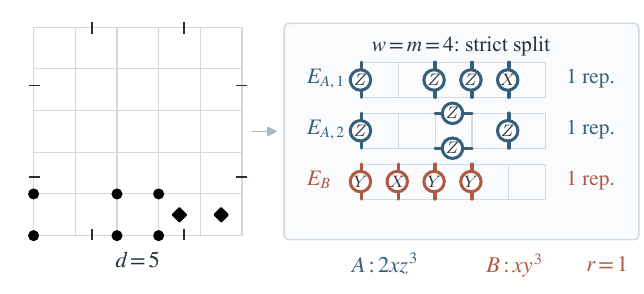}%
  \end{minipage}%
\end{minipage}
\vspace{-0.20em}
\caption{\textbf{Boundary corridors and periodic loop rigidity at $d=5$.}
The labels ``1 rep.'' and ``2 reps.'' count distinct physical errors represented
by each displayed composition prototype. Errors counted by the same label have
the same syndrome, logical class, Pauli composition, and probability, but are
not identical physical Pauli operators.
(a) For the displayed planar syndrome, open boundaries permit four
minimum weight representatives in class $B$ but only two in class $A$.
The resulting leading masses are $P_A=2x^2y$ and $P_B=4x^2z$, so the first
allowed layer $h=3$ can already split the peak and mass decisions, giving
$r=0$. (b) For the displayed toric syndrome at $h=3$, each competing
logical class has one minimum weight representative. The leading mass
therefore equals the peak classwise, and no strict split occurs. Black circles
and diamonds mark the violated vertex and plaquette checks that form the
common syndrome. (c) At $m=4$, a local deformation produces two class $A$
representatives and one class $B$ representative, with leading masses
$P_A=2xz^3$ and $P_B=xy^3$. This is the first strict toric split and gives
$r=1$. The line segments in the right boxes show physical Pauli errors, not
atlas edges.}
\label{fig:boundary-mechanism}
\end{figure}

Along the low noise Pauli ray
$(p_X,p_Y,p_Z)=t(x,y,z)$, the peak and mass select different logical classes
whenever $y>x$, $y>z$, and
$2z(1-x^2/y^2)>y$. An enumeration based on the reflection principle pairs
paths at their first boundary crossing and establishes this result for every
odd distance.

Periodic closure shuts this valve. Removing boundary endpoints forces the two
competitors at $h$ into complementary segments of a noncontractible cycle.
For each fixed syndrome, each logical class then contains exactly one geodesic
representative, giving the matched fiber in
Fig.~\ref{fig:boundary-mechanism}(b). Its leading mass equals its peak within
each class, so the layer is entropy neutral. A transverse excursion changes
the combined cycle by two edges. Each competitor must therefore gain one
physical error before this deformation can appear. At $h+1$, the deformation
produces two class $A$ representatives and one class $B$ representative, as
shown in Fig.~\ref{fig:boundary-mechanism}(c). This is the first possible
imbalance, hence $r=1$, and it reverses the decisions throughout the toric
region in Fig.~\ref{fig:onsets}. The contrast between boundary conditions is
therefore
\begin{equation}
 m_d^{\mathrm{planar}}=\frac{d+1}{2},
 \qquad
 m_d^{\mathrm{toric}}=\frac{d+3}{2},
 \qquad d=3,5,\ldots .
 \label{eq:topological-onsets}
\end{equation}
The $d=3$ case is verified directly. The planar corridor enumeration and the
proofs of toric loop rigidity and deformation are given in Secs.~S5 and S6 of the
Supplemental Material \cite{Supplement}, respectively. These results are
restricted to the specified code families and code capacity Pauli noise; the
arguments for all distances are analytic, with finite enumeration used only for the
$d=3$ base case and for consistency checks.

For $d=2a+1\geq5$, the local toric deformation has two representatives of
minimum weight in class $A$ and one in class $B$. Consequently,
$T_A=x^{a-1}z^3$ and $P_A=2x^{a-1}z^3$, whereas
$T_B=P_B=x^{a-1}y^3$. Thus $z<y$ makes class $B$ the peak winner, while
$y^3<2z^3$ makes class $A$ the mass winner. Both monomials have total degree
$(a-1)+3=a+2=h+1$. Their common factor $x^{a-1}$ cancels from the
comparison between peak and mass, showing that increasing $d$ lengthens only the shared
geodesic backbone without changing the local splitting condition.

\subsection{Algebraic rigidity in qLDPC memories}
In topological memories, geometry controls how many error paths of low weight can
realize one syndrome. In qLDPC memories, sparse parity checks play the analogous
role: they can exclude logical factorizations near the minimum or force competing
representatives into fibers with matched compositions. Their multiplicity then
cannot change the leading free energy ordering. The number of consecutive
layers protected in this way is the algebraic contribution to the entropic
rigidity depth $r$ [Fig.~\ref{fig:algebraic-mechanisms}].

The two qLDPC results below have different scopes. The Gross
$[[144,12,12]]$ memory $\mathcal G$ is a prominent finite BB code relevant to
current low-overhead hardware development
\cite{BravyiEtAl2024Gross,YoderEtAl2025Tour}. For this specific instance,
$m(\mathcal G)=7$ is an exact theorem. For any pair of competing
representatives of weight $d/2$, their product is a minimum logical Pauli.
Every minimum logical is \emph{pure in Pauli type}, meaning that it uses only
one nonidentity Pauli species. Its two complementary halves therefore have
identical Pauli composition. Every half-distance atlas fiber is consequently
composition matched and entropy neutral. An exhaustive
satisfiability/unsatisfiability (SAT/UNSAT) certificate over 24
translation/Pauli sectors proves this purity. At weight 7, an exhaustive
certificate for a fixed syndrome finds one representative that wins the peak
comparison and a doublet that wins the mass comparison, producing the first
imbalanced atlas star. The purity theorem, sector decomposition, and audit of
the weight 7 fiber are given in
Supplemental Material Sec.~S8 \cite{Supplement}.

For the same Pauli noise composition, the onset fiber at weight 7 contains
one representative in class $A$ and two in class $B$. Consequently,
$T_A=P_A=x^5y^2$, whereas $T_B=x^5z^2$ and $P_B=2x^5z^2$. Thus $y>z$
makes class $A$ the peak winner, while $y^2<2z^2$ makes class $B$ the mass
winner. Both monomials have total degree $7=d/2+1$; their common factor $x^5$
cancels from the decision comparison, isolating a local composition defect on
top of a globally rigid centralizer backbone
[Fig.~\ref{fig:algebraic-mechanisms}(b)].

The separable family $\mathcal H_q=[[18q^2,8,2q]]$, $q\geq1$, which admits
both hypergraph product and bivariate bicycle descriptions
\cite{TillichZemor2014}, instead gives a theorem for every $q$:
$m(\mathcal H_q)=q+2$. Here $h=q$. A Pauli systolic gap of four excludes
logical factorizations of low weight that could create an unmatched fiber.
Together with a packing of disjoint parity detectors, it forces the remaining
competitors at weights $q$ and $q+1$ into pairs with identical Pauli
composition. These two layers are therefore entropy neutral. At weight
$q+2$, a wraparound cap
saturates the detector packing and produces the first imbalanced fiber. It
contains two representatives $E_{A,1}$ and $E_{A,2}$ in logical class $A$,
and singleton representatives $E_B$ and $E_C$ in two distinct competing
classes. Hence $r=2$. The construction for every $q$, the systolic gap lemma,
the detector packing proof, and the
certified base case are given in Supplemental Material Sec.~S7
\cite{Supplement}; Fig.~\ref{fig:algebraic-mechanisms}(a) illustrates the packing and
the onset representatives. The two results from algebraic rigidity are
\begin{equation}
 m(\mathcal G)=7=\frac{d}{2}+1,
 \qquad
 m(\mathcal H_q)=q+2=\frac{d}{2}+2 .
 \label{eq:qldpc-onsets}
\end{equation}
For $q\geq2$, the two class $A$ representatives give
$T_A=x^3z^{q-1}$ and $P_A=2x^3z^{q-1}$, whereas the two singleton classes
give $T_B=P_B=T_C=P_C=xy^2z^{q-1}$. Thus $y>x$ makes classes $B$ and $C$
the peak winners, while $y^2<2x^2$ makes class $A$ the mass winner;
equivalently,
$x<y<\sqrt2x$. Both monomials have total degree
$1+2+(q-1)=q+2=d/2+2$, while their common factor $z^{q-1}$ cancels from the
comparison. The shared factor is the wraparound backbone; the endpoint cap,
rather than the bulk length, determines the strict decision. For $q=1$ the
same onset holds because the mass winning class has multiplicity six rather
than two. This statement holds for every $q$ in the
specified separable family. The exact onsets, their distinct claim scopes,
and their mechanisms are summarized in Table~\ref{tab:summary},
Fig.~\ref{fig:rigidity}(b), and Fig.~\ref{fig:algebraic-mechanisms}.

\begin{figure}[t]
\centering
\noindent
\begin{minipage}[t]{\columnwidth}
  \vspace{0pt}\noindent%
  \begin{minipage}[t]{1.55em}\vspace{0.15em}\makebox[\linewidth][c]{\normalsize (a)}\end{minipage}%
  \begin{minipage}[t]{\dimexpr\linewidth-1.55em\relax}\vspace{0pt}%
    \includegraphics[width=\linewidth,trim=0 10mm 0 0,clip]{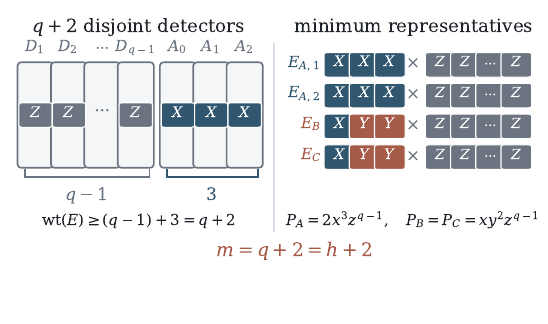}%
  \end{minipage}%
\end{minipage}
\vspace{0.05em}

\noindent
\begin{minipage}[t]{\columnwidth}
  \vspace{0pt}\noindent%
  \begin{minipage}[t]{1.55em}\vspace{0.15em}\makebox[\linewidth][c]{\normalsize (b)}\end{minipage}%
  \begin{minipage}[t]{\dimexpr\linewidth-1.55em\relax}\vspace{0pt}%
    \includegraphics[width=\linewidth,trim=0 10mm 0 0,clip]{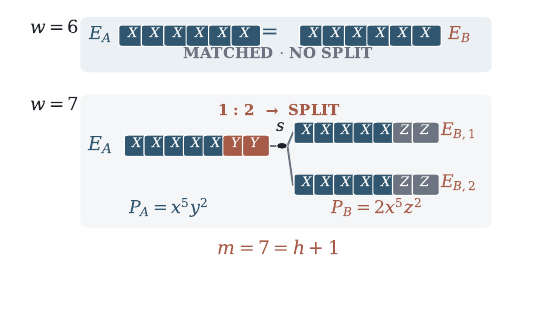}%
  \end{minipage}%
\end{minipage}
\vspace{-0.25em}
\caption{\textbf{Algebraic rigidity in qLDPC memories.}
(a) In the separable family, detector packing forces every cap with unequal
Pauli composition to have weight at least $q+2$. At that layer, two
representatives in class $A$ compete with singleton representatives in two
other classes. Their leading masses are $P_A=2x^3z^{q-1}$ and
$P_B=P_C=xy^2z^{q-1}$, yielding the first strict split at
$m=q+2=h+2$. (b) In the Gross $[[144,12,12]]$ code, Pauli purity keeps the
$w=6$ fibers composition matched. At $w=7$, one representative wins the peak
comparison while a doublet in the competing class wins the mass comparison.
The first strict split therefore occurs at $m=7=h+1$.}
\label{fig:algebraic-mechanisms}
\end{figure}

\begin{table*}[t]
\caption{Classification of the exact onset $m=h+r$ by rigidity depth. The planar and toric rows assume odd $d\geq3$; the HGP/BB row holds for $q\geq1$.}
\label{tab:summary}
\centering
\setlength{\tabcolsep}{7.5pt}
\renewcommand{\arraystretch}{1.08}
\begin{tabular}{lllll}
\toprule
Family & Distance & $r$ & Exact onset $m=h+r$ & Neutral layer certificate \\
\midrule
$[[4,1,2]]^{\circ\ell}$ & $2^\ell$ & $0$ & $h$ & four-phase parity carry \\
$[[5,1,3]]^{\circ\ell}$ & $3^\ell$ & $0$ & $h$ & odd overlap and invariant return \\
Rotated planar surface & odd $d$ & $0$ & $(d+1)/2$ & active boundary corridor \\
Square toric & odd $d$ & $1$ & $(d+3)/2$ & noncontractible loop rigidity \\
Gross $[[144,12,12]]$ & $12$ & $1$ & $7$ & Pauli purity rigidity \\
HGP/BB $\mathcal H_q$ & $2q$ & $2$ & $q+2$ & systolic gap and detector packing \\
\bottomrule
\end{tabular}
\end{table*}

Table~\ref{tab:summary} organizes the six onset exponents into one hierarchy.
Geometry or RG permits an imbalance immediately at $r=0$; loop or Pauli
purity constraints protect one layer; product complex systolic isolation
protects two. The offset is therefore a physical count of consecutive entropy
neutral excitation layers, not merely a fitted exponent.

\begin{figure}[t]
\centering
\includegraphics[width=\columnwidth,trim=0 3mm 0 0,clip]{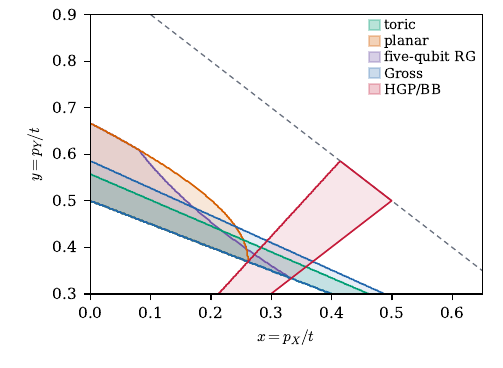}
\par\vspace{-0.60em}
\caption{\textbf{Certified noise regions.} Sufficient open regions in the
interior Pauli simplex for the five-qubit RG, planar, toric, separable HGP/BB,
and Gross-code results. The gray dashed line is the simplex boundary $z=0$.
The colored boundaries are sufficient certificates, not complete phase
boundaries.}
\label{fig:onsets}
\end{figure}

\section{Operational benchmark and finite-noise calibration}
\label{sec:operational}
The composition spectrum converts the onset into an exact operational
benchmark for the contribution of logical-sector degeneracy to the decision
gap between best-tie MP and MLD decoding. Using the notation of
Sec.~\ref{sec:framework}, write $w_s=w_{\min}(s)$,
${\cal A}_s=\arg\max_L T_{s,L}$ for the MP-optimal logical set, and
$P_{s,L}\equiv P_{s,L}^{(w_s)}$ for the leading coset-mass coefficient.
Best-tie MP chooses, independently for each syndrome, the member of
${\cal A}_s$ with the largest $P_{s,L}$. Its remaining leading disadvantage
is therefore
$\kappa_s=\max_L P_{s,L}-\max_{L\in{\cal A}_s}P_{s,L}\geq0$.
For a decoder $D$, denote its total logical failure probability by
$P_{\mathrm{fail}}^D$. For the six realizations studied here, the first strict
split occurs at $m=h+r$, and therefore
\begin{align}
 P_{\mathrm{fail}}^{\mathrm{MP}}(\text{best tie})
 -
 P_{\mathrm{fail}}^{\mathrm{MLD}}
 &=
 K_mt^m+O(t^{m+1}),
 \nonumber\\
 K_m
 &=
 \sum_{s:w_s=m}\kappa_s>0 .
 \label{eq:gap}
\end{align}
Because every syndrome contribution is nonnegative, the leading coefficient
cannot cancel, and $K_m>0$ remains irreducible even under optimal MP tie
resolution. This benchmark applies
directly to decoders whose decisions remain within the exact MP-optimal
logical set. A larger rigidity depth certifies more consecutive low-excitation
layers on which intrinsic coset entropy cannot penalize a ground-state
decision, but it does not by itself determine decoder runtime or circuit
performance.

Minimum-weight perfect matching (MWPM) belongs to this class only when its
matching graph and log-likelihood edge weights reproduce the relevant MP
decision; generic MWPM may incur an additional approximation gap
\cite{DennisEtAl2002,Fowler2013}. Belief propagation with ordered statistics
decoding (BP-OSD) is not generically MP either, because its approximate
marginals and ordered statistics postprocessing need not restrict its output
to the exact MP optimal set \cite{RoffeEtAl2020}. For an arbitrary decoder
$D$, define
$P_{\rm fail}^{\rm MP,*}\equiv
P_{\rm fail}^{\rm MP}(\text{best tie})$. The identity
$P_{\rm fail}^{D}-P_{\rm fail}^{\rm MLD}
=(P_{\rm fail}^{\rm MP,*}-P_{\rm fail}^{\rm MLD})
+(P_{\rm fail}^{D}-P_{\rm fail}^{\rm MP,*})$
separates the certified entropic contribution from an algorithm-dependent
remainder. The latter need not be nonnegative when $D$ can leave the MP set,
so $K_m$ is not asserted to equal the complete penalty of every MWPM or
BP-OSD implementation.

The certified onset witnesses provide deterministic, sampling-free tests of
degeneracy awareness. At each witness syndrome with $m=h+r$, a decoder that
captures the leading logical-sector multiplicity must reproduce the certified
MLD winner rather than the MP winner. Failure on this test identifies a
missing degeneracy contribution rather than merely a failure to sample a rare
logical event. Passing the test verifies the specified witness, not global
equivalence to MLD. The same criterion can test MWPM and BP-OSD
implementations, tensor network and learned decoders, and degeneracy-aware
postprocessing
\cite{BravyiSucharaVargo2014,TorlaiMelko2017,MeinerzEtAl2022}.

In terms of the Pauli-ray composition $(x,y,z)$, define the normalized witness
margin $\eta_s=\kappa_s/\max_L P_{s,L}$. It measures the fraction of the
mass-winning leading coefficient that remains inaccessible when the decision
is restricted to the MP-optimal set. This quantity is dimensionless and scale
independent for the five-qubit, planar, HGP/BB ($q\geq2$), and Gross
mechanisms and for the toric witness at $d\geq5$:
\begin{align}
 \eta_5
 &=
 1-\frac{x^2+y^2}{2yz},
 &
 \eta_{\mathrm{toric}}
 &=
 1-\frac{y^3}{2z^3},
 \nonumber\\
 \eta_{\mathrm{planar}}
 &>
 1-\frac{y}{2z(1-x^2/y^2)},
 &
 \eta_{\mathrm{HGP}}
 &=1-\frac{y^2}{2x^2},\nonumber\\
 \eta_{\mathrm{Gross}}
 &=1-\frac{y^2}{2z^2}.
 \label{eq:margins}
\end{align}
For the $q=1$ HGP base case, the corresponding margin is
$1-y^2/(6x^2)$ because the mass-winning class has six representatives.
Hence the advantage remains uniformly bounded away from zero on compact
subsets of every certified open region.

For the finite $t$ calibration below, we choose the shared benchmark
composition $(x,y,z)=(0.10,0.46,0.44)$, which lies in the common interior of
the certified five-qubit, planar, toric, and Gross regions in
Fig.~\ref{fig:onsets}. At this point the five-qubit, planar, and toric margins
exceed $0.4$, and $\eta_{\mathrm{Gross}}>0.45$. The HGP region does not
contain this benchmark point; its margin is instead uniformly positive on
compact subsets of its own certified cone. Supplemental Material Sec.~S9
derives the operational coefficient, while Sec.~S10 gives the rigorous tail
bounds, independent finite $t$ coset sums, and distance-resolved ten percent
validity windows \cite{Supplement}.

The theorem fixes an asymptotic exponent, while hardware and decoder
benchmarks operate at finite $t$. For the selected witness at distance $d$,
let $\delta_{s,d}(t)$ denote the exact fixed syndrome gap between best tie MP
and MLD, let $m_d$ be its onset weight, let $\kappa_d$ be its leading
coefficient, and let $n_d$ be the number of physical qubits. We evaluate
$\delta_{s,d}(t)$ by complete code-capacity coset sums, independently of the
leading formulas, and define the leading order approximation
$\mathrm{LO}_d(t)=\kappa_d t^{m_d}$.

Fig.~\ref{fig:finite-window}(a) compares the exact gap with LO and reproduces
the predicted planar $t^2$ and toric $t^3$ onsets. Panel (b) plots
$R_d(t)=\delta_{s,d}(t)/\mathrm{LO}_d(t)$ as distance increases. Its crossing
of $0.9$ defines a conservative ten percent validity window, not a decoding
threshold. Higher weight completions proliferate with distance, so the LO
validity window contracts clearly as $d$ increases.

Every onset representative also contains $n_d-m_d$ identity locations.
Retaining their exact probability gives the refined approximation
$\mathrm{LO{+}I}_d(t)=\kappa_d t^{m_d}(1-t)^{n_d-m_d}$. This expression
keeps the exact identity probability of the untouched locations and removes
the corresponding universal normalization correction from LO. It substantially
enlarges the validity windows shown in panel (c) without including the genuine
weight $m_d+1$ error layer or changing the onset $m_d$. After removal of the
dominant volume dependence, the LO+I validity window is controlled by the
discrete local spectrum at higher weight and therefore need not vary
monotonically with $d$. The exact gap has the form
$\kappa_d t^{m_d}(1-t)^{n_d-m_d}[1+c_dt+O(t^2)]$, where $c_d$ depends on the
multiplicity and Pauli composition of the next spectral layer. The smaller
residual corrections for the toric $d=7$ and $d=9$ witnesses produce the two
nearby elevated LO+I windows in panel (c), with the $d=9$ value slightly lower
because of subsequent spectral terms. This nonmonotonic variation arises from the discrete higher order spectrum at finite code size and does not imply a higher decoding threshold.

\begin{figure*}[t]
\centering
\begin{minipage}[t]{0.32\textwidth}
\vspace{0pt}\raggedright\small (a)\par\centering
\includegraphics[width=\linewidth,trim=0 1.5mm 0 1mm,clip]{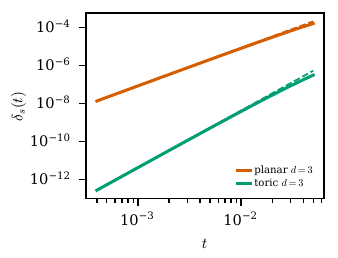}
\end{minipage}\hfill
\begin{minipage}[t]{0.32\textwidth}
\vspace{0pt}\raggedright\small (b)\par\centering
\includegraphics[width=\linewidth,trim=0 1.5mm 0 1mm,clip]{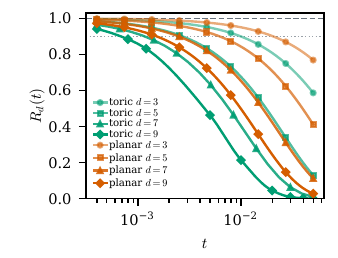}
\end{minipage}\hfill
\begin{minipage}[t]{0.32\textwidth}
\vspace{0pt}\raggedright\small (c)\par\centering
\includegraphics[width=\linewidth,trim=0 1.5mm 0 1mm,clip]{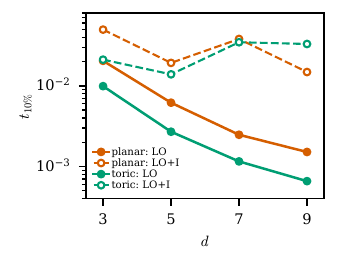}
\end{minipage}
\vspace{-0.35em}
\caption{\textbf{Finite noise calibration of the low noise onset.}
(a) Independent exact sums over complete fixed syndrome stabilizer orbits for matched $d=3$ planar and toric
witnesses (solid) and LO (dashed).
(b) $R_d(t)=\delta_{s,d}(t)/\mathrm{LO}_d(t)$ for planar
$d=3,5,7,9$ and toric $d=3,5,7,9$. The dashed line marks exact agreement,
$R_d=1$, and the dotted line marks the ten percent criterion, $R_d=0.9$.
(c) Corresponding ten percent validity scales for LO
(filled markers) and LO+I (open markers), where
$\mathrm{LO}=\kappa_d t^{m_d}$ and
$\mathrm{LO{+}I}=\kappa_d t^{m_d}(1-t)^{n_d-m_d}$.
The LO window contracts with distance, whereas the LO+I window is controlled
by the discrete local spectrum at higher weight and need not be monotonic.
Lines guide the eye and are not asymptotic fits.}
\label{fig:finite-window}
\end{figure*}

\section{Discussion and outlook}
\label{sec:discussion}

The rigidity-depth hierarchy identifies boundaries and parity-check algebra
as physical entropy valves for finite disordered memories. Open corridors
permit minimum-energy configurations to proliferate immediately, whereas
periodic homology, Pauli purity, and systolic packing freeze one or two
additional excitation layers. Geometry and algebra therefore determine the
first power of physical noise at which configurational entropy can change an
optimal logical decision.

The valuation $m$ describes this microscopic onset for a fixed architecture
in the low-noise limit. It complements rather than replaces the thermodynamic
threshold, which concerns the survival of an ordered phase at infinite
distance. Code-capacity Pauli noise isolates the intrinsic rigidity of the
code before syndrome extraction circuits introduce implementation dependent
fault correlations. This separation is the reason for using the model: if
circuit scheduling is introduced at the outset, intrinsic multiplicity cannot
be cleanly distinguished from multiplicity created by hook propagation,
ancilla faults, or a finite number of measurement rounds. The resulting $m$
and $r$ are therefore hardware independent reference values against which
circuit induced changes and decoder approximation can be measured. They are
not asserted to be upper or lower bounds on every circuit implementation.

For a specified syndrome extraction circuit, the atlas can be reformulated on
its $(2+1)$ dimensional detector fault hypergraph. Vertices are fault sets that
share detector outcomes but carry different logical action labels, while
elementary faults become spatial or temporal edges and hyperedges. Circuit
scheduling and hook propagation thereby reconstruct the connectivity on which
the factorization problem is posed. If fault weights are additive, the same
distance argument conditionally gives
$m_{\rm circ}\geq\lceil d_{\rm det}/2\rceil$, where the detector distance
$d_{\rm det}$ is the minimum total fault weight of an undetected nontrivial
logical action in that circuit and need not equal the code distance. If the
first $r_{\rm circ}$ candidate layers beginning at
$\lceil d_{\rm det}/2\rceil$ are separately certified neutral and a strict
witness is constructed at the next layer, then one may write
$m_{\rm circ}=\lceil d_{\rm det}/2\rceil+r_{\rm circ}$. This is a conditional
structure for a given detector hypergraph, not a circuit-level theorem proved
here. Temporal boundaries, measurement rounds, propagated hooks, and
correlated hyperedge probabilities can change both $d_{\rm det}$ and
$r_{\rm circ}$ by creating new low-weight multiplicities. Their dependence on
circuit scheduling, hook orientation, and finite measurement depth is the
subject of a separate study.

The separable HGP/BB theorem shows that LDPC sparsity and periodicity alone do
not fix the onset; its distance-normalized encoding efficiency is
$kd^2/n=16/9$, where $[[n,k,d]]$ denotes physical qubits, logical qubits, and
distance. The Gross result supplies an exact code-capacity baseline for a
prominent hardware-motivated finite BB instance with $kd^2/n=12$, relevant to
current low-overhead and modular-architecture studies
\cite{BravyiEtAl2024Gross,YoderEtAl2025Tour}.

For general constant-rate qLDPC
families \cite{PanteleevKalachev2022,LeverrierZemor2022}, we conjecture that
$r$ is governed jointly by expansion of the Tanner graph, the bipartite
qubit--check incidence graph, the stabilizer quotient, Pauli-composition
spectra, and the growth of low-excess centralizer representatives, namely
Pauli operators that commute with every stabilizer, considered modulo
stabilizers. Expansion and systolic gaps may suppress such factorizations,
whereas locally clustered low-weight centralizer classes may proliferate
them. An all-size onset theorem for a constant-rate family remains open;
balanced factorization provides a concrete, non-sampling diagnostic for
testing this design principle. Together, $r$ and $K_m$ quantify when
configurational entropy first changes the logical decision and how strongly
it affects finite-noise performance, providing an exact baseline for
determining when degeneracy-aware decoding is required.

\begin{acknowledgments}
This work was supported by the National Natural Science Foundation of China (Grants No. 92476206 and No. 12322413) and the Beijing Natural Science Foundation (Grants No. JQ25014). Y.~Z. thanks his wife, Ling Wang, for her patience and support throughout this work.

\end{acknowledgments}

\section*{Data Availability}
The analytic proofs supporting the conclusions are contained in the main text
and Supplemental Material. The finite certificates, verification and plotting
software, tests, and data underlying the figures are openly available in the
versioned Zenodo archive \cite{ZhaoYanZenodo2026}.

\bibliographystyle{apsrev4-2}
\bibliography{references}

\clearpage
\onecolumngrid

\setcounter{section}{0}
\setcounter{equation}{0}
\setcounter{figure}{0}
\setcounter{table}{0}
\renewcommand{\thesection}{S\arabic{section}}
\renewcommand{\theequation}{S\arabic{equation}}
\renewcommand{\theHequation}{SM.\arabic{equation}}
\renewcommand{\thefigure}{S\arabic{figure}}
\renewcommand{\theHfigure}{SM.\arabic{figure}}
\renewcommand{\thetable}{S\arabic{table}}
\renewcommand{\theHtable}{SM.\arabic{table}}

\begin{center}
{\large\bf Supplemental Material for\\
``Entropic Rigidity in Quantum Memories: How Geometry and Algebra Control
the Onset of Degeneracy Corrections''}\\[6pt]
Yixin Zhao and Fei Yan
\end{center}

This Supplemental Material contains the full proofs and the exact finite size boundary underlying the main article, clarifying which claims are derived analytically and which are confirmed by explicit computation. Secs.~S1 and S2 establish the
code-independent obstruction, spectrum criterion, and finite-certificate
transfer principle. Sec.~S3 presents the transparent five-qubit
substitution mechanism before Sec.~S4 gives the more technical four-qubit
invariant-cone proof. Secs.~S5 and S6 give analytic all-distance planar
and toric proofs. Sec.~S7 proves the all-size separable
hypergraph product/bivariate bicycle (HGP/BB) theorem,
followed in Sec.~S8 by the exact finite-instance Gross result. Sec.~S9
derives their common operational failure-gap consequence, Sec.~S10 combines
rigorous finite-noise control with an independent benchmark and distance
scaling, and Sec.~S11 specifies the independently rerunnable certificate
audit. Search
heuristics, floating-point scans, and finite size regression are not used
as implications in any all-distance or all-depth proof.

\section{Definitions and the universal obstruction}

Along the interior Pauli ray $(p_X,p_Y,p_Z)=t(x,y,z)$, with
$x,y,z>0$, $x+y+z=1$, and $t\to0$, write
\begin{equation}
 \Pr_t(E)=t^{\operatorname{wt}(E)}\mu(E)\,[1+O(t)],
 \qquad \mu(E)>0,
 \label{eq:sm-error-expansion}
\end{equation}
where $\operatorname{wt}(E)$ is the number of physical qubits on which
$E$ acts nontrivially; $n_X(E)$, $n_Y(E)$, and $n_Z(E)$ count its three
nonidentity Pauli species; and
$\mu(E)=x^{n_X(E)}y^{n_Y(E)}z^{n_Z(E)}$ is its direction monomial. Let
$\operatorname{syn}(E)\in\{0,1\}^{n-k}$ denote the binary stabilizer
syndrome, with $[\operatorname{syn}(E)]_i=0$ ($1$) when $E$ commutes
(anticommutes) with stabilizer generator $g_i$. For each
syndrome $s$, choose one representative coset $C_s$; then $C_sL$ denotes the
compatible errors with logical label $L$. We use logical sector, coset, and
class interchangeably for this set. Define
$T_{s,L}(t)=\max_{E\in C_sL}\Pr_t(E)$ for maximum-probability (MP)
decoding and $M_{s,L}(t)=\sum_{E\in C_sL}\Pr_t(E)$ for degenerate
maximum-likelihood (MLD) decoding, with complete winner sets
$D_{\MP}(s,t)=\argmax_LT_{s,L}(t)$ and
$D_{\MLD}(s,t)=\argmax_LM_{s,L}(t)$.

\begin{smdefinition}[Strict disagreement valuation]
Define the syndrome valuation
\begin{equation}
 w_s\equiv w_{\min}(s):=
 \min\{\operatorname{wt}(E):\operatorname{syn}(E)=s\}.
\end{equation}
The strict disagreement valuation is
\begin{equation}
 m=\min\{w_s:
 D_{\MP}(s,t)\cap D_{\MLD}(s,t)=\varnothing
 \text{ for all sufficiently small }t>0\}.
\end{equation}
Equivalently, $m$ is the minimum physical weight carried by a syndrome
for
which, throughout a sufficiently small punctured interval in $t$,
\begin{equation}
D_{\MP}(s,t)\cap D_{\MLD}(s,t)=\varnothing .
 \label{eq:sm-strict}
\end{equation}
For a concatenated family we write $m_\ell$, and for a topological family
of distance $d$ we write $m_d$.
\end{smdefinition}

\begin{smtheorem}[Universal half-distance obstruction]
\label{thm:sm-lower}
Every distance-$d$ stabilizer code and every interior Pauli direction
satisfy
\begin{equation}
 m\geq\left\lceil\frac d2\right\rceil .
\end{equation}
\end{smtheorem}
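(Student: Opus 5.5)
The plan is to argue by contradiction, in the same spirit as the sketch in Sec.~\ref{sec:framework}, but using the precise asymptotics of Eq.~(\ref{eq:sm-error-expansion}). Suppose $s$ is a syndrome with $w_s<\lceil d/2\rceil$, so $2w_s<d$. I will show that $D_{\MP}(s,t)\cap D_{\MLD}(s,t)\neq\varnothing$ for all sufficiently small $t$. This makes every syndrome attaining the minimum in the definition of $m$ have $w_s\ge\lceil d/2\rceil$, which proves the theorem.

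\emph{Step 1 (uniqueness of the leading sector).} Let $E,F$ be two errors with syndrome $s$ and weight $w_s$. Then $E^\dagger F$ has trivial syndrome, so it lies in the normalizer. Moreover $\operatorname{wt}(E^\dagger F)\le\operatorname{wt}(E)+\operatorname{wt}(F)=2w_s<d$, since supports add subadditively and phases are irrelevant. By the definition of distance, $E^\dagger F$ must be a stabilizer, up to phase. Hence all minimum-weight errors compatible with $s$ lie in a single logical sector $L_0$. Every other sector $L\neq L_0$ contains only errors of weight at least $w_s+1$.

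\emph{Step 2 (MP winner set).} From Eq.~(\ref{eq:sm-error-expansion}), $T_{s,L_0}(t)\ge c\,t^{w_s}$ for some $c>0$ once $t$ is small. For $L\neq L_0$, the sector is finite (at most $|C_s|$ elements, each of weight at most $n$), and every element has weight at least $w_s+1$. So $T_{s,L}(t)=O(t^{w_s+1})$, with a constant that is uniform because there are finitely many errors. Thus $D_{\MP}(s,t)=\{L_0\}$ for small $t$.

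\emph{Step 3 (MLD winner set).} Similarly, $M_{s,L_0}(t)\ge T_{s,L_0}(t)\ge c\,t^{w_s}$. For $L\neq L_0$, $M_{s,L}(t)$ is a finite sum of terms each $O(t^{w_s+1})$, so $M_{s,L}(t)=O(t^{w_s+1})$. Therefore $D_{\MLD}(s,t)=\{L_0\}$ for small $t$, and the two winner sets intersect.

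The only delicate point is making sure that exact ties or the $[1+O(t)]$ corrections cannot spoil the comparison. This is handled by the finiteness of the Pauli group on $n$ qubits: there are finitely many errors, each $\Pr_t(E)$ is a polynomial in $t$ with leading term $\mu(E)t^{\operatorname{wt}(E)}$, and $\mu(E)>0$ because $x,y,z>0$ (the interior-direction hypothesis is used exactly here). Consequently a single threshold $t_0$ works for all sectors at once.

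The main obstacle is not analytic but conceptual. One must observe that the argument gives more than ``the MLD winner has a weight-$w_s$ representative'': it gives uniqueness of the leading sector. That uniqueness is exactly where $2w_s<d$ enters. At $2w_s\ge d$, two sectors can both carry weight-$w_s$ representatives, and the argument stops there, as it should.
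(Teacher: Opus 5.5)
Your proof is correct and is essentially the paper's argument in contrapositive form: the paper takes an onset syndrome, argues that the MP-winning and MLD-winning sectors both contain weight-$m$ representatives, and concludes $d\le\operatorname{wt}(E_{\rm MP}^\dagger E_{\rm MLD})\le 2m$, whereas you show that for $2w_s<d$ all minimum-weight representatives lie in a single sector, which then wins both rules. The key step is the same in both: two same-syndrome minimum-weight errors in distinct sectors multiply to a nontrivial logical of weight at most $2w_s$.
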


\begin{proof}
Let $s_\star$ be a syndrome realizing the onset, so that
$m=w_{\min}(s_\star)$, and choose
$L_{\rm MP}$ and $L_{\rm MLD}$ from its disjoint winner sets.  The
MP-winning sector contains a weight-$m$ representative and hence has
mass $\Theta(t^m)$.  An MLD-winning sector beginning above weight $m$
would contribute only $O(t^{m+1})$ and could not win as $t\downarrow0$;
because no sector compatible with $s_\star$ begins below $m$, it also
contains a weight-$m$ representative.  Call the two representatives
$E_{\rm MP}$ and $E_{\rm MLD}$.  Their common syndrome and distinct
logical labels make $E_{\rm MP}^{\dagger}E_{\rm MLD}$ a nontrivial logical
Pauli.  Hence
\begin{equation}
 d\leq\operatorname{wt}(E_{\rm MP}^{\dagger}E_{\rm MLD})
 \leq\operatorname{wt}(E_{\rm MP})+\operatorname{wt}(E_{\rm MLD})=2m.
\end{equation}
Thus $m\geq\lceil d/2\rceil$.
\end{proof}

The set-valued convention is necessary.  If MP has a tie and MLD selects
one member of that tie, the two winner sets intersect and the event does
not count as strict disagreement.

\begin{smtheorem}[Lexicographic coset-spectrum criterion]
\label{thm:sm-spectrum}
For a syndrome $s$ and weight $w$, define the weight-resolved coset mass and
peak coefficients
\begin{align}
 P_{s,L}^{(w)}
 &=\sum_{\substack{E\in C_sL\\\operatorname{wt}(E)=w}}
 x^{n_X(E)}y^{n_Y(E)}z^{n_Z(E)},\nonumber\\
 T_{s,L}^{(w)}
 &=\max_{\substack{E\in C_sL\\\operatorname{wt}(E)=w}}
 x^{n_X(E)}y^{n_Y(E)}z^{n_Z(E)}.
 \label{eq:sm-spectrum}
\end{align}
Put $T_{s,L}=T_{s,L}^{(w_s)}$ and, for an $n$-qubit code,
\begin{equation}
 \boldsymbol P_{s,L}=
 \bigl(P_{s,L}^{(w_s)},P_{s,L}^{(w_s+1)},\ldots,P_{s,L}^{(n)}\bigr).
\end{equation}
Vectors are compared lexicographically by their first unequal component;
$\operatorname{LexMax}_L$ denotes the set of logical labels maximizing this
order. For all sufficiently small $t>0$, the MLD winners are exactly the
lexicographic maximizers of $\boldsymbol P_{s,L}$, while the MP winners
maximize $T_{s,L}$.  Hence eventual strict disagreement occurs if and
only if
\begin{equation}
 \argmax_LT_{s,L}\cap
 \operatorname{LexMax}_L\boldsymbol P_{s,L}=\varnothing.
 \label{eq:sm-spectrum-criterion}
\end{equation}
The leading-layer condition
\begin{equation}
 \argmax_LT_{s,L}\cap
 \argmax_LP_{s,L}^{(w_s)}=\varnothing
 \label{eq:sm-leading-spectrum-criterion}
\end{equation}
is sufficient, but not necessary, for eventual strict disagreement.
If every minimum-weight logical class contains exactly one
minimum-weight representative, eventual strict disagreement is
impossible for that syndrome.
\end{smtheorem}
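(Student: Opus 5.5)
The plan is to make the $t$-dependence of both weights explicit and reduce each decision rule to a comparison of finitely many polynomials in one small parameter. Under code-capacity noise along the ray, a Pauli error $E$ has exact probability $\Pr_t(E)=t^{\operatorname{wt}(E)}(1-t)^{n-\operatorname{wt}(E)}\mu(E)$, where $\mu(E)=x^{n_X(E)}y^{n_Y(E)}z^{n_Z(E)}$. Setting $u=t/(1-t)$, which increases monotonically from $0$ as $t\downarrow0$, gives
\begin{equation*}
M_{s,L}(t)=(1-t)^n\sum_{w=w_s}^{n}P_{s,L}^{(w)}u^w,\qquad
T_{s,L}(t)=(1-t)^n\max_{w}\,T_{s,L}^{(w)}u^w .
\end{equation*}
Here we set $P_{s,L}^{(w)}=T_{s,L}^{(w)}=0$ when sector $L$ has no weight-$w$ element. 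The common factor $(1-t)^n$ is positive and does not affect either argmax, so both decisions become comparisons in $u$ near $0$.

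For MLD, I would fix two labels $L,L'$ and consider the difference polynomial $\sum_w\bigl(P_{s,L}^{(w)}-P_{s,L'}^{(w)}\bigr)u^w$. If the two vectors $\boldsymbol P_{s,L},\boldsymbol P_{s,L'}$ coincide, the masses are equal for every $t$. Otherwise the lowest nonzero coefficient fixes the sign of the difference on some interval $(0,u_0)$. That coefficient sits exactly at the first unequal component, so the sign is the lexicographic order. There are only finitely many labels, hence finitely many pairs, so intersecting the intervals gives a single punctured interval on which $D_{\MLD}(s,t)=\operatorname{LexMax}_L\boldsymbol P_{s,L}$.

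For MP, I would argue the same way. For small $u$ the term $T_{s,L}^{(w)}u^w$ with the smallest $w$ carrying a nonzero coefficient dominates the maximum. Any sector that contains a weight-$w_s$ element therefore beats every sector that does not. Among the sectors that do, the decision is the comparison of $T_{s,L}=T_{s,L}^{(w_s)}$, and exactly equal values give exact ties for all $t$. Hence $D_{\MP}(s,t)=\argmax_LT_{s,L}$ eventually, and Eq.~(\ref{eq:sm-spectrum-criterion}) follows by combining the two identifications. This pairwise-eventual-sign argument with exact ties is the main step to get right. The set-valued definition is what makes identical spectra, and not just nearly equal ones, count as shared winners.

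The remaining claims are short consequences of this characterization.
\begin{itemize}
\item Sufficiency of the leading-layer condition: a lexicographic maximizer must first maximize the first component, so $\operatorname{LexMax}_L\boldsymbol P_{s,L}\subseteq\argmax_LP_{s,L}^{(w_s)}$. Disjointness from $\argmax_L T_{s,L}$ is therefore inherited.
\item Non-necessity: the MP winners can tie with a competitor at the leading mass coefficient while losing lexicographically at a higher weight. For this I would only need to note that the inclusion can be strict, or point to such an example later in the paper.
\item Single-representative case: if each minimum-weight class has exactly one minimum-weight representative, then $P_{s,L}^{(w_s)}=T_{s,L}$ for every $L$, counting classes with no such representative as $0$. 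Then $\emptyset\neq\operatorname{LexMax}\subseteq\argmax P^{(w_s)}=\argmax T$, so the two winner sets intersect.
\end{itemize}
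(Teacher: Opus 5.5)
Your proposal is correct and follows the paper's proof essentially step for step. The paper also factors out the common identity probability using $\xi=t/(1-t)$. It reads the eventual MLD order from the first nonzero coefficient of each pairwise mass difference, and notes that no higher-weight error can overtake $T_s=\max_LT_{s,L}$ once $\xi<T_s$. It obtains sufficiency from $\operatorname{LexMax}_L\boldsymbol P_{s,L}\subseteq\argmax_LP_{s,L}^{(w_s)}$ and settles the singleton case via $P_{s,L}^{(w_s)}=T_{s,L}$.

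The only loose end is the ``not necessary'' clause, which the paper's proof also leaves unaddressed. Knowing that the inclusion \emph{can} be strict does not by itself exhibit a syndrome and direction where the MP set meets $\argmax_LP_{s,L}^{(w_s)}$ yet misses every lexicographic winner. The paper has no such example to point to, since all its witnesses are leading-layer splits. So to make that clause rigorous you would need to construct an explicit instance, for example on a direction locus where an MP winner's leading mass ties with a lower-peak competitor, with the next layer computed.
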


\begin{proof}
Let $w=w_s$, write $P_{s,L}=P_{s,L}^{(w)}$, and set $\xi=t/(1-t)$.
Factoring the common probability
of $w$ nonidentity sites gives the exact finite expansion
\begin{equation}
 M_{s,L}(t)=t^w(1-t)^{n-w}
 \left[P_{s,L}+
 \sum_{j=1}^{n-w}\xi^jP_{s,L}^{(w+j)}\right].
 \label{eq:sm-exact-expansion}
\end{equation}
For two logical classes, the sign of the
difference between their brackets at sufficiently small positive $\xi$ is
the sign of its first nonzero coefficient.  This is precisely
lexicographic comparison of the finite vectors
$\boldsymbol P_{s,L}$, including any higher-order coefficient that
breaks a leading tie.

For an individual error of weight $w+j$, the same factorization leaves
$\xi^j$ times its composition monomial.  Since the direction coordinates
are at most one, no $j\geq1$ error can overtake the positive maximum
$T_s=\max_LT_{s,L}$ once $\xi<T_s$.  Among weight-$w$ errors, the common
prefactor is identical and the MP winners are exactly the logical
classes attaining $T_s$.  This proves
Eq.~(\ref{eq:sm-spectrum-criterion}).  Disjoint leading-layer winner
sets remain disjoint after lexicographic refinement, proving
Eq.~(\ref{eq:sm-leading-spectrum-criterion}).

Finally suppose every eligible logical class has one minimum-weight
representative.  Then $P_{s,L}^{(w)}=T_{s,L}$.  If $T$ has a unique
maximizer, that class also has a strictly larger first MLD coefficient.
If $T$ is tied, every possible lexicographic MLD winner belongs to the
MP tie set.  In either case the winner sets intersect, independently of
all higher coefficients.
\end{proof}

\begin{smproposition}[Half-distance localization]
\label{prop:sm-localization}
Let $h=\lceil d/2\rceil$.  If two weight-$h$ errors of the same syndrome
belong to different logical classes, their product is a nontrivial
logical Pauli of weight in $[d,2h]$.  Consequently the product has weight
$d$ for even $d$, and weight $d$ or $d+1$ for odd $d$.
\end{smproposition}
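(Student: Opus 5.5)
The argument repeats the distance step from the proof of Theorem~\ref{thm:sm-lower}, now for an arbitrary pair of weight-$h$ errors. Let $E$ and $F$ be two errors of weight $h$ with $\operatorname{syn}(E)=\operatorname{syn}(F)$ that lie in different logical classes, and consider $P=E^\dagger F$.

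\emph{Step 1: $P$ is a nontrivial logical.} The syndrome map is a group homomorphism into $\mathbb F_2^{n-k}$ (up to phases), so $\operatorname{syn}(P)=\operatorname{syn}(E)+\operatorname{syn}(F)=0$. Hence $P$ lies in the centralizer of the stabilizer group. Suppose $P$ were a stabilizer (up to phase). Then $F\in E\,\mathcal S$, so $F$ and $E$ would lie in the same sector $C_sL$, contradicting the hypothesis. Therefore $P$ is a nontrivial logical Pauli, and $\operatorname{wt}(P)\ge d$ by the definition of distance.

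\emph{Step 2: upper bound.} Since $\operatorname{supp}(E^\dagger F)\subseteq\operatorname{supp}(E)\cup\operatorname{supp}(F)$, we get $\operatorname{wt}(P)\le 2h$. Combining with Step 1 gives $\operatorname{wt}(P)\in[d,2h]$.

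\emph{Step 3: parity cases.} If $d$ is even, then $2h=d$, so the interval collapses and $\operatorname{wt}(P)=d$. If $d$ is odd, then $2h=d+1$, so $\operatorname{wt}(P)\in\{d,d+1\}$.

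No step is a real obstacle. The only point needing care is Step 1, where one must use the coset convention $C_sL$ fixed in Sec.~S1 to identify ``different logical class'' with ``$E^\dagger F\notin\mathcal S$ up to phase.''

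One refinement can be noted for use later, though the proposition does not require it. The weight identity $\operatorname{wt}(P)=2h-|\operatorname{supp}E\cap\operatorname{supp}F|+\#\{\text{overlap sites where }E,F\text{ coincide}\}$ forces the supports to be disjoint when $d$ is even. When $d$ is odd, it allows at most a single overlap site, and that site must carry different Pauli types. This matches the single-overlap structure used in the five-qubit and planar witnesses.
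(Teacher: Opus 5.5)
Your argument is correct and follows the paper's proof step for step: zero syndrome plus distinct logical labels makes $E^\dagger F$ a nontrivial logical of weight at least $d$, subadditivity caps it at $2h$, and the parity of $d$ collapses the interval. One slip in your optional refinement: coinciding overlap sites cancel in the product, so the identity should read $\operatorname{wt}(P)=2h-|\operatorname{supp}E\cap\operatorname{supp}F|-\#\{\text{coinciding overlap sites}\}$, equivalent to the paper's later excess identity $|E|+|F|-|U|=c+2o$, and only this minus sign (not the plus you wrote) forces disjoint supports for even $d$ and a single differing overlap site for odd $d$.
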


\begin{proof}
The product has zero syndrome and a nontrivial logical label, hence its
weight is at least $d$.  Subadditivity of Pauli weight gives the upper
bound $2h$.  For even $d$, $2h=d$; for odd $d$, $2h=d+1$.  No other
integer weight lies in the corresponding interval.
\end{proof}

Thus all possible half-distance reversals are determined by the
composition spectra of balanced factorizations of minimum, or for odd
distance next-to-minimum, logical operators.  This is a finite
code-independent reduction; the geometry of a code family determines
whether those spectra are pure, paired, or entropically imbalanced.

\begin{smdefinition}[Balanced-factorization atlas]
For an integer $w$, let ${\cal V}_w({\cal C})$ be the set of physical
Pauli errors of weight $w$, modulo phase.  It is partitioned by syndrome
and then by logical class, and every vertex carries its composition
$(n_X,n_Y,n_Z)$.  The balanced-factorization atlas is the fibered graph
\begin{equation}
 \mathfrak F_w({\cal C})=({\cal V}_w,{\cal E}_w),
\end{equation}
where an edge $\{E,F\}\in{\cal E}_w$ is present exactly when
\begin{equation}
 \operatorname{syn}(E)=\operatorname{syn}(F),\qquad [E]\ne[F].
\end{equation}
Here $[E]$ is the logical coset within the fixed syndrome.  Every vertex
is counted once, irrespective of its graph degree.  Equivalently, every
edge obeys that $EF$ is a nontrivial logical Pauli of weight at most
$2w$ and is a balanced factorization of it.  A fiber is the induced
decorated graph at one fixed syndrome.
\end{smdefinition}

\begin{smtheorem}[Balanced-factorization decision theorem]
\label{thm:sm-atlas}
For any finite stabilizer code and interior Pauli direction, a
leading-layer strict reversal at syndrome weight $w$ occurs exactly when
a fiber of $\mathfrak F_w$ has disjoint composition-peak and
composition-mass winner sets, where for the logical-class vertex set
${\cal V}_{w,s,L}$,
\begin{align}
 \operatorname{peak}{\cal V}_{w,s,L}
 &=\max_{E\in{\cal V}_{w,s,L}}
 x^{n_X(E)}y^{n_Y(E)}z^{n_Z(E)},\nonumber\\
 \operatorname{mass}{\cal V}_{w,s,L}
 &=\sum_{E\in{\cal V}_{w,s,L}}
 x^{n_X(E)}y^{n_Y(E)}z^{n_Z(E)}.
\end{align}
The decorated vertex partition therefore determines every leading
decision cell, while its edges identify the logical operators being
balanced.  Consequently two fiberwise isomorphisms preserving syndrome,
logical partition, and vertex composition have identical leading
strict-reversal cells at order $w$.  If every eligible logical-class
vertex set is a singleton, no eventual strict split is possible at that
syndrome weight, even if higher-order MLD coefficients break an MP tie.
\end{smtheorem}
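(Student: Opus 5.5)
The plan is to reduce the statement to Theorem~\ref{thm:sm-spectrum}, using the fact that the atlas fiber is just a relabeling of the leading-layer coefficients. I read ``leading-layer strict reversal at syndrome weight $w$'' as the sufficient condition Eq.~(\ref{eq:sm-leading-spectrum-criterion}) at a syndrome $s$ with $w_s=w$.

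\emph{Step 1: identify the vertex sets with the minimum-weight representatives.} Fix $s$ with $w_s=w$. The fiber of $\mathfrak F_w$ over $s$ has vertex set $\{E:\operatorname{wt}(E)=w,\ \operatorname{syn}(E)=s\}$ modulo phase. It is partitioned by logical class into ${\cal V}_{w,s,L}=\{E\in C_sL:\operatorname{wt}(E)=w\}$. Since each vertex is counted once and carries its composition monomial, the definitions in Eq.~(\ref{eq:sm-spectrum}) give $\operatorname{mass}{\cal V}_{w,s,L}=P^{(w)}_{s,L}$ and $\operatorname{peak}{\cal V}_{w,s,L}=T_{s,L}$. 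Classes with empty vertex set have coefficient zero. They can win neither comparison, because some class attains weight $w$ and therefore has positive peak and mass.

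\emph{Step 2: equivalence and invariance.} By Step~1, disjointness of the fiber's composition-peak and composition-mass winner sets is literally Eq.~(\ref{eq:sm-leading-spectrum-criterion}). Theorem~\ref{thm:sm-spectrum} then shows that this condition yields eventual strict disagreement. The edges add no information: an edge joins $E,F$ exactly when they share a syndrome and lie in different classes. Then $EF$ has zero syndrome and nontrivial logical label, and subadditivity gives $\operatorname{wt}(EF)\le 2w$; this is the balanced-factorization reading.

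The isomorphism claim follows directly. A fiberwise bijection preserving syndrome, class partition and composition maps each ${\cal V}_{w,s,L}$ to a class vertex set with the same multiset of monomials. Peak and mass are then preserved class by class, so for every $(x,y,z)$ the two codes have the same winner sets. Their leading strict-reversal cells in the simplex therefore coincide.

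\emph{Step 3: the singleton clause.} If every nonempty ${\cal V}_{w,s,L}$ is a singleton, then $P^{(w)}_{s,L}=T_{s,L}$ for every class. This is exactly the last assertion of Theorem~\ref{thm:sm-spectrum}. Every lexicographic maximizer must first maximize $P^{(w)}=T$, so $\operatorname{LexMax}_L\boldsymbol P_{s,L}\subseteq\argmax_LT_{s,L}$. The winner sets therefore meet, whatever the higher coefficients $P^{(w+j)}_{s,L}$ are.

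\emph{Main obstacle.} The only delicate point is making the scope precise. The ``exactly when'' must be read as applying to the leading-layer condition, not to eventual disagreement in general. Theorem~\ref{thm:sm-spectrum} shows the leading-layer condition is not necessary, since ties at weight $w$ can be broken at higher order. I will state this explicitly, and note that the singleton clause is the one place where a conclusion about full eventual disagreement is drawn. That conclusion is justified by the containment argument in Step~3.
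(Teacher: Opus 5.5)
Your proposal is correct and follows the paper's proof essentially step for step. You identify each class vertex set of the fiber with the weight-$w$ members of $C_sL$, so that peak and mass become $T_{s,L}$ and $P^{(w)}_{s,L}$; you then invoke Theorem~\ref{thm:sm-spectrum} for both the leading-layer criterion and the singleton clause, and observe that a composition-preserving fiber isomorphism preserves each vertex monomial exactly once. Your explicit treatment of classes with no weight-$w$ representatives and of the containment $\operatorname{LexMax}_L\boldsymbol P_{s,L}\subseteq\argmax_L T_{s,L}$ only spells out details that the paper delegates to Theorem~\ref{thm:sm-spectrum}.
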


\begin{proof}
Fix a syndrome whose minimum physical weight is $w$.  If only one
logical class has weight-$w$ representatives, both MP and MLD select it
at leading order.  Otherwise the vertex sets
${\cal V}_{w,s,L}$ are exactly the physical errors summed once in
Eq.~(\ref{eq:sm-spectrum}); hence their peak and mass are respectively
$T_{s,L}^{(w)}$ and $P_{s,L}^{(w)}$.  Vertices in two different logical
classes are connected because their product has zero syndrome and a
nontrivial logical label, so the fiber records every balanced logical
factorization without changing vertex multiplicity.
Theorem~\ref{thm:sm-spectrum} makes disjoint peak and mass winner sets
necessary and sufficient for a leading strict reversal.  A
composition-preserving fiber isomorphism preserves every vertex
monomial exactly once, hence both peak and mass and all semialgebraic
leading decision cells.  The final statement is the singleton-class
conclusion of Theorem~\ref{thm:sm-spectrum}.
\end{proof}

\begin{smdefinition}[Entropy-neutral layer and rigidity depth]
\label{def:sm-rigidity-depth}
For a code of distance $d$, put $h=\lceil d/2\rceil$.  The layer
$h+j$ is \emph{entropy neutral} on a noise region if, for every syndrome
whose minimum representative has weight $h+j$, the exact criterion in
Eq.~(\ref{eq:sm-spectrum-criterion}) leaves at least one common MP--MLD
winner throughout that region.  Neutrality must be established by an exact
certificate (for example a composition-preserving atlas pairing,
classwise-singleton purity, an invariant renormalization group (RG) return,
or an algebraic packing reduction) rather than by finite sampling.  If the first $r$ layers
$h,\ldots,h+r-1$ are certified neutral and a strict witness exists at
$h+r$, then $r$ is the \emph{entropic rigidity depth}.  We allow
$r=\infty$ if no strict witness exists.
\end{smdefinition}

\begin{smtheorem}[Obstruction--transfer--rigidity principle]
\label{thm:sm-unified}
Let ${\cal C}_\lambda$ be a code family of distance $d_\lambda$ and put
$h_\lambda=\lceil d_\lambda/2\rceil$.
\begin{enumerate}
\item Every fiber of $\mathfrak F_w({\cal C}_\lambda)$ is edgeless for
$w<h_\lambda$.
\item Suppose an exact family map sends a decorated same-syndrome pair
at weight $w_\lambda$ to such a pair at $w_{\lambda'}$, preserves its
logical separation, and maps a strict peak--mass decision cell into the
next strict cell.  Then a disagreement witness transfers from $\lambda$
to $\lambda'$.
\item If the layers
$h_\lambda,\ldots,h_\lambda+r-1$ have exact entropy-neutral certificates,
while a fiber at $h_\lambda+r$ has a strict peak--mass reversal, then
\begin{equation}
 \boxed{m_\lambda=h_\lambda+r.}
 \label{eq:sm-rigidity-depth}
\end{equation}
\end{enumerate}
The family map may be a composition-preserving geometric isomorphism or
an exact leading-germ RG return.  Classwise-singleton purity is a sufficient,
but not necessary, entropy-neutral certificate.
\end{smtheorem}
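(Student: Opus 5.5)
The three items are proved separately, and each reduces to a result already stated.

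For item 1, I use Proposition~\ref{prop:sm-localization} in its underlying form. An edge $\{E,F\}$ of $\mathfrak F_w$ joins two weight-$w$ errors with equal syndrome and different logical classes. The product $EF$ then has zero syndrome and a nontrivial logical label, so $d_\lambda\le\operatorname{wt}(EF)\le 2w$. If $w<h_\lambda=\lceil d_\lambda/2\rceil$, then $2w\le 2h_\lambda-2<d_\lambda$, which is a contradiction. So every fiber is edgeless. As a consequence, each syndrome with $w_s<h_\lambda$ has a single leading logical class, and MP and MLD share it.

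For item 2, I prove transfer directly from the definitions. The family map sends the decorated pair to a same-syndrome pair at weight $w_{\lambda'}$ with separated logical classes. By hypothesis it also sends the strict peak--mass cell into the next strict cell. By Theorem~\ref{thm:sm-atlas}, the leading decision of a fiber depends only on its decorated vertex partition. The image fiber therefore has disjoint composition-peak and composition-mass winner sets on the image cell. Theorem~\ref{thm:sm-spectrum}, through the sufficient leading-layer criterion Eq.~(\ref{eq:sm-leading-spectrum-criterion}), then gives eventual strict disagreement at $\lambda'$. For composition-preserving isomorphisms this is exactly the isomorphism clause of Theorem~\ref{thm:sm-atlas}. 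For leading-germ RG returns, I state explicitly that the hypothesis already contains the cell-mapping property, so nothing further is needed here.

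Item 3 is the main content. I prove it by two inequalities. For $m_\lambda\ge h_\lambda+r$: a syndrome with $w_s<h_\lambda$ cannot give a strict split, either by item 1 or by Theorem~\ref{thm:sm-lower}. A syndrome with $h_\lambda\le w_s\le h_\lambda+r-1$ cannot give one either, because Definition~\ref{def:sm-rigidity-depth} requires every syndrome in that layer to have a common winner under the exact criterion Eq.~(\ref{eq:sm-spectrum-criterion}). For $m_\lambda\le h_\lambda+r$: the strict fiber at $h_\lambda+r$ has disjoint peak and mass winner sets. The leading-layer criterion of Theorem~\ref{thm:sm-spectrum} then gives $D_{\MP}\cap D_{\MLD}=\varnothing$ for all sufficiently small $t$.

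The step needing most care is making sure this witness counts toward the minimum in the definition of $m$. The fiber must be indexed by a syndrome whose true minimum weight $w_s$ equals $h_\lambda+r$, not merely by a weight-$(h_\lambda+r)$ slice of a syndrome with smaller $w_s$. I handle this by reading ``fiber at $h_\lambda+r$'' as the fiber of a syndrome with $w_{\min}(s)=h_\lambda+r$, as in Theorem~\ref{thm:sm-atlas}, where the fiber is taken at the syndrome's own minimum weight. A second point to make explicit is that the neutrality certificates hold on the same noise region as the witness cell. The two bounds then apply simultaneously, which gives $m_\lambda=h_\lambda+r$.

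Finally, the closing remark follows from the last sentence of Theorem~\ref{thm:sm-spectrum}. Classwise-singleton fibers are always neutral, so they are a valid certificate. They are not necessary, because a composition-preserving pairing of two representatives in each class also yields matched peak and mass orderings.
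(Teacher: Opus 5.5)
Your proposal is correct and follows the paper's proof essentially step for step. Item 1 is the distance contradiction $d_\lambda\le 2w<2h_\lambda$; item 2 pushes the endpoints and their peak/mass inequalities through the exact map; and item 3 meets the lower bound (Theorem~\ref{thm:sm-lower} plus the neutral-layer certificates) with the upper witness at $h_\lambda+r$. You also make explicit two conditions the paper leaves implicit, and both are the right ones: the witness fiber must belong to a syndrome with $w_{\min}(s)=h_\lambda+r$, and the certificates and the witness must hold on a common noise region.
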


\begin{proof}
An edge below $h_\lambda$ would factor a nontrivial logical into two
weight-$w$ errors and give $d_\lambda\leq2w<2h_\lambda$, contradicting
the even case directly and the odd case because
$w\leq h_\lambda-1=(d_\lambda-1)/2$.  Part 2 follows by applying the
exact map to the two endpoints and to their peak and mass inequalities.
For part 3, the exact neutral layer certificates exclude all strict splits
from $h_\lambda$ through $h_\lambda+r-1$, the imbalanced fiber supplies the
upper witness at $h_\lambda+r$, and Theorem~\ref{thm:sm-lower} excludes all
weights below $h_\lambda$.  Hence the two bounds meet.
\end{proof}

At $w=h=\lceil d/2\rceil$,
Proposition~\ref{prop:sm-localization} reduces the entire atlas to
balanced factorizations of weight-$d$ logicals, and also weight $d+1$
logicals for odd $d$.  More generally, increasing $w$ constructs a
finite hierarchy of near-minimal logical factorization atlases; the
first non-neutral member fixes the rigidity depth when it contains a strict
witness.  In general an exact neutrality certificate may require the
lexicographic higher layers of Theorem~\ref{thm:sm-spectrum}; leading
composition matching alone is insufficient.  In the six realizations proved
here the explicit RG, corridor, loop, purity, or packing certificates establish
every required neutral layer and the first strict witness, so the reported
values $r=0,1,2$ and $m=h+r$ are exact.

\section{Finite-certificate transfer principle and exact RG invariants}

\begin{smdefinition}[Leading germ]
For a nonzero decoder message,
\begin{equation}
 g(f)=(\nu,c)
 \quad\Longleftrightarrow\quad
 f(t)=ct^\nu+O(t^{\nu+1}),\qquad c>0.
 \label{eq:sm-germ}
\end{equation}
At unequal valuations a sum retains the smaller $\nu$; at equal
valuation MLD adds leading coefficients while MP retains their maximum.
Products add valuations and multiply coefficients.  Contracting a fixed
seed stabilizer tensor therefore induces exact finite maps on vectors of
leading germs. The exponent-only part is the min-plus rule: ordinary products
add exponents, while ordinary sums select their minimum.
\end{smdefinition}

A decorated finite certificate contains:
\begin{enumerate}
\item the seed stabilizer tensor, logical-label convention, and rational
noise cell;
\item a finite phase set ${\cal P}$ and decorated support-state set
${\cal S}$;
\item for every state, two physical representatives, their common
syndrome, actual logical labels, child decomposition, Pauli support, and
overlap data;
\item an exact transition
$\sigma:{\cal P}\times{\cal S}\rightarrow{\cal P}\times{\cal S}$;
\item rational polyhedral valuation cells $K_{p,s}$ and positive
coefficient cells $C_{p,s}$ that fix every selected branch and the
opposite strict decoder winners;
\item exact return maps $V_{p,s}$ and $A_{p,s}$;
\item a finite entrance reconstructed from explicit physical
representatives; and
\item symbolic identities or nonnegative rational Farkas multipliers,
coefficients expressing each pulled-back inequality as a nonnegative linear
combination of the defining inequalities, for all return inclusions.
\end{enumerate}

\begin{smtheorem}[Invariant-cell criterion]
\label{thm:sm-certificate}
Suppose every stored transition agrees with direct contraction of the
declared seed tensor and
\begin{equation}
 V_{p,s}(K_{p,s})\subseteq K_{\sigma(p,s)},\qquad
 A_{p,s}(C_{p,s})\subseteq C_{\sigma(p,s)}
 \label{eq:sm-inclusions}
\end{equation}
for every phase-state pair.  Suppose all intermediate selected branches
are strict and the decorated transition preserves a same-syndrome pair
whose product is a minimum logical operator of weight $d_\ell$ and whose
endpoint weights are $\lceil d_\ell/2\rceil$, allowing complementary
intermediate weights
$(\lfloor d_\ell/2\rfloor,\lceil d_\ell/2\rceil)$.  If an explicit
witness enters these cells at level $L$, then
\begin{equation}
 m_\ell=\left\lceil\frac{d_\ell}{2}\right\rceil,\qquad \ell\geq L.
\end{equation}
Independent checks below $L$ extend the equality to every level.
\end{smtheorem}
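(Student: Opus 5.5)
The plan is an induction on the concatenation level $\ell$, with the certificate data serving as the induction invariant and Theorem~\ref{thm:sm-lower} supplying the matching lower bound. At each level $\ell\geq L$ I will carry a phase--state pair $(p_\ell,s_\ell)\in{\cal P}\times{\cal S}$ and a vector of leading germs (in the sense of Eq.~(\ref{eq:sm-germ})) of the logical-sector messages. The valuation part will lie in $K_{p_\ell,s_\ell}$ and the coefficient part in $C_{p_\ell,s_\ell}$. The decorated state $s_\ell$ will come with two explicit physical representatives $E_\ell,F_\ell$ having a common syndrome, distinct logical labels, endpoint weights $\lceil d_\ell/2\rceil$, and product a minimum logical of weight $d_\ell$. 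The base case $\ell=L$ is exactly the hypothesized explicit entrance. The lower bound $m_\ell\geq\lceil d_\ell/2\rceil$ holds at every level by Theorem~\ref{thm:sm-lower}, so only the upper bound (a strict witness at weight $\lceil d_\ell/2\rceil$) has to be propagated.

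For the inductive step from $\ell$ to $\ell+1$ I proceed in four stages.

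\emph{Stage 1.} Since every stored transition agrees with direct contraction of the seed tensor, the level-$(\ell+1)$ leading germs are obtained by applying $V_{p,s}$ to the valuations and $A_{p,s}$ to the coefficients. This uses the tropical sum/product rules for valuations and coefficient addition at ties (for MLD) or maximum (for MP).

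\emph{Stage 2.} Each selected branch is strict, meaning the minimum valuation is attained by the recorded branch pattern throughout the cell. Hence $V_{p,s}$ and $A_{p,s}$ are genuinely single-valued on $K_{p,s}\times C_{p,s}$, not merely on a generic subset, and the inclusions (\ref{eq:sm-inclusions}) place the new germ vector in $K_{\sigma(p,s)}\times C_{\sigma(p,s)}$.

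\emph{Stage 3.} On the physical side, substituting the child decomposition into the seed code produces $E_{\ell+1},F_{\ell+1}$. The decorated transition is assumed to preserve the common syndrome, the logical separation, the minimum-logical product of weight $d_{\ell+1}$, and endpoint weights $\lceil d_{\ell+1}/2\rceil$; complementary intermediate weights $(\lfloor d/2\rfloor,\lceil d/2\rceil)$ are permitted only inside the construction. This is exactly part~2 of Theorem~\ref{thm:sm-unified}.

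\emph{Stage 4.} Membership in the next cell, whose defining inequalities encode opposite strict MP and MLD winners, yields via Theorem~\ref{thm:sm-spectrum} (through its leading-layer criterion (\ref{eq:sm-leading-spectrum-criterion})) disjoint winner sets at the witness syndrome. Since $w_{\min}$ of that syndrome equals $\lceil d_{\ell+1}/2\rceil$ by the lower bound, this gives $m_{\ell+1}\leq\lceil d_{\ell+1}/2\rceil$.

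The main obstacle is Stage~2: making sure the leading-germ recursion is exact rather than generic. I must rule out hidden coefficient cancellation, and rule out an unselected branch tying the selected valuation somewhere on the boundary of a cell. I would handle this by using closed rational polyhedral cells with strict inequalities for the selected branches, verified via the Farkas multipliers. Because all coefficients are positive sums of monomials in $x,y,z$, MLD addition cannot cancel, and MP maxima are attained at the recorded branch. A secondary point is that the witness syndrome's minimum weight must not drop below the representatives' weight at the next level; this is again covered by Theorem~\ref{thm:sm-lower} combined with the MP-winner representative having weight $\lceil d_{\ell+1}/2\rceil$. Finally, since ${\cal P}\times{\cal S}$ is finite and $\sigma$ maps it into itself, the induction closes for all $\ell\geq L$, and independent checks at $\ell<L$ complete the statement.
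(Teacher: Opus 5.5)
Your proposal is correct and follows the same induction as the paper. The explicit entrance is the base case, the stored seed identity carries the complementary half-distance pair (syndrome, labels, minimum-logical product, endpoint weights) to the next level, the first inclusion preserves the selected min-plus branches and the second the strict opposite-winner coefficient inequalities, and Theorem~\ref{thm:sm-lower} gives equality. Your Stage~2 remarks (positivity rules out cancellation, strict branch selection) only spell out what the paper's appeal to the inclusions assumes. In Stage~4 you do not need the lower bound to pin down $w_{\min}$: once the cell certifies disjoint winners, $m_{\ell+1}\leq w_s\leq\lceil d_{\ell+1}/2\rceil$ follows directly from the pair itself.
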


\begin{proof}
At the entrance level, direct contraction verifies a same-syndrome
half-distance pair with a minimum-logical product and disjoint strict
MLD/MP winners.  Assume the statement in phase-state $(p,s)$ at level
$\ell$.  The stored seed identity constructs the next complementary
pair and preserves its syndrome, actual logical labels, product, endpoint
weights, and overlap.  The first inclusion in
Eq.~(\ref{eq:sm-inclusions}) preserves every min-plus branch used in the
construction.  The second preserves every leading-coefficient
inequality that makes the decoder decisions strict and disjoint.  The
next state is therefore another half-distance witness.  Induction gives
the upper bound at all later levels, and
Theorem~\ref{thm:sm-lower} gives equality.
\end{proof}

For a cone with row vectors $g_i$ and return matrix $M$, every inclusion
used here has an exact certificate
\begin{equation}
 g_iM=\sum_j\lambda_{ij}g_j,\qquad
 \lambda_{ij}\in\mathbb Q_{\geq0}.
 \label{eq:sm-farkas}
\end{equation}
This separates discovery from proof.  Search may propose a state or
cone, but verification recomputes seed contractions and checks
Eq.~(\ref{eq:sm-farkas}) over integers and rationals.  A stored success
flag is never a proof input.

\section{Cyclic \texorpdfstring{$[[5,1,3]]$}{[[5,1,3]]} family}

Use
\begin{equation}
 S_5=\langle XZZXI,IXZZX,XIXZZ,ZXIXZ\rangle,\qquad
 \overline X=XXXXX,\quad\overline Z=ZZZZZ.
 \label{eq:sm-five-code}
\end{equation}
Its $\ell$-fold self-concatenation has $d_\ell=3^\ell$.
Complete level-two enumeration compresses the $4^5$ child patterns into
56 state--actual-label--weight types.  Among the 19,998 ordered type
tuples of total weight five, exactly 18 yield strict disagreement.  Every
backtracked pair satisfies
\begin{equation}
 (w_L,w_R,|\operatorname{supp}L\cap\operatorname{supp}R|,
 \operatorname{wt}(LR))=(5,5,1,9).
 \label{eq:sm-five-entrance}
\end{equation}
Expansion gives 360 exact level-three pairs and all 12 oriented pairs of
distinct MLD and MP winners.

For logical target $a\in\{1,2,3\}$, let $U_a$ denote two half-distance
representatives with one overlap.  Resolving the overlap to either side
gives $R_a^L,R_a^R$.  Together with $I$, the decorated alphabet is
\begin{equation}
 I,\ U_1,U_2,U_3,\
 R_1^L,R_1^R,R_2^L,R_2^R,R_3^L,R_3^R.
\end{equation}
Let $b(1)=3$, $b(2)=1$, and $b(3)=2$.  The five-child substitution is
\begin{equation}
 U_a'=(I,I,R_b^L,U_a,R_b^R),
 \label{eq:sm-five-substitution}
\end{equation}
with $(R_a^L)'$ and $(R_a^R)'$ obtained by replacing the fourth child by
the corresponding resolved state.

\begin{table}[h]
\centering
\begin{tabular}{c ccc ccc}
\toprule
&\multicolumn{3}{c}{odd source phase}
&\multicolumn{3}{c}{even source phase}\\
$a$ & $U_a$ & $R_a^L$ & $R_a^R$
& $U_a$ & $R_a^L$ & $R_a^R$\\
\midrule
1 & $(1,0)$ & $(3,2)$ & $(2,3)$
  & $(2,3)$ & $(0,1)$ & $(1,0)$\\
2 & $(2,0)$ & $(1,3)$ & $(3,1)$
  & $(3,1)$ & $(0,2)$ & $(2,0)$\\
3 & $(3,0)$ & $(2,1)$ & $(1,2)$
  & $(1,2)$ & $(0,3)$ & $(3,0)$\\
\bottomrule
\end{tabular}
\caption{Actual logical-label pairs, with $0=I,1=X,2=Y,3=Z$.  The
identity state is $(0,0)$ in both phases.}
\label{tab:sm-five-labels}
\end{table}

\begin{smlemma}[Odd-overlap induction]
\label{lem:sm-five-support}
If $U_a$ has endpoint weight
$h_\ell=(3^\ell+1)/2$, one overlap, and product weight $3^\ell$, then
Eq.~(\ref{eq:sm-five-substitution}) has the same parent syndrome and
logical target, endpoint weight $h_{\ell+1}$, one overlap, and product
weight $3^{\ell+1}$.
\end{smlemma}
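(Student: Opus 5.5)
The plan is to read the substitution in Eq.~(\ref{eq:sm-five-substitution}) blockwise. A level-$(\ell+1)$ physical error is a five-tuple of level-$\ell$ child errors, one per outer qubit of the seed code in Eq.~(\ref{eq:sm-five-code}). A same-syndrome pair $(E',F')$ at level $\ell+1$ is therefore a five-tuple of child pairs $(E_i,F_i)$. Weight, support, overlap and inner syndrome all decompose as sums or unions over the five blocks, because the blocks act on disjoint physical qubits. Only the logical labels and the outer syndrome couple the blocks, through the outer $[[5,1,3]]$ stabilizer.

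\emph{Weights.} I would first record the child data.
\begin{itemize}
\item $I$ contributes $(0,0)$.
\item $U_a$ has endpoint weights $(h_\ell,h_\ell)$, one overlap, and product weight $3^\ell$ by hypothesis.
\item $R_b^L$ and $R_b^R$ have endpoint weights $(h_\ell,h_\ell-1)$ and $(h_\ell-1,h_\ell)$. Their endpoint supports are disjoint, since $h_\ell+(h_\ell-1)=3^\ell$ equals the weight of their product, a minimum logical of type $b$.
\end{itemize}
Summing over the children $(I,I,R_b^L,U_a,R_b^R)$ gives left endpoint weight $h_\ell+h_\ell+(h_\ell-1)=3h_\ell-1$, and the same total for the right endpoint. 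Since $3h_\ell-1=(3^{\ell+1}+1)/2=h_{\ell+1}$, both endpoints have weight $h_{\ell+1}$. The only overlapping site is the single one inside the $U_a$ block, so the overlap is one. The product has weight $3\cdot3^\ell=3^{\ell+1}$, because the three nontrivial child products sit on disjoint blocks and each has weight $3^\ell$.

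\emph{Syndrome and logical target.} Inner syndromes agree blockwise, since every child pair is a same-syndrome pair. For the outer layer I would read off the outer Pauli patterns of the two endpoints from Table~\ref{tab:sm-five-labels} in the appropriate source phase. Their product is the pattern $(I,I,P_b,P_a,P_b)$. With $b(1)=3$, $b(2)=1$, $b(3)=2$, I would check directly, for the three cases $a=1,2,3$, that $IIP_bP_aP_b$ commutes with the four generators of $S_5$ and equals $\overline X$, $\overline Y$ or $\overline Z$ times a stabilizer. For example, $IIZXZ$ should be equivalent to $\overline X$. Because the product pattern lies in the centralizer, the two endpoints carry the same outer syndrome, so the parent pair is again same-syndrome. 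Its logical product has type $a$, which is the same target. Its weight $3^{\ell+1}=d_{\ell+1}$ then confirms that the product is a minimum logical.

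\emph{Main obstacle.} I expect the hard part to be the bookkeeping of actual endpoint labels, as opposed to their products. The parent syndrome must be computed from the outer patterns of each endpoint, including the phase-dependent labels in Table~\ref{tab:sm-five-labels}. One must also check that this assignment reproduces the next-phase entries, so that the induction closes. I would verify this on the three values of $a$ in each of the two phases, using only the seed generators.
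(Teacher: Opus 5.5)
Your proposal is correct and follows the paper's proof. It uses the same blockwise count $h_\ell+h_\ell+(h_\ell-1)=3h_\ell-1=h_{\ell+1}$, the single overlap surviving only in the unresolved $U_a$ child, and the three disjoint weight-$3^\ell$ child products. Your explicit check that $IIP_bP_aP_b$ equals $\overline{P_a}$ times an element of $S_5$ is a product-only version of the paper's seed-contraction identity, and it already gives the common syndrome and logical target. The phase-dependent actual-label bookkeeping you flag as the main obstacle is therefore needed for the later decision inequalities, not for this lemma.
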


\begin{proof}
Each child pair has a common syndrome, so applying the same outer seed
transition gives the same parent syndrome.  Direct contraction of
Eq.~(\ref{eq:sm-five-code}) maps the five child actual labels to the
corresponding entry of Table~\ref{tab:sm-five-labels}.  This is a finite
$2\times10$ seed identity; cyclic relabeling gives the displayed
predecessor rule.  The resolved states have weights
$(h_\ell,h_\ell-1)$ and $(h_\ell-1,h_\ell)$.  Either endpoint therefore
has weight
\begin{equation}
 h_\ell+(h_\ell-1)+h_\ell
 =3h_\ell-1=\frac{3^{\ell+1}+1}{2}.
\end{equation}
Only the unresolved fourth child retains an overlap.  The product has
three disjoint minimum-logical child products and hence weight
$3^{\ell+1}$.
\end{proof}

The valuation atlas has two parity phases.  With
\begin{equation}
 q_\ell=\frac{3^{\ell-1}-3}{2},\qquad
 q_{\ell+1}=3q_\ell+3,\qquad q_3=3,
\end{equation}
every order lies in
\begin{equation}
 \{0,1,q_\ell,q_\ell+1,q_\ell+2,3^\ell\}.
\end{equation}
For each phase, decoder, nonidentity decorated state, and logical
component, exact seed contraction selects one assignment for all integer
$q\geq3$.  The unquotiented audit therefore checks
\begin{equation}
 2\ {\rm phases}\times2\ {\rm decoders}\times
 9\ {\rm states}\times4\ {\rm components}=144
\end{equation}
unique branches and 36 nonidentity message returns.  Cyclic covariance
reduces the human-readable mechanism to the three modes
$U_a,R_a^L,R_a^R$, while the certificate retains every target.

At odd levels, the main state $U_2$ has competing order-zero classes
$(0,2)$ and exact ratios
\begin{equation}
 \frac{c_{\MLD,2}}{c_{\MLD,0}}
 =\frac{2yz}{x^2+y^2},\qquad
 \frac{c_{\MP,0}}{c_{\MP,2}}=\frac yz.
 \label{eq:sm-five-ratios}
\end{equation}
At even levels the same ratios occur on classes $(1,3)$ with the
corresponding orientation.  The symbolic return treats both
equal-valuation pairs as arbitrary positive scales and preserves
Eq.~(\ref{eq:sm-five-ratios}) exactly.

\begin{smproposition}[Analytic five-qubit phase]
\label{prop:sm-five-box}
All entrance, branch, and decision inequalities above are strict on the
open semialgebraic region
\begin{equation}
 y>z>x>0,\qquad xy^2<z^3,\qquad
 x^2+y^2<2yz.
 \label{eq:sm-five-region}
\end{equation}
\end{smproposition}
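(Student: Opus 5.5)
The plan is to verify directly that each finite inequality class generated by the certificate reduces to one of the three displayed constraints in Eq.~(\ref{eq:sm-five-region}), or to a consequence of them. There are three such classes: the entrance inequalities at level three, the 144 branch selections, and the decoder decision inequalities.

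\textbf{Decision inequalities.} These come from the ratios in Eq.~(\ref{eq:sm-five-ratios}).
\begin{itemize}
\item The MLD reversal requires $c_{\MLD,2}/c_{\MLD,0}>1$, which is exactly $x^2+y^2<2yz$.
\item The MP preference for the opposite class requires $c_{\MP,0}/c_{\MP,2}=y/z>1$, which is exactly $y>z$.
\end{itemize}
The symbolic return preserves these ratios exactly and treats the equal-valuation scales as arbitrary positive quantities. So the two strict decision inequalities hold at every level as soon as they hold at one level, and the even-phase classes $(1,3)$ inherit them by the stated orientation symmetry.

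\textbf{Branch selections.} Each of the 144 selected branches is a comparison between valuations drawn from $\{0,1,q,q+1,q+2,3^\ell\}$. Only equal-valuation ties need a coefficient comparison.
\begin{itemize}
\item Unequal-valuation branches are decided by integer comparisons. These are strict for all $q\ge3$, and in particular $q+2<3^\ell$.
\item Equal-valuation branches reduce, after cancelling common composition monomials, to comparisons among the leaf monomials $x,y,z$ and their low products.
\end{itemize}
I will group these tie branches by the monomial inequality they reduce to, and check that each one is implied by the chain $y>z>x$ or by the cubic condition $xy^2<z^3$.

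I expect the cubic condition to arise where a weight-three tie compares a branch with one $X$ and two $Y$'s against three $Z$'s. This happens, for example, in the resolution of the overlap child $R_b^{L/R}$, where the identity-overlap choice competes with a $Z$-completion. Strictness there is exactly $xy^2<z^3$.

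\textbf{Entrance inequalities.} The level-three entrance consists of the 360 exact pairs with data $(5,5,1,9)$ as in Eq.~(\ref{eq:sm-five-entrance}). For each of the 12 oriented winner pairs, the peak and mass comparisons are polynomial inequalities in $(x,y,z)$. I will show each is a positive combination of the region's defining inequalities, in the spirit of the Farkas certificates of Eq.~(\ref{eq:sm-farkas}), extended to monomial-weighted combinations.

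\textbf{Assembling the result.} With all three classes checked, the hypotheses of Theorem~\ref{thm:sm-certificate} hold throughout the open region. Openness is automatic because every condition is a strict polynomial inequality. Non-emptiness follows from an explicit point, for instance $(x,y,z)=(0.1,0.46,0.44)$:
\begin{itemize}
\item $xy^2\approx0.021<z^3\approx0.085$;
\item $x^2+y^2\approx0.222<2yz\approx0.405$.
\end{itemize}

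\textbf{Main obstacle.} The hard step is the branch audit. I must show that no equal-valuation tie produces a condition outside the three listed ones, for every integer $q\ge3$ uniformly. My first attempt is to note that $q$ enters only through valuations, never through coefficients. That would make the coefficient conditions $q$-independent, so the finite list computed at $q=3$ and $q=4$ would suffice. I would then read the resulting monomial inequalities off the stored seed contractions and check that each is implied by the region.
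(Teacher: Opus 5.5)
\textbf{There is a genuine gap.} Your overall target is the right one: show that every recorded inequality is implied by the three defining conditions. But the proposal never carries out that step, and the criterion you commit to for it is too weak.

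The missing idea is the paper's factorization reduction. After symbolic deduplication there are 19 rational inequalities. Factoring their numerators and denominators over $\mathbb Z$ shows that every denominator is a product of coordinates and positive sums of squares. The only numerator factors that are not manifestly positive are $y-z$, $z-x$, $y-x$, $z^2-xy$, $z^3-xy^2$ and $2yz-x^2-y^2$, and strictness then follows factor by factor. You instead leave this as an audit that you yourself call the main obstacle, with expectations (``I expect the cubic condition to arise\ldots'') in place of a determined list. As written, nothing in your argument excludes an inequality whose sign the region does not control.

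Your stated test for tie comparisons also fails on a factor that actually occurs. The test is ``implied by the chain $y>z>x$ \emph{or} by $xy^2<z^3$,'' but $z^2>xy$ follows from neither condition alone:
\begin{itemize}
\item the chain alone allows $(x,y,z)=(0.1,0.79,0.11)$, where $z^2<xy$;
\item the cubic condition alone allows $(0.625,0.125,0.25)$, where again $z^2<xy$.
\end{itemize}
It needs both, via $xy<z^3/y<z^2$.

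For the entrance class, a nonnegative monomial-weighted \emph{linear} combination of the defining inequalities cannot certify a product such as $(y-z)(z-x)$. That product vanishes on the face $y=z>x$, where all other generators are strictly positive, which would force the weight on $y-z$ to equal $z-x$, and $z-x$ is not a nonnegative monomial combination. Factoring into the six primitive factors handles such products automatically.

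The obstacle you single out, uniformity in $q$, is not where this proposition's difficulty lies. The certificate already fixes one valuation assignment for every integer $q\ge3$. With equal-valuation pairs treated as arbitrary positive scales, the 19 inequalities are rational functions of $(x,y,z)$ alone, so sampling $q=3,4$ plays no role.

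Finally, $(5,5,1,9)$ in Eq.~(\ref{eq:sm-five-entrance}) is the level-two backtracking data. The 360 pairs are level-three expansions with endpoint weight $14$, so your description of the entrance class conflates the two levels.
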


\begin{proof}
After symbolic duplicate removal, there are 19 rational inequalities.
Factoring their numerators and denominators over the integers leaves
positive monomials and sums, together with only
\begin{equation}
 y-z,\quad z-x,\quad y-x,\quad z^2-xy,\quad
 z^3-xy^2,\quad 2yz-x^2-y^2.
\end{equation}
The first three have the stated signs in
Eq.~(\ref{eq:sm-five-region}).  The fourth follows because
$xy^2<z^3$ and $y>z$ imply $xy<z^3/y<z^2$; the final two are primitive
inequalities of the region.  Every denominator is a product of positive
coordinates and positive sums of squares.  Hence all 19 rational
functions are strictly positive.  The released checker independently
factors each recorded atlas inequality and verifies this implication
symbolically.
\end{proof}

The phase is not a narrow existence box.  For example, it contains the
closed rational rectangle
\begin{equation}
 \frac{17}{250}\leq x\leq\frac{33}{250},\qquad
 \frac{117}{250}\leq y\leq\frac{133}{250},\qquad z=1-x-y.
\end{equation}
The four worst-case strict margins are respectively
$1/250$, $51/250$, $8967/15625000$, and $439/31250$.

The level-two witness gives $m_2=5$; a level-three weight-14 pair enters
the atlas.  Lemma~\ref{lem:sm-five-support}, the 144 branch identities,
the 36 coefficient returns, and Proposition~\ref{prop:sm-five-box}
propagate strict opposite winners to all later levels.
Theorem~\ref{thm:sm-lower} then proves
\begin{equation}
 m_\ell=\frac{3^\ell+1}{2},\qquad \ell\geq2.
\end{equation}

\section{Noise-tailored \texorpdfstring{$[[4,1,2]]$}{[[4,1,2]]} family}

Take
\begin{equation}
 S_4=\langle IYIX,ZIZI,YZXY\rangle,\qquad
 \overline X=YIXI,\quad\overline Z=YIYX.
\end{equation}
This is locally Clifford and qubit-permutation equivalent to
$\langle XXXX,ZZII,IIZZ\rangle$.  Since
$(ZZII)(IIZZ)=ZZZZ$, this conventional stabilizer is the standard
$[[4,2,2]]$ group $\langle XXXX,ZZZZ\rangle$ with the additional
commuting logical constraint $ZZII$; it is therefore a known
$[[4,1,2]]$ subcode, not a new four-qubit code.  One explicit
equivalence first permutes the old qubits to $(0,2,1,3)$ and then applies
the phase-free axis maps
\begin{equation}
\begin{array}{c|ccc}
j&X&Y&Z\\ \hline
0&X&Y&Z\\
1&Y&X&Z\\
2&X&Z&Y\\
3&Z&Y&X
\end{array}.
\end{equation}

Let $(0,a_\ell,b_\ell,c_\ell)$ be the minimum weights of the four
zero-syndrome logical classes.  The seed vector is $(3,2,2)$.  Complete
min-plus contraction gives
\begin{align}
 a'&=\min\{3c,b+2c,2b+c,3b,a+b+c,3a\},\nonumber\\
 b'&=\min\{b+c,a+c,a+b\},\nonumber\\
 c'&=\min\{b+3c,3b+c,a+c,a+b\}.
 \label{eq:sm-four-minplus}
\end{align}
On the cone $b\leq c\leq a\leq2b$, direct comparison of every entry in
Eq.~(\ref{eq:sm-four-minplus}) selects
\begin{equation}
 a_{\ell+1}=3b_\ell,\qquad
 b_{\ell+1}=b_\ell+c_\ell,\qquad
 c_{\ell+1}=a_\ell+b_\ell.
\end{equation}
Indeed, the first output is bounded below by $3b$, the second by $b+c$,
and the third by $a+b$, with equality at the displayed branches.  The
image obeys
\begin{equation}
 b+c\leq a+b\leq3b\leq2(b+c),
\end{equation}
so the cone is invariant.  Since $b_\ell$ is the minimum of the three
nontrivial zero-syndrome logical weights, it is the ordinary quantum
distance $d_\ell$.  Eliminating $a_\ell,c_\ell$ gives
\begin{equation}
 d_{\ell+3}=d_{\ell+2}+d_{\ell+1}+3d_\ell,\qquad
 (d_1,d_2,d_3)=(2,4,9).
 \label{eq:sm-four-distance}
\end{equation}

The parity vector is periodic:
\begin{equation}
 111\longrightarrow100\longrightarrow001
 \longrightarrow010\longrightarrow111.
 \label{eq:sm-four-phases}
\end{equation}
For every phase and logical target the certificate stores a balanced pair
$B_j$, complementary resolutions $C_j^L,C_j^R$ when the pair overlaps,
and a four-child support recipe.  Writing
$d_j=2k_j+\epsilon_j$, direct symbolic counting in all 12 phase-target
recipes gives
\begin{equation}
 w_L=w_R=\left\lceil\frac{d_j'}2\right\rceil,\quad
 \operatorname{wt}(E_LE_R)=d_j',\quad
 |\operatorname{supp}E_L\cap\operatorname{supp}E_R|
 =d_j'\bmod2.
 \label{eq:sm-four-support}
\end{equation}

The coefficient state has left/right and MLD/MP copies.  In logarithmic
ratio coordinates
\begin{equation}
 x=(b_1,b_2,b_3,l_1,r_1,l_2,r_2,l_3,r_3),
\end{equation}
the absorbing mixed cone has eight facets
\begin{equation}
\begin{gathered}
 b_1,b_2,b_3>0,\qquad
 l_1+r_1,l_2+r_2,l_3+r_3>0,\\
 l_2+r_3>0,\qquad -l_1+l_2>0.
 \label{eq:sm-four-facets}
\end{gathered}
\end{equation}
The first three orient balanced targets, the next three orient products
of complementary resolutions, and the last two fix the required mixed
cross orientation.  The positive entrance cone reaches this chamber
after one four-level return, and all eight pullbacks have nonnegative
rational Farkas decompositions.

For valuation dynamics use
$v=(a,b,c,q_1,q_2,q_3,h)^T$ with $h=1$.  The four-level return is
\begin{equation}
v'=
\begin{pmatrix}
3&18&6&0&0&0&0\\
2&11&6&0&0&0&0\\
4&12&5&0&0&0&0\\
1&3&2&0&1&0&-2\\
2&9&3&0&0&1&-1\\
1&8&3&1&0&0&1\\
0&0&0&0&0&0&1
\end{pmatrix}v.
\label{eq:sm-four-return}
\end{equation}
All selected-minimum comparisons reduce, after duplicate collection, to
the following row families.
\begin{table}[h]
\centering
\begin{tabular}{lrr}
\toprule
Row family & Unique rows & Expanded comparisons\\
\midrule
balanced selected-minimum chamber & 184 & 332\\
complementary selected-minimum chamber & 90 & 168\\
base distance nonnegativity & 4 & 4\\
\midrule
total & 278 & 504\\
\bottomrule
\end{tabular}
\caption{Compressed valuation-certificate families.}
\label{tab:sm-four-rows}
\end{table}
Their phase distribution is $60,56,69,89$ in phases
$111,100,001,010$, plus four phase-independent base rows.  For each
unique row $g_i$, the released certificate stores its phase, represented
linear comparison, multiplicity, return row $g_iM$, and exact
nonnegative multipliers in Eq.~(\ref{eq:sm-farkas}).  The depth-24 state
lies strictly inside the cone with minimum integer slack one.

\begin{smproposition}[Finite entrance and absorption]
Exact decoding through the finite entrance reaches the positive
coefficient cone.  The exact bridge through level 24 enters the mixed
coefficient chamber and the 278-row valuation cone.  All intermediate
branches are unique and Eq.~(\ref{eq:sm-four-support}) holds in every
phase.
\end{smproposition}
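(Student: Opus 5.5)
The plan is a finite, exact computation at the bottom of the concatenation hierarchy, followed by an induction carried by the returns already certified above. First I would fix the seed $S_4$, its logical labels $\overline X,\overline Z$, and the rational noise cell. Then I would contract the seed tensor level by level with exact rational (symbolic in $x,y,z$) leading-germ arithmetic, as in the leading-germ definition: valuations combine by the min-plus rule, and at equal valuation MLD adds coefficients while MP takes the maximum. For each level $\ell\le 24$ and each of the four logical sectors I would record the selected branch of Eq.~(\ref{eq:sm-four-minplus}) and check that it is unique. At each level this means checking that the minimizing entry is strictly smaller than the others. For the valuations this is immediate once the cone $b\le c\le a\le 2b$ is entered, by the strict comparisons already listed. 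Before that, the distances $(2,4,9,\dots)$ are small integers, and I would compare them by hand or by exact enumeration.

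Second, I would track the coefficient state, with left/right and MLD/MP copies, through the same levels in the logarithmic ratio coordinates $(b_i,l_i,r_i)$. The first claim is that after the finite entrance all coordinates become positive, which places the state in the positive cone. I would then show that one further four-level return pushes the state across the two mixed facets $l_2+r_3>0$ and $-l_1+l_2>0$ of Eq.~(\ref{eq:sm-four-facets}). To make this rigorous, I would compute these facet values exactly at each level up to 24 and check strict positivity; since the coefficient maps are monomial, this is a sign computation on rational functions over the cell. In parallel, I would evaluate the valuation vector $v$ at depth 24 against all 278 rows of Table~\ref{tab:sm-four-rows} and verify that every row has integer slack at least one. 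Because the return matrix in Eq.~(\ref{eq:sm-four-return}) is integral and the rows carry nonnegative Farkas multipliers, integer slack at least one suffices as a strict interior certificate.

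Third, I would verify the support identities in Eq.~(\ref{eq:sm-four-support}) phase by phase. I would follow the parity cycle in Eq.~(\ref{eq:sm-four-phases}) and, for each of the three targets, apply the four-child recipe to $B_j$ and $C_j^{L,R}$. I would then count endpoint weights using $d'=b+c$, $a+b$, or $3b$ as appropriate and $d_j=2k_j+\epsilon_j$. The overlap is $d'\bmod 2$ because exactly one child retains an unresolved overlap when the parent parity is odd, mirroring Lemma~\ref{lem:sm-five-support}. This gives 12 symbolic cases, each a linear identity in $(k_j,\epsilon_j)$.

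I expect the main obstacle to be the entrance phase: levels before the invariant cone is reached. In that regime neither the valuation branches nor the coefficient orientations are covered by the Farkas returns, so uniqueness of every intermediate branch must be checked directly. To control the size of this check, I would use phase periodicity: after the valuation cone is entered (by about level 4), the branch choices depend only on the phase and on the affine data $(q_1,q_2,q_3)$. Hence only finitely many comparison types occur, and each is checked once symbolically over the cell rather than level by level.
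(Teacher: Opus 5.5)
Your overall shape, an exact evaluation up to the bridge level with an integer-slack check at depth 24, is the same kind of argument as the paper's. The gap is that the part carrying ``all intermediate branches are unique'' is aimed at the wrong recursion, and the check fails as written.

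You propose to check uniqueness of the selected entries of Eq.~(\ref{eq:sm-four-minplus}). You say this is immediate on the cone $b\le c\le a\le 2b$ ``by the strict comparisons already listed.'' Those comparisons are lower bounds attained with equality, not strict separations. The seed $(a,b,c)=(3,2,2)$ lies on the face $b=c$, so in the very first contraction the first component is attained by four entries, $3c=b+2c=2b+c=3b=6$, and the third by two, $a+c=a+b=5$.

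More importantly, the zero-syndrome weights are not where the proposition's branches live. The relevant selections are those of the witness messages for the balanced pairs $B_j$ and the resolutions $C_j^{L,R}$. Their valuations enter through the extra coordinates $q_1,q_2,q_3$ of $v$, and their selected minima are the 274 phase-dependent chamber rows of Table~\ref{tab:sm-four-rows}. You never say how these witness valuations and their MLD/MP coefficients are propagated below level 24, or how their branch choices are checked there. Without that, you establish neither branch uniqueness nor the monomial form of the coefficient maps that your second step relies on, since tied branches add coefficients.

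The shortcut in your last paragraph is circular. The three-facet cone $b\le c\le a\le 2b$ holds from the seed onward, but it is not the 278-row valuation cone. That cone is certified only at depth 24, which is why a bridge is needed at all. Before that depth nothing guarantees that branch choices depend only on the phase and on $(q_1,q_2,q_3)$. They must therefore be checked level by level on the witness messages, not reduced to ``finitely many comparison types checked once.''

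For the coefficient chamber, the paper's route is not to evaluate the mixed facets of Eq.~(\ref{eq:sm-four-facets}) level by level over the noise cell. That would be a semialgebraic positivity problem over a region, and the facets need not be positive before the chamber is entered. Instead, the integer-linear four-level return maps the whole positive entrance cone into the mixed chamber, certified by nonnegative rational pullback decompositions. Positivity at the entrance therefore suffices, uniformly over the cell.

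Finally, integer slack of at least one at the depth-24 state already gives strict interiority by itself. The Farkas multipliers certify invariance of the cone under the return, not interiority.
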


\begin{proof}
Coefficient maps are integer-linear after logarithmic ratios and
valuation maps are integer affine in the lifted coordinate $h=1$.
Entrance and bridge states are evaluated with exact integers.  The eight
coefficient and 278 valuation pullbacks are checked with rational Farkas
multipliers.  Recomputing these identities requires no floating-point
comparison.
\end{proof}

On
\begin{equation}
 \frac{99}{1000}\leq x\leq\frac{101}{1000},\qquad
 \frac{499}{1000}\leq y\leq\frac{501}{1000},\qquad z=1-x-y,
\end{equation}
fixed half-distance witnesses at levels $1$--$11$ are strict on a
$2\times2$ cover.  At level 12, 28 ordinary entrance facets hold on one
outward-rounded interval box; the four remaining cross-orientation
facets are reduced, on a $4\times2$ cover, by rigorous logarithmic
derivative intervals to exact rational extremal-corner comparisons.
The invariant cells then propagate the uniform entrance.  Combining this
upper construction with
Theorem~\ref{thm:sm-lower} proves
\begin{equation}
 m_\ell=\left\lceil\frac{d_\ell}{2}\right\rceil
\end{equation}
for every depth.

\section{Rotated planar surface code corridor}

This section proves the all-odd-distance statement analytically; finite
trellis evaluation (dynamic-programming enumeration) is only a regression
check. To distinguish the
distance parameter from the rigidity depth, write $d=2\rho+1$ and
$h=\rho+1$.  Index data qubits by
$(i,j)\in\{0,\ldots,d-1\}^2$.  To make the boundary convention
unambiguous, the $X$ checks are
\begin{align}
 B^{X,\mathrm t}_j&=X_{0,2j}X_{0,2j+1},&
 B^{X,\mathrm b}_j&=X_{d-1,2j+1}X_{d-1,2j+2},\nonumber\\
 A^X_{i,j}&=\prod_{\epsilon,\eta=0}^1X_{i+\epsilon,j+\eta},
 &&i+j=1\pmod2,
\label{eq:sm-surface-xchecks}
\end{align}
and the $Z$ checks are
\begin{align}
 B^{Z,\mathrm l}_i&=Z_{2i+1,0}Z_{2i+2,0},&
 B^{Z,\mathrm r}_i&=Z_{2i,d-1}Z_{2i+1,d-1},\nonumber\\
 A^Z_{i,j}&=\prod_{\epsilon,\eta=0}^1Z_{i+\epsilon,j+\eta},
 &&i+j=0\pmod2.
\label{eq:sm-surface-zchecks}
\end{align}
Here bulk indices lie in $\{0,\ldots,d-2\}$ and boundary indices in
$\{0,\ldots,\rho-1\}$.  This gives the standard
$[[d^2,1,d]]$ rotated planar code with
\begin{equation}
 \overline X=\prod_{i=0}^{d-1}X_{i,0},\qquad
 \overline Z=\prod_{j=0}^{d-1}Z_{0,j}.
\end{equation}
Define
\begin{equation}
 E_d=Z_{(1,0)}X_{(0,0)}
 \prod_{i=3}^{\rho+1}X_{(i,0)}.
 \label{eq:sm-surface-error}
\end{equation}
Both $E_d$ and $E_d\overline X$ have weight $h$ and the same syndrome.
Set
\begin{equation}
 N=\begin{cases}\rho,&\rho\ {\rm odd},\\\rho-1,&\rho\ {\rm even},\end{cases}
\qquad a=\rho-N+1,\qquad J=(N-1)/2.
\end{equation}

\begin{figure}[h]
\centering
\begin{minipage}[t]{0.52\textwidth}
\vspace{0pt}
\raggedright\small (a)\par
\centering
\includegraphics[width=0.78\linewidth]{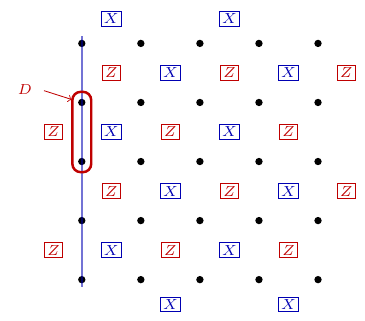}
\end{minipage}\hfill
\begin{minipage}[t]{0.44\textwidth}
\vspace{0pt}
\raggedright\small (b)\par
\centering
\includegraphics[width=0.95\linewidth]{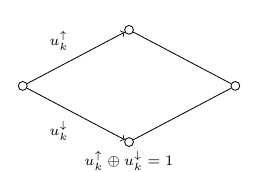}
\end{minipage}
\caption{Explicit planar convention and corridor cell.  (a) A
$5\times5$ patch of
Eqs.~(\ref{eq:sm-surface-xchecks})--(\ref{eq:sm-surface-zchecks});
the pattern repeats for arbitrary odd $d$.  The blue column is
$\overline X$, and the red pair is the $v$ boundary cut $D$.  (b)
Equality in the diagonal cuts contracts each two-row checkerboard cell
to the displayed binary layer.}
\label{fig:sm-surface-convention}
\end{figure}

\begin{smlemma}[Corridor reduction]
\label{lem:sm-surface}
Exactly two logical classes $A$ and $B$ occur at physical weight $h$,
with composition enumerators
\begin{align}
 P_{A,d}&=2\sum_{j=0}^{J}\binom Nj
 x^{a+2j}y^{N-2j},\nonumber\\
 P_{B,d}&=4\sum_{j=0}^{J}
 \left(\binom Nj-\binom N{j-1}\right)
 x^{a+2j}y^{N-1-2j}z.
 \label{eq:sm-surface-spectra}
\end{align}
\end{smlemma}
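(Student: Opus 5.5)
The plan is to reduce the syndrome fiber of $E_d$ at weight $h$ to a lattice-path enumeration along the left boundary corridor, and then read off the composition enumerators from a reflection argument.

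\emph{Step 1: localize the product.} Any two weight-$h$ representatives of the syndrome of $E_d$ in different logical classes multiply to a nontrivial logical of weight $d$ or $d+1$ (Proposition~\ref{prop:sm-localization}). Since $E_d$ and $E_d\overline X$ both have weight $h$, the classes $[E_d]$ and $[E_d\overline X]$ are realized; call them $A$ and $B$. Every weight-$h$ representative $F$ then has $FE_d$ a stabilizer times $I$, $\overline X$, $\overline Y$, or $\overline Z$, with weight at most $2h=d+1$.

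\emph{Step 2: exclude the $\overline Y,\overline Z$ classes.} A representative of $\overline Y$ or $\overline Z$ must cross both smooth boundaries as a $Z$-string. I would show that $E_d$, whose $X$ support lies in column $0$ and whose only $Z$ is at $(1,0)$, has a $Z$-syndrome confined near the left boundary. Any weight-$h$ completion into the $\overline Z$ class would therefore need a horizontal $Z$ chain of length at least $d-1$, plus the $X$ part needed to reproduce the $X$-type syndrome. Counting both parts exceeds $h$, which leaves exactly the two classes $A,B$.

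\emph{Step 3: decompose by Pauli type and count.} Write each representative as its $X$-part times its $Z$-part, with overlapping sites becoming $Y$. Then:
\begin{itemize}
\item The $Z$-syndrome (violated $X$ checks) forces the $Z$-part to be a short chain, either a single $Z$ near $(1,0)$ or nothing, absorbed into a $Y$.
\item The $X$-part must pair the violated $Z$ checks with the left rough boundary, giving monotone paths inside the two-column corridor of Fig.~\ref{fig:sm-surface-convention}.
\item Minimality (weight exactly $h$) forces each path to advance one row per qubit. Equality in the diagonal cuts contracts each two-row checkerboard cell to a binary choice, so a representative is a $\pm1$ walk of length $N$ with $a$ fixed $X$ sites.
\item Steps where the $X$ path and the $Z$ defect coincide produce $Y$ factors. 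This converts a walk with $j$ steps of one type into the monomial $x^{a+2j}y^{N-2j}$ in class $A$. In class $B$, one extra $Z$ carried through the corridor gives $x^{a+2j}y^{N-1-2j}z$.
\item The prefactors $2$ and $4$ come from the independent binary choices at the corridor endpoints: the two end cells for $A$, and those together with the two placements of the resolved $Z$ for $B$.
\end{itemize}

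\emph{Step 4: identify the coefficients.} The $A$-class walks are unconstrained, giving $\binom Nj$. The $B$-class walks must not cross the boundary before the end, otherwise the logical label flips or the syndrome changes. Pairing paths at their first boundary crossing (the reflection principle) subtracts the crossing paths, leaving the ballot count $\binom Nj-\binom N{j-1}$. The parity split between $\rho$ odd and even only shifts how many fixed columns precede the corridor, which is why $N$, $a$, and $J=(N-1)/2$ appear.

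I expect Step 3 to be the main obstacle: proving that every weight-$h$ representative lives in the corridor and is captured bijectively by these walks, with the composition bookkeeping of the $Y$ overlaps done correctly. I would verify it first at $d=5,7$ by hand against Fig.~\ref{fig:sm-surface-convention}. After that, the reflection count in Step 4 is routine.
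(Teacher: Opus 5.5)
Your outline (a walk encoding followed by a reflection count) has the right overall shape. However, Step~3, which carries all the content, rests on a false claim and omits the two ingredients that actually produce the enumerators.

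\emph{First gap: the $Z$-part.} The $Z$-part is not a single $Z$ near $(1,0)$. One $Z$ gives at most one $Y$, whereas the target monomials $y^{N-2j}$ need up to $N$ of them. For example, at $d=7$ the operator $E_7\overline X\,B^{Z,\mathrm l}_2=Y_{(1,0)}X_{(2,0)}Y_{(5,0)}Y_{(6,0)}$ has weight $4=h$, the syndrome of $E_7$, and composition $(1,3,0)$. It is one of the two errors in the $j=0$ term of $P_{A,7}=2xy^3+6x^3y$.

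On the equality support the $Z$-part is constrained only by the $X$-check parities. It equals $Z_{(1,0)}$ or $Z_{(2,0)}$ times an arbitrary product of left-boundary checks $B^{Z,\mathrm l}_i$, $i\geq1$, whose two qubits both lie on the $X$-path. Each such factor turns an $XX$ pair into $YY$ without changing weight, syndrome, or class. These optional factors produce all but at most one of the $Y$'s, so they must be counted jointly with the paths.

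The paper handles this by solving the $X$-check parity equations cell by cell on the peeled support. It then reads the up/down word by stack cancellation: matched pairs carry $XX$ and unmatched steps carry $Y$. Your proposal has no substitute for this, so assigning $x^{a+2j}y^{N-2j}$ to a walk with $j$ minority steps is unsupported.

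\emph{Second gap: calibrating the union weight.} The weight is the union $|u\cup v|$ of the $X/Y$ and $Z/Y$ supports. The claim ``weight exactly $h$ forces each path to advance one row per qubit'' does not follow from minimizing the two parts separately.
\begin{itemize}
\item For the $z$-free class, the $X$-part must cross $h$ distinct rows of qubits an odd number of times. This forces a geodesic $X$-part with $v\subseteq u$.
\item For the class of $E_d$, the $X$-minimum is only $\rho=h-1$. Nothing in your argument excludes a one-step detour with every $Z$ hidden under a $Y$.
\end{itemize}
The paper's Step~2 is devoted to this calibration, adjoining the boundary pair $D=\{(1,0),(2,0)\}$ to the $\rho$ row cuts.

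A clean way to carry it out uses row parities. The parity with which the $X$-part crosses each row is fixed by the terminals $V_1,V_3,V_{\rho+2}$ and the sector. Exactly $\rho$ rows are odd in this class, so $|u|\equiv\rho\pmod 2$. Hence $|u|=\rho$ at weight $h$, and $u$ misses rows $1$ and $2$. The single remaining $Z$ must touch $A^X_{1,0}$, and a short local check places it in $D$.

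Do not instead argue that $|v\cap D|$ is odd for every $Z$-chain with this syndrome. For $d\geq5$, $Z_{(2,1)}Z_{(3,1)}Z_{(3,0)}$ has the syndrome of $Z_{(1,0)}$ and misses $D$.

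\emph{Smaller points.}
\begin{itemize}
\item Your labels are swapped. $E_d$ contains $Z_{(1,0)}$ and is the $j=J$ term $x^\rho z$ of $P_{B,d}$, while $E_d\overline X$ is the $j=J$ term $x^\rho y$ of $P_{A,d}$.
\item $X$-chains end on the top and bottom boundaries, not the left one. They are monotone in the row index but can wander rightward, so they are not confined to two columns.
\item At $d=3$ the long $Z$-sector has length exactly $2=h$, so your Step~2 needs an overlap check; the paper uses the explicit $d=3$ table.
\item The ballot constraint is the physical edge of the patch, which the $A$-paths meet as well. A ``label flip'' therefore does not explain why only $B$ yields $\binom Nj-\binom N{j-1}$.
\end{itemize}
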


\begin{proof}
Write a Pauli error as a pair $(u,v)$, where $u$ is its $X/Y$ chain in
the $Z$-check decoding graph and $v$ its $Z/Y$ chain in the dual
$X$-check graph.  Its physical weight is the joint support
$|u\cup v|$, not $|u|+|v|$.  Label the alternating checks and rough
boundary vertices along the first data column by
$V_0,\ldots,V_d$, so qubit $(i,0)$ is the graph edge
$V_iV_{i+1}$.  More generally a data qubit is the edge joining its two
incident checks in Eqs.~(\ref{eq:sm-surface-xchecks}) or
(\ref{eq:sm-surface-zchecks}); a missing check is replaced by the
corresponding boundary vertex.  Thus the two decoding graphs, their
boundary vertices, and every parity equation are fixed explicitly by
Eqs.~(\ref{eq:sm-surface-xchecks})--(\ref{eq:sm-surface-zchecks}).

\emph{Step 1: exact sector bounds.}
The $u$ terminals of $E_d$ are
$\{V_0,V_1,V_3,V_{\rho+2}\}$; multiplication by $\overline X$ replaces
$V_0$ by $V_d$.  Give every $Z$-graph vertex its diagonal height along
the first column.  A graph edge changes this height by at most one.
For each integer $k$, the edge boundary between heights below and above
$k$ is a diagonal cut, and distinct graph edges cross at most one such
cut.  Hence a chain joining $V_i$ to $V_j$ has length at least
$|i-j|$.  Evaluating the three terminal pairings gives
\begin{equation}
\begin{array}{c|cc}
 &0&1\\ \hline
\min |u|\text{ in logical sector}&\rho+1&\rho\\
\min |v|\text{ in logical sector}&2\rho&1 .
\end{array}
\label{eq:sm-surface-sector-bounds}
\end{equation}
For the $u$ sectors the attaining pairings are
\begin{align}
\alpha=1:&\quad(V_0,V_1),(V_3,V_{\rho+2}),\nonumber\\
\alpha=0:&\quad(V_1,V_3),(V_{\rho+2},V_d).
\label{eq:sm-surface-pairings}
\end{align}
The first-column paths attain every bound.  The $v$ syndrome is the
single $X$ check $A^X_{1,0}$ adjacent to the left boundary.  Its two
boundary edges are $(1,0)$ and $(2,0)$, giving the length-one
$\beta=1$ sector; the opposite sector crosses all $d-1=2\rho$ vertical
diagonal cuts.  For $\rho\geq2$, the
$v$-sector of length $2\rho$ cannot occur at physical weight $h=\rho+1$.
For $d=3$, row reduction of the eight local check equations gives the
complete weight-two table
\begin{equation}
\begin{array}{c|c|c}
\text{class}&(n_X,n_Y,n_Z)&\text{multiplicity}\\ \hline
01&(1,1,0)&2\\
11&(1,0,1)&4\\
00,10&-&0
\end{array}
\label{eq:sm-surface-d3-table}
\end{equation}
which is also the $N=a=1$ instance of the formulas below.  It remains
for general $\rho$ to classify the two sectors
\[
 A=(0,1),\qquad B=(1,1).
\]

\emph{Step 2: equality graph and joint-support peeling.}
For class $A$, the $h$ diagonal cuts that prove $|u|\geq h$ already
charge $h$ pairwise disjoint physical edge sets.  Equality in the Pauli
union therefore forces exactly one crossing of each cut, no edge outside
their union, and $v\subseteq u$.  Thus $u$ is one of the geodesics in
Eq.~(\ref{eq:sm-surface-pairings}); a longer $u$ cannot be rescued by
$Y$ overlap.

Class $B$ needs one additional cut because its separate $u$ minimum is
$\rho=h-1$.  Choose the $\rho$ disjoint diagonal cuts used by its two
pairings in Eq.~(\ref{eq:sm-surface-pairings}), and in the $X$ decoding
graph take the boundary cut
\begin{equation}
 D=\{(1,0),(2,0)\}.
\label{eq:sm-surface-boundary-cut}
\end{equation}
The unique $v$-syndrome terminal lies on the check side of $D$, so
\begin{equation}
 |v\cap D|=1\pmod2.
\end{equation}
 For class $B$, the attaining $u$ pairing is
 $(V_0,V_1),(V_3,V_{\rho+2})$.  Its first edge is $(0,0)$ and its remaining
 diagonal cuts start below $V_3$; their physical supports therefore
 avoid the two-edge set $D$.  Hence these
$\rho+1=h$ edge-disjoint cuts calibrate $|u\cup v|$ directly.  Equality
contains one edge in every selected cut and no edge elsewhere.  The
$Z$-check terminal pairing itself puts no $u$ edge in $D$, so precisely one
$v$ edge lies outside $u$.  This proves, before any enumeration, that
additional $Y$ overlap cannot compensate a nongeodesic component.

We now solve the equality parity equations.  This reduction is
constructive: for each attaining terminal pair $(p,q)$ retain a
decoding-graph edge $ab$ precisely when
\begin{equation}
 \operatorname{dist}(p,a)+1+\operatorname{dist}(b,q)
 =\operatorname{dist}(p,q)
\label{eq:sm-surface-geodesic-test}
\end{equation}
or with $a,b$ interchanged, take the union for the two pairs, and delete
every edge outside the calibrated cuts.  The graph distance is computed
in the decoding graph fixed above, while the diagonal cuts certify its
terminal value.  Hence
Eq.~(\ref{eq:sm-surface-geodesic-test}) is an explicit test on the
checks in Eqs.~(\ref{eq:sm-surface-xchecks})--(\ref{eq:sm-surface-zchecks}),
not an appeal to a drawn corridor.  On every retained check vertex $q$
of the two explicitly defined decoding graphs they are
\begin{equation}
 \bigoplus_{e\ni q}u_e=s_q^Z,\qquad
 \bigoplus_{e\ni q}v_e=s_q^X.
\label{eq:sm-surface-parity-equations}
\end{equation}
Contract forced degree-two segments in the graph just defined.  Direct
substitution of the alternating bulk and boundary checks gives the
following incidence audit:
\begin{equation}
\begin{array}{c|c|c|c}
\text{location}&\text{incoming}&\text{continuations}&\text{condition}\\ \hline
\text{bulk two-row cell}&1&2&\uparrow\text{ or }\downarrow\\
\text{unfolded smooth edge}&1&2&H\in\mathbb Z\\
\text{rough edge}&1&1\text{ or }2&H\geq0\\
\text{last cell, }\rho\text{ odd}&1&2&N=\rho,\ a=1\\
\text{last cell, }\rho\text{ even}&1&1&N=\rho-1,\ a=2
\end{array}
\label{eq:sm-surface-incidence-audit}
\end{equation}
The first row repeats after translating by two data rows; the last two
rows exhaust the two checkerboard parities.  Thus every uncontracted
vertex is either one of the displayed boundary vertices or a translate
of the single cell in Fig.~\ref{fig:sm-surface-convention}.  At layer
$k$ exactly two geodesic edges remain;
write their bits as $u_k^\uparrow,u_k^\downarrow$.  The first equation
in Eq.~(\ref{eq:sm-surface-parity-equations}) becomes
\begin{equation}
 u_k^\uparrow\oplus u_k^\downarrow=1,\qquad
 H_k=H_{k-1}+u_k^\uparrow-u_k^\downarrow .
\label{eq:sm-surface-walk-equations}
\end{equation}
Thus each layer is one up/down step.  Unfolding the smooth boundary gives
an unrestricted walk for $A$; at the rough boundary the physical
half-plane imposes $H_k\geq0$ for $B$.  The two-row checkerboard period
leaves
\begin{equation}
 N=\begin{cases}\rho,&\rho\ {\rm odd},\\\rho-1,&\rho\ {\rm even},\end{cases}
\end{equation}
binary layers and $a=\rho-N+1$ forced boundary $X$ edges.
The complete Boolean table for the local $u$ equation is
\begin{equation}
\begin{array}{cc|c|c}
u_k^\uparrow&u_k^\downarrow&
u_k^\uparrow\oplus u_k^\downarrow& H_k-H_{k-1}\\ \hline
0&0&0&\text{forbidden}\\
0&1&1&-1\\
1&0&1&+1\\
1&1&0&\text{forbidden}
\end{array}
\label{eq:sm-surface-transition-table}
\end{equation}
In physical coordinates, advancing one contracted layer means advancing
two checkerboard rows; the two retained edges leave the same $Z$ check
through the upper-right or lower-right data qubit in
Fig.~\ref{fig:sm-surface-convention}.  Eqs.~(\ref{eq:sm-surface-geodesic-test})
and~(\ref{eq:sm-surface-incidence-audit}) show that repeating this table
generates the entire equality graph.  At a boundary only the row that
remains inside the patch is retained, exactly $H_k\geq0$.

 After a choice in Eq.~(\ref{eq:sm-surface-walk-equations}), every
 interior $X$ check has degree two in the remaining support.  If its two
 incident retained bits are $v_{\rm in},v_{\rm out}$, the second equation
 in Eq.~(\ref{eq:sm-surface-parity-equations}) reads
 $v_{\rm out}=v_{\rm in}\oplus s_q^X$.  Its exhaustive truth table is
\begin{equation}
\begin{array}{cc|c}
v_{\rm in}&s_q^X&v_{\rm out}\\ \hline
0&0&0\\0&1&1\\1&0&1\\1&1&0
\end{array}.
\label{eq:sm-surface-v-table}
\end{equation}
All peeled interior checks have $s_q^X=0$; the unique terminal check
has $s_q^X=1$.  Thus the entering boundary bit fixes every $v$ bit.
Substitution on one
two-row cell gives the complete equality table
\begin{equation}
\begin{array}{c|cc|c|c}
\text{motif}&u\text{ bits}&v\text{ bits}&\text{Pauli}&
|u\cup v|\\ \hline
A\text{ boundary}&1^a&0^a&X^a&a\\
B\text{ boundary}&1^a;0&0^a;1&X^aZ&a+1\\
\text{cancelled up--down pair}&11&00&XX&2\\
\text{unmatched walk step}&1&1&Y&1
\end{array}
\label{eq:sm-surface-local-table}
\end{equation}
The semicolon separates the unique edge of $D$.  These are exhaustive:
apply the canonical left-to-right stack cancellation to any odd-length
up/down word.  It pairs every minority step with an opposite step and
leaves only unmatched majority steps, while
Eq.~(\ref{eq:sm-surface-parity-equations}) leaves only the displayed
$v$ assignment on each peeled cell.  Equivalently, an innermost matched
pair is an adjacent up--down or down--up detour; deleting its two-row
cell contributes $XX$ and leaves the same boundary bits.  Induction
removes all matched pairs, after which each uniform unmatched cell is
$Y$.  The two possible initial boundary
bits give the two $v$ completions in class $A$; in class $B$ there are
two top-boundary orientations and two choices of the edge in $D$.
The bases $\rho=1,2$ are exactly the first two rows of the same table.
Peeling one two-row cell preserves
Eqs.~(\ref{eq:sm-surface-walk-equations})--(\ref{eq:sm-surface-local-table}),
which proves the equality characterization for every odd $d$.

\emph{Step 3: enumerate the equality walks.}
For an unrestricted length-$N$ word with $j$ minority steps, the
$j$ paired turns contribute $2j$ factors of $X$, its $N-2j$ unmatched
steps contribute $Y$, and the boundary gadget contributes $X^a$.
There are $2\binom Nj$ such words, giving $P_{A,d}$.

For a nonnegative word with $j$ down steps, the reflection principle
gives $\binom Nj-\binom N{j-1}$ words.  Its paired turns again contribute
$X^{2j}$; one unmatched step terminates at the unique outside $Z$ edge
and the remaining $N-1-2j$ contribute $Y$.  Two rough-boundary
orientations and two forced $v$ completions give the factor four and
$P_{B,d}$.  This proves Eq.~(\ref{eq:sm-surface-spectra}) and exhausts
all physical weight-$h$ representatives.
\end{proof}

For $y>x$, the $j=0$ monomial is uniquely MP-optimal in each class, and
their ratio is $y/z$; thus $y>z$ makes MP choose $A$.  Let
$u=x^2/y^2$ and
\begin{align}
 S_N(u)&=\sum_{j=0}^{q}\binom Nj u^j,\nonumber\\
 T_N(u)&=\sum_{j=0}^{q}
 \left(\binom Nj-\binom N{j-1}\right)u^j.
\end{align}
The identity and its consequence can be written together as
\begin{align}
 T_N(u)&=(1-u)S_N(u)+\binom Nq u^{q+1}
 >(1-u)S_N(u),\nonumber\\
 \frac{P_{B,d}}{P_{A,d}}
 &=\frac{2zT_N(u)}{yS_N(u)}
 >\frac{2z(1-u)}y.
\end{align}
Consequently, in the main article's surface region MLD uniquely chooses $B$.
The strict witness has order $h$, and
Theorem~\ref{thm:sm-lower} proves
$m_d^{\rm planar}=(d+1)/2$.

\section{Odd square toric code: rigidity and sharp gadget}

Write $d=2\rho+1$.  A minimum noncontractible dual logical loop has weight
$d$.

\begin{smlemma}[Half-distance rigidity]
\label{lem:sm-toric-rigidity}
If two weight-$(\rho+1)$ errors have the same syndrome and different logical
classes, each of the two logical cosets has exactly one
weight-$(\rho+1)$ representative.
\end{smlemma}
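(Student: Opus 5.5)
The plan is to work in the chain picture already used for the planar corridor. Write each Pauli error as a pair $(u,v)$. Here $u$ is its $X/Y$ chain on the primal lattice of the $d\times d$ torus, and $v$ is its $Z/Y$ chain on the dual lattice, so that $\operatorname{wt}=|u\cup v|$. Let $E,F$ be the two weight-$(\rho+1)$ errors. By Proposition~\ref{prop:sm-localization}, $G=E^\dagger F$ is a nontrivial logical with $d\le\operatorname{wt}(G)\le 2\rho+2=d+1$. Its chains $u_G=u_E\oplus u_F$ and $v_G=v_E\oplus v_F$ are cycles, and at least one of them is homologically nontrivial. By the symmetry exchanging primal and dual (and $X\leftrightarrow Z$), I will assume $u_G$ is nontrivial.

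\emph{Step 1: the product is a straight loop.} Suppose both $u_G$ and $v_G$ were nontrivial. Each crosses the $d$ disjoint parallel cuts of its own lattice, and a primal and a dual noncontractible loop share at most the qubits where they cross. Hence $|u_G\cup v_G|\ge 2d-1>d+1$ for $d\ge3$, which is impossible. So $v_G$ is homologically trivial. A nontrivial primal cycle of winding $(w_1,w_2)$ has length at least $d(|w_1|+|w_2|)$, and its length has the parity of $d(w_1+w_2)$. Since $d$ is odd, length $d+1$ is excluded, so $|u_G|=d$ with winding $(1,0)$ up to symmetry. Such a cycle has no transverse steps, so $u_G$ is a single straight row or column of $d$ edges. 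Next, $|u_E|+|u_F|\le|E|+|F|=d+1$ while $|u_E\oplus u_F|=d$, which forces $u_E\cap u_F=\varnothing$ (an overlap costs $2$). Therefore $u_E,u_F$ are complementary segments of the straight loop with common endpoints (the $Z$-check syndrome). Their lengths $\{a,d-a\}$ satisfy $a,d-a\le\rho+1$, which leaves at most one physical site of slack across $E$ and $F$ together.

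\emph{Step 2: the dual parts.} This is the step I expect to be the main obstacle, because $Y$ overlaps let $v$ hide inside $\operatorname{supp}u$. On a straight primal row, the qubits of $u$ correspond to mutually parallel, pairwise nonadjacent dual edges. Any $v$-edge placed on them therefore contributes two dual endpoints that must be cancelled by the syndrome or by other $v$-edges. The budget allows at most one $v$-edge outside $u_E\cup u_F$ in total. Since $v_E\oplus v_F$ is a trivial cycle, it is either empty or has length at least $4$. I would rule out the nonempty case by counting dual vertices of odd degree against the single spare site. In the empty case $v_E=v_F$, and I would show that this common chain is forced by the $X$-check syndrome: it consists of isolated dual edges pinned to syndrome endpoints, with no freedom in the available budget. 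I would first try a cut-calibration argument like Step~2 of Lemma~\ref{lem:sm-surface}, using short dual cuts around each syndrome defect.

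\emph{Step 3: uniqueness.} Let $E'$ be any weight-$(\rho+1)$ representative in the class of $E$. Then $E'$ and $F$ are again a same-syndrome pair in different classes, so Steps 1--2 apply to the pair $(E',F)$. They give that $u_{E'}$ is the segment complementary to $u_F$ on a straight noncontractible loop through the same endpoints. There is exactly one such segment, because lattice geodesics within a row are unique and the loop is fixed by $u_F$ together with the endpoints. Hence $u_{E'}=u_E$. Step~2 then forces $v_{E'}=v_E$, so $E'=E$. The same argument with $E$ and $F$ exchanged treats $F$'s class, which proves the lemma.
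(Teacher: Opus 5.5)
Your route works and shares the paper's three-stage skeleton: the product $EF$ is a single straight loop of one CSS type, $E$ and $F$ are complementary pieces of that loop overlapping in one site, and a second representative $E'$ would force the same loop. What differs is the bookkeeping.

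You split each error into CSS chains $(u,v)$ and track $|u\cup v|$. You exclude length $d+1$ by the observation that a $\mathbb Z_2$ cycle of winding $(w_1,w_2)$ has length $\equiv d(w_1+w_2)\pmod 2$. The paper instead uses transverse cuts plus ``a transverse step costs two edges.'' It then classifies the endpoints with the lattice-free phase-free identity $\operatorname{wt}(E)+\operatorname{wt}(F)-\operatorname{wt}(EF)=q+2u$. With $\operatorname{wt}(EF)=d$ this gives $q=1$ and $u=0$ at once.

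That identity is exactly what dissolves the step you flag as the main obstacle; no cut calibration or syndrome pinning is needed. Since $|u_G|=d$ and $|G|\le d+1$, at most one edge of $v_G$ lies off the loop. The loop's qubits form a matching in the dual lattice, and two plaquettes share at most one edge, so a nonzero cycle $v_G$ would need at least two off-loop edges. This is the paper's ``even incidence'' remark. Hence $v_G=0$ and $v_E=v_F=:v$. The budget $|E|+|F|-d=|v\cap u_G|+2|v\setminus u_G|=1$ then forces $v$ to be a single edge of the loop.

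In Step 3 you get $v_{E'}=v_F=v_E$ directly. Make three points explicit there:
\begin{enumerate}
\item $u_{E'}\oplus u_F$ is nontrivial, because $E'F$ and $EF$ lie in the same logical class, so your WLOG orientation carries over.
\item $u_F\neq\varnothing$; indeed $|u_F|\ge d-(\rho+1)=\rho$. Since distinct straight primal loops are edge-disjoint, $u_F$ alone determines the loop. Uniqueness of geodesics within a row is not the relevant fact.
\item In Step 1, apply your parity argument to each component before bounding the primal--dual overlap by one qubit. That bound holds for straight loops, not for arbitrary noncontractible cycles.
\end{enumerate}

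The paper's version is shorter and handles $X$ and $Z$ loops uniformly through support intersections of at most one edge. Yours makes the homological mechanism more explicit, at the cost of tracking $Y$ overlaps by hand.
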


\begin{proof}
\emph{Step 1: classify every logical product below weight $d+2$.}
Let $E$ and $F$ be the two errors.  Their product
$L=EF$ has zero syndrome, a nontrivial logical label, and
$\operatorname{wt}(L)\leq2\rho+2=d+1$. For either
Calderbank--Shor--Steane (CSS) component of $L$, choose the $d$
parallel cuts transverse to its nontrivial winding.  The component must
cross every cut, hence has at least $d$ edges.  Equality forces one
crossing per cut.  A transverse step changes lattice row (or column) and
must be undone before the loop closes, adding at least two edges.
Therefore every nontrivial single-component chain of weight at most
$d+1$ is a straight geodesic loop of weight $d$.  If both CSS components
were nontrivial, their physical geodesics would be a primal and a dual
representative.  Parallel representatives are edge-disjoint and
transverse representatives share at most one physical edge, so their
Pauli union has weight at least $d+d-1=2d-1>d+1$.  A nonzero second,
contractible CSS component must contain at least two physical edges
outside the straight geodesic: even incidence forbids a cycle supported
only on its disconnected transverse edges.  Its Pauli union with the
geodesic therefore has weight at least $d+2$.  Thus $L$ itself is one
straight primal or dual loop $C$.

\emph{Step 2: classify the two endpoints.}
At a position of $C$ occupied by both $E$ and $F$, their Pauli labels
must differ, so the position remains in $EF$ and contributes one excess
unit to $\operatorname{wt}(E)+\operatorname{wt}(F)-\operatorname{wt}(EF)$.
Let $q$ count these positions.  Outside $C$,
$EF$ is the identity, so a common occupied position has equal Pauli
labels and contributes two excess units; let their number be $u$.
There can be no position occupied by only one endpoint outside $C$.
Consequently
\begin{equation}
 2(\rho+1)=\operatorname{wt}(E)+\operatorname{wt}(F)=d+q+2u.
\end{equation}
Hence $q+2u=1$, so $q=1$ and $u=0$.  Both endpoints are supported on
$C$, they overlap at exactly one site, and their remaining supports
partition $C$.

\emph{Step 3: exclude additional minimum representatives.}
Fix $F$ and suppose $E'$ is another weight-$(\rho+1)$ representative in
the logical class of $E$.  The pair $(E',F)$ again satisfies Steps 1 and
2, so $E'F$ is a straight loop $C'$, and $F$ is supported on both $C$
and $C'$.  Distinct parallel straight loops are disjoint, while
perpendicular ones intersect in only one edge; neither can contain the
$\rho+1\geq2$ support of $F$.  Thus $C'=C$, and then
$E'=FC=E$ as phase-free Pauli operators.  Interchanging $E$ and $F$
gives the same conclusion for the other class.  Therefore each of the
two logical cosets has exactly one weight-$(\rho+1)$ representative.
\end{proof}

Every eligible half-distance class is therefore a singleton.  By
Theorem~\ref{thm:sm-spectrum}, if its minimum monomial is unique then MP
and eventual MLD share that class; if several monomials tie, the MP
winner set contains every lexicographic MLD resolution.  Thus no
higher-weight coefficient can create a strict half-distance split, and
\begin{equation}
 m_d^{\rm toric}\geq \rho+2=\frac{d+3}{2}.
\end{equation}

To attain this order for $d\geq5$, choose a straight loop
$L=\prod_{a=0}^{d-1}X_{v_{a,0}}$, let
\begin{equation}
 Q=\{0,2,3\},\qquad A=\{4,5,\ldots,\rho+2\},
\end{equation}
and set
\begin{equation}
 E_A=\prod_{a\in A}X_{v_{a,0}}
     \prod_{a\in Q}Z_{v_{a,0}},\qquad E_B=E_AL,
\label{eq:sm-toric-gadget}
\end{equation}
where $v_{a,b}$ is the vertical edge from $(a,b)$ to $(a,b+1)$, with
coordinates understood modulo $d$.

\begin{figure}[h]
\centering
\begin{minipage}{0.82\textwidth}
\centering
\includegraphics[width=\linewidth]{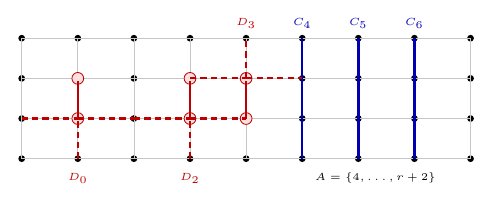}
\end{minipage}
\caption{Local coordinates for the toric union calibration.  Blue
vertical columns are the mutually disjoint parity detectors $C_a$ for
the short dual chain.  Solid red edges form $Q_v$; dashed red segments
trace the three vertex stars $D_0,D_2,D_3$ charged by its boundary.
Their physical edge sets are pairwise disjoint and avoid all $C_a$,
$a\in A$.  Periodic
identification and unshown middle columns do not alter these local
incidences.}
\label{fig:sm-toric-calibration}
\end{figure}

\begin{smlemma}[Exact gadget spectra]
\label{lem:sm-toric-spectra}
The syndrome of Eq.~(\ref{eq:sm-toric-gadget}) has exactly two logical
classes at weight $\rho+2$, with
\begin{equation}
 P_{A,d}=2x^{\rho-1}z^3,\qquad
 P_{B,d}=x^{\rho-1}y^3.
 \label{eq:sm-toric-spectra}
\end{equation}
\end{smlemma}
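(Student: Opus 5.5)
The plan is to split every compatible error into its CSS components and calibrate the joint support $|u\cup v|$ with pairwise disjoint parity detectors, following the planar corridor argument. This gives a lower bound of $\rho+2$; an equality analysis then pins down every weight-$(\rho+2)$ representative.

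\emph{Step 1: read off the two gadget errors.} Since $L$ is $X$ on all $d$ sites $v_{a,0}$, the product $E_B=E_AL$ carries $Y$ on $Q$, the identity on $A$, and $X$ on the remaining $d-3-(\rho-1)=\rho-1$ sites of the loop. So $E_A$ and $E_B$ both have weight $\rho+2$, with monomials $x^{\rho-1}z^3$ and $x^{\rho-1}y^3$ respectively. I write the syndrome as a pair: the $Z$-check syndrome, determined by the $X/Y$ chain $u$, and the $X$-check syndrome, determined by the $Z/Y$ chain $v$. Here $u$ must have the boundary of $X_A$ modulo the loop class, and $v$ has the boundary of the short dual chain $Q_v$.

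\emph{Step 2: disjoint detector calibration.} I take the detectors of Fig.~\ref{fig:sm-toric-calibration}. The columns $C_a$ for $a\in A$ are $\rho-1$ disjoint edge sets, each of which the $u$ chain must cross an odd number of times, whichever winding class it lies in. The three stars $D_0,D_2,D_3$ are charged by $\partial Q_v$, so $v$ must meet each of them. I will check, as in the figure, that all $\rho+2$ edge sets are pairwise disjoint. It follows that $|u\cup v|\ge \rho+2$, so $w_{\min}=\rho+2$. Moreover, any representative attaining this weight places exactly one support edge in each detector and none elsewhere, and on an edge shared by the two detector families the $u$ and $v$ bits must be absorbed into a single $Y$.

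There is one subtlety for class $B$: its $u$ component $L\cdot X_A$ lies on the complementary columns, not on the $C_a$. For this class I therefore redo the calibration with the complementary column detectors. There are $\rho-1$ columns outside $A\cup Q$ and three columns in $Q$, and the $Q$ columns coincide with the edges that $v$ must use. I also rule out every class whose $v$ component winds: a dual noncontractible component costs at least $d>\rho+2$, by the cut argument of Lemma~\ref{lem:sm-toric-rigidity}. This leaves only the two $u$ homology classes, namely $A$ and $B$.

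\emph{Step 3: enumerate the equality set.} In class $A$, $u$ is forced to be the straight segment $X_A$. Then $v$ is a weight-three $Z$ chain, disjoint from $u$, with boundary $\partial Q_v$ and exactly one edge in each of $D_0,D_2,D_3$. The remaining local problem is finite and independent of $\rho$, because the common backbone $x^{\rho-1}$ factors out. Solving its parity equations on the three stars should produce exactly two solutions: $Q_v$ itself and its image under one plaquette stabilizer, both with composition $z^3$. This gives $P_{A,d}=2x^{\rho-1}z^3$. In class $B$, saving weight forces $v\subseteq u$, so $v$ must lie on the three $Q$ edges of the loop. That forces $v=Q_v$ and the unique error $E_B$, giving $P_{B,d}=x^{\rho-1}y^3$.

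I expect the main obstacle to be this local equality enumeration. One part is verifying disjointness of the stars from the columns $C_a$ uniformly in $\rho$, which relies on $d\ge5$ keeping $A$ away from $Q$ modulo $d$. The other part is excluding any other weight-three $Z$ chain, or any mixed $Y$ overlap, that might produce a third class-$A$ representative or a second class-$B$ one. I would handle this with the same incidence truth-table peeling used in the planar proof, restricted to the constant-size neighbourhood of $Q$.
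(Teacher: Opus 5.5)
Your proposal is correct and is essentially the paper's proof. It uses the same CSS split and the same disjoint-detector calibration: the columns $C_a$ ($a\in A$) plus the stars $D_0,D_2,D_3$ in the $A$ sector, and the $\rho+2$ complementary columns in the $L\setminus A$ sector. It then does the same local equality solve, giving $\{Q_v,Q_v'\}$ in class $A$ and only $Q_v\subseteq L\setminus A$ in class $B$. The one step you assert without argument, $u=X_A$ in the $A$ sector, the paper obtains by right-to-left plaquette peeling plus the observation that any leftover dual cycle in $D_0\cup D_2\cup D_3$ would need a full star; $d=5$ is checked separately because $D_0$ then touches the right terminal plaquette.
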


\begin{proof}
Separate a Pauli error into its dual $X/Y$ chain $R_X$ and primal $Z/Y$
chain $R_Z$; their intersection is the $Y$ support, and therefore
\begin{equation}
 \operatorname{wt}(R_X,R_Z)=|R_X\cup R_Z|
 =|R_X|+|R_Z|-|R_X\cap R_Z|.
 \label{eq:sm-toric-union}
\end{equation}
With $\chi^X_e,\chi^Z_e\in\mathbb F_2$ denoting their edge bits, the complete
local constraints are
\begin{equation}
 \bigoplus_{e\in\partial p}\chi^X_e=s_p^Z
 \quad\text{for every plaquette }p,\qquad
 \bigoplus_{e\ni q}\chi^Z_e=s_q^X
 \quad\text{for every vertex }q.
\label{eq:sm-toric-local-parity}
\end{equation}
The two remaining parity equations are the primal and dual homology
bits.  Every equality classification below is obtained by restricting
these displayed equations to the charged support.

\emph{Step 1: dual-chain equality cases.}
The syndrome fixes the boundary of $R_X$ to the two endpoints of the
interval $A$ on the vertical loop.  Intersecting with the $d$ transverse
cuts gives lower bounds $\rho-1$ and $d-(\rho-1)=\rho+2$ in the two homology
sectors.  Equality crosses each required cut exactly once and crosses no
other cut.  Hence the unique equality chains are $A$ and
$L\setminus A$; a transverse excursion adds two edges.

\emph{Step 2: primal T-joins.}
The set $Q=\{0,2,3\}$ fixes six vertex terminals, paired locally in
three separated unit intervals.  Crossing the three cuts isolating these
intervals gives $|R_Z|\geq3$.  Equality requires one edge in each
interval and none elsewhere.  There are exactly two compatible choices:
the three edges $\{v_{a,0}:a\in Q\}$ and the chain obtained from them by
the single adjacent plaquette boundary.  These are in the same required
primal homology sector.  Any other pairing crosses one of the isolating
cuts at least twice or leaves the strip and must return, so it has length
at least five.

\emph{Step 3: disjoint-cut calibration of the Pauli union.}
The separate minima in Steps 1 and 2 do not by themselves minimize
Eq.~(\ref{eq:sm-toric-union}), because a longer pair could create extra
$Y$ overlap.  We therefore lower-bound the union itself.

For each column $a$, let
\begin{equation}
 C_a=\{v_{a,b}:b\in\mathbb Z_d\}.
\end{equation}
As a physical edge set, $C_a$ is a primal noncontractible cycle and
therefore a mod-two intersection detector for dual chains.
Comparing $R_X$ with the reference chain $A$ shows, modulo two,
\begin{equation}
 |R_X\cap C_a|=
 \begin{cases}
 1,&a\in A,\\0,&a\notin A
 \end{cases}
\quad (R_X\sim A),
\label{eq:sm-toric-column-parity}
\end{equation}
with the two values interchanged in the $L\setminus A$ sector.  Indeed,
the symmetric difference with the reference chain is a closed dual
chain of trivial homology and has even intersection with every $C_a$.

For the primal chain define three vertex cuts
\begin{equation}
 D_0=\delta\{(0,0)\},\qquad
 D_2=\delta\{(2,0)\},\qquad
 D_3=\delta\{(3,1)\}.
\label{eq:sm-toric-vertex-cuts}
\end{equation}
Each selected vertex is a terminal of $\partial R_Z$, so the handshaking
identity gives
\begin{equation}
 |R_Z\cap D_i|=1\pmod2,\qquad i=0,2,3.
\label{eq:sm-toric-vertex-parity}
\end{equation}
The three vertices are pairwise nonadjacent, so the $D_i$ are
edge-disjoint.  They use vertical columns $0,2,3$ and horizontal edges
only; consequently they are also disjoint from every $C_a$ with
$a\in A=\{4,\ldots,\rho+2\}$.  Eqs.~(\ref{eq:sm-toric-column-parity})
and~(\ref{eq:sm-toric-vertex-parity}) therefore charge $\rho-1+3=\rho+2$
pairwise disjoint physical edge sets.  Since a $Y$ edge is still one
physical edge, it can discharge only one of these cuts.  Thus
\begin{equation}
 |R_X\cup R_Z|\geq \rho+2
 \qquad(R_X\sim A).
\label{eq:sm-toric-calibration}
\end{equation}

Equality in Eq.~(\ref{eq:sm-toric-calibration}) means that the union
contains exactly one edge in every charged cut and no edge outside
them.  We now solve the dual parity equations on this support.  For
$d\geq7$, the right terminal plaquette is incident to only
$v_{\rho+2,0}$ among the charged edges, so its odd equation forces that
edge.  The next plaquette has only $v_{\rho+2,0}$ and $v_{\rho+1,0}$
available; its even equation forces the latter.  Iterating right to left
gives
\begin{equation}
 \chi^X_{v_{a,b}}=\delta_{b,0},\qquad a=4,\ldots,\rho+2.
\label{eq:sm-toric-column-peel}
\end{equation}
These edges already have the prescribed dual boundary.  The remaining
$R_X$ support, if any, must be a cycle in
$D_0\cup D_2\cup D_3$.  In the dual graph this union is one isolated
four-cycle and two four-cycles wedged at one vertex.  Its cycle space
has basis $D_0,D_2,D_3$; every nonzero cycle therefore uses all four
edges of at least one $D_i$.  Equality permits at most one edge from
each $D_i$, so the remaining cycle is empty and $R_X=A$.

For $d=5$, periodic closure makes $D_0$ incident to the right terminal.
Direct substitution in the same local plaquette equations gives the
complete base table (five choices in $C_4$ and, independently, no edge
or one of four edges in each $D_i$)
\begin{equation}
\begin{array}{c|ccccc}
\text{edge chosen from }C_4
 &v_{4,0}&v_{4,1}&v_{4,2}&v_{4,3}&v_{4,4}\\ \hline
\#\{R_X:\partial R_X=\partial A,\ |R_X\cap D_i|\leq1\}
 &1&0&0&0&0
\end{array},
\label{eq:sm-toric-d5-dual-table}
\end{equation}
whose unique entry is $R_X=\{v_{4,0}\}=A$.  This closes the periodic
base case without an all-code enumeration.

It remains to solve the primal boundary equations on the charged
support.  At every $a\in A$, an endpoint of the isolated edge
$v_{a,0}$ has even prescribed parity and no other available incident
edge, so $R_Z$ contains no $C_a$ edge.  Every remaining choice is
exhausted by
\begin{equation}
\begin{array}{c|cccc}
D_0\text{ edge}&h_{d-1,0}&h_{0,0}&v_{0,d-1}&v_{0,0}\\ \hline
\text{correct terminal pair}&0&0&0&1
\end{array}
\qquad
\begin{array}{c|cccc}
 &h_{2,1}&h_{3,1}&v_{3,0}&v_{3,1}\\ \hline
h_{1,0}&0&0&0&0\\
h_{2,0}&1&0&0&0\\
v_{2,d-1}&0&0&0&0\\
v_{2,0}&0&0&1&0
\end{array}.
\label{eq:sm-toric-primal-table}
\end{equation}
An entry 1 means that the selected edges have precisely the required
terminal set; 0 means that at least one displayed vertex equation
fails.  The two surviving perfect $T$-joins are
\begin{equation}
 Q_v=\{v_{0,0},v_{2,0},v_{3,0}\},\qquad
 Q_v'=\{v_{0,0},h_{2,0},h_{2,1}\}.
\label{eq:sm-toric-local-solutions}
\end{equation}
Their difference is the unit-square boundary.  Table
(\ref{eq:sm-toric-primal-table}) checks every allowed edge in the three
charged stars, so no further solution is implicit.  Hence the $A$
sector has exactly two equality cases, both of composition
$(\rho-1,0,3)$.

In the complementary sector,
Eq.~(\ref{eq:sm-toric-column-parity}) charges the $\rho+2$ mutually
disjoint columns $C_a$ with $a\notin A$.  Equality allows no horizontal
edge, so the dual boundary equations force
$R_X=L\setminus A$.  The primal chain must be supported inside it; its
boundary equations uniquely give $R_Z=Q_v$.  This yields one error of
composition $(\rho-1,3,0)$, whereas $Q_v'$ has two horizontal edges and
union weight $\rho+4$.  Changing either remaining toric homology bit
requires a noncontractible component of at least $d>\rho+2$ edges.
The equality classification is therefore the finite table
\begin{equation}
\begin{array}{c|c|c|c|c}
\text{sector}&R_X&R_Z&(n_X,n_Y,n_Z)&\text{multiplicity}\\ \hline
 A&A&Q_v&(\rho-1,0,3)&1\\
 A&A&Q_v'&(\rho-1,0,3)&1\\
 B&L\setminus A&Q_v&(\rho-1,3,0)&1
\end{array}
\label{eq:sm-toric-exhaustive-table}
\end{equation}
All other logical sectors have minimum weight greater than $\rho+2$.
This proves Eq.~(\ref{eq:sm-toric-spectra}) directly in physical-union
units and exhausts every equality case.
\end{proof}

If $z<y$, MP uniquely chooses $B$, whereas $y^3<2z^3$ makes MLD uniquely
choose $A$.  Thus the lower bound is sharp.  At $d=3$, direct equality
enumeration is also finite enough to display.  Order qubits as
$h_{0,0},h_{1,0},\ldots,h_{2,2},v_{0,0},\ldots,v_{2,2}$.
Row reduction of Eq.~(\ref{eq:sm-toric-local-parity}) and the four
logical equations at union weight three gives
\begin{equation}
\begin{array}{c|c|c|c}
\text{class}&\text{representative}&(n_X,n_Y,n_Z)&\text{count}\\ \hline
0100&X_{h_{0,0}}X_{h_{1,0}}Z_{h_{0,1}}&(2,0,1)&1\\
0100&Z_{h_{0,1}}X_{v_{1,0}}X_{v_{1,2}}&(2,0,1)&1\\
1100&X_{h_{1,0}}Y_{h_{0,1}}X_{h_{0,2}}&(2,1,0)&1\\
\text{other 14 classes}&\text{none}&-&0
\end{array}
\label{eq:sm-toric-d3-table}
\end{equation}
Thus the two spectra are $2x^2z$ and $x^2y$, whose strict region contains
the main article's toric region. Hence
\begin{equation}
 m_d^{\rm toric}=\frac{d+3}{2}
\end{equation}
for every odd $d$ in that region.

\section{Separable HGP/BB family: exact two-order delay}

\subsection{All-size rigidity for a separable BB/HGP family}
\label{sec:sm-bb-hgp-all-q}

We now give an analytic hypergraph-product family result
\cite{TillichZemor2014} that requires no extrapolation from finite
instances. Put $n=3q$, let $\mathbb F_2$ denote the binary field, and let
\begin{equation}
 A_q=I+S+S^2:\mathbb F_2^n\longrightarrow\mathbb F_2^n,
 \qquad q\geq1,
 \label{eq:sm-hgp-seed}
\end{equation}
where $S$ is the cyclic shift.  The product complex has
\begin{align}
 Q_2&=V_1\otimes V_1,\nonumber\\
 Q_1&=(V_1\otimes V_0)\oplus(V_0\otimes V_1),\nonumber\\
 Q_0&=V_0\otimes V_0,
 \label{eq:sm-hgp-complex}
\end{align}
with $V_1=V_0=\mathbb F_2^n$ and
\begin{align}
 \partial_1(u,v)&=(A_q\otimes I)u+(I\otimes A_q)v,\nonumber\\
 \partial_2(w)&=((I\otimes A_q)w,(A_q\otimes I)w).
 \label{eq:sm-hgp-boundaries}
\end{align}
Thus $\partial_1\partial_2=0$. Taking
$H_X=\partial_1$ and $H_Z=\partial_2^T$ gives, up to a permutation of
the two qubit blocks and tensor factors, the bivariate-bicycle (BB) code
\begin{equation}
 \mathcal H_q=Q(1+y+y^2,1+x+x^2;3q,3q).
 \label{eq:sm-hgp-family}
\end{equation}
where $Q(a,b;\ell,m)$ denotes the standard polynomial presentation of a BB
code with check polynomials $a,b$ on an $\ell\times m$ periodic cell.

\begin{smlemma}[Seed-code equality structure]
\label{lem:sm-hgp-seed}
The codes $\ker A_q$ and $\ker A_q^T$ both have dimension two and distance
$2q$.  Each has exactly three nonzero words, all of weight $2q$.
Moreover, no coordinate vector belongs to $\operatorname{im}A_q$ or
$\operatorname{im}A_q^T$.
\end{smlemma}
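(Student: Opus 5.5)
The plan is to pass to the polynomial ring $R=\mathbb F_2[S]/(S^n-1)$ with $n=3q$, and identify $A_q$ with multiplication by $a(S)=1+S+S^2$. Over $\mathbb F_2$ we have $S^3-1=(S-1)(S^2+S+1)$. Since $3\mid n$, $S^n-1=(S^3-1)\,g(S)$ with $g(S)=1+S^3+S^6+\cdots+S^{3(q-1)}$, so $a$ divides $S^n-1$.

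First I will determine the kernel. A vector $c$ lies in $\ker A_q$ exactly when $a\,c\equiv0 \pmod{S^n-1}$. Because $a\mid S^n-1$, this is equivalent to $c$ being a multiple of $(S^n-1)/a=(S-1)g(S)$ in $R$. Hence $\ker A_q$ is the principal ideal generated by $(1+S)g(S)$, with dimension $\deg a=2$.

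Next I will write the three nonzero words explicitly. By periodicity with period $3$, a kernel word satisfies $c_i+c_{i+1}+c_{i+2}=0$ cyclically. Such a word is determined by $(c_0,c_1)$, with $c_2=c_0+c_1$, and then repeats with period $3$; the condition $3\mid n$ guarantees consistency around the cycle. The nonzero choices give the patterns $(1,1,0)$, $(1,0,1)$ and $(0,1,1)$, each repeated $q$ times. Every one has weight $2q$, so the distance is $2q$ and there are exactly three nonzero words.

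For $A_q^T$, the transpose is multiplication by $a(S^{-1})=S^{-2}a(S)$, which has the same kernel up to the reversal $i\mapsto -i$. That reversal preserves the three period-$3$ patterns as a set, so the same statements hold for $\ker A_q^T$.

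Finally I will handle the image claim. Over $\mathbb F_2$ we have $\operatorname{im}A_q=(\ker A_q^T)^\perp$, so it suffices to show that no standard basis vector $e_i$ is orthogonal to all of $\ker A_q^T$. For each coordinate $i$, among the three patterns exactly two have a $1$ at position $i \bmod 3$, so $e_i$ pairs nontrivially with some kernel word. Hence $e_i\notin\operatorname{im}A_q$, and the symmetric argument gives $e_i\notin\operatorname{im}A_q^T$.

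The only delicate point is the dimension count when $q$ is even, where $S^n-1$ has repeated factors. I will avoid that issue by using the direct recurrence description instead: $c$ is determined by $(c_0,c_1)$, and every choice is consistent exactly because $3\mid n$. That confirms the dimension is exactly $2$ with no appeal to factorization multiplicities.
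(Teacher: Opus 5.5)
Your proof is correct and follows essentially the same route as the paper. In both, the recurrence forces period three, so a kernel word is fixed by two bits, which gives the three weight-$2q$ patterns; reversal handles $A_q^T$; and $\operatorname{im}A_q=(\ker A_q^T)^\perp$, together with every coordinate being covered by some kernel word, excludes all $e_j$. The polynomial-ring detour is harmless but unnecessary: cancelling $a$ in $\mathbb F_2[S]$ works even with repeated factors, so your worry about even $q$ is unfounded. Moreover, $A_q^T=S^{-2}A_q$ shows the two kernels actually coincide.
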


\begin{proof}
Writing $c=(c_i)_{i\in\mathbb Z_n}$, the equation $hc=0$ is, up to a
cyclic reversal of the indices,
\begin{equation}
 c_i+c_{i-1}+c_{i-2}=0.
 \label{eq:sm-hgp-recurrence}
\end{equation}
It implies $c_i=c_{i-3}$.  Hence a kernel word is fixed by its first two
bits.  The three nonzero choices repeat one of the patterns $101$,
$011$, and $110$, and therefore have weight $2q$.  Reversing the cyclic
orientation proves the same statement for $A_q^T$.

For every coordinate $j$, one of the three nonzero words in $\ker A_q^T$
has a one at $j$.  Its inner product with $e_j$ is one, whereas it is
orthogonal to $\operatorname{im}A_q$. Thus
$e_j\notin\operatorname{im}A_q$. Interchanging $A_q$ and $A_q^T$ gives the
last assertion.
\end{proof}

Let $K=\ker A_q$ and
$C=\operatorname{coker}A_q:=V_0/\operatorname{im}A_q$.
The Kunneth map for
Eq.~\eqref{eq:sm-hgp-complex} is
\begin{equation}
 \begin{aligned}
 \Phi:H_1(Q)&\longrightarrow(K\otimes C)\oplus(C\otimes K),\\
 [(u,v)]&\longmapsto((I\otimes\pi)u,(\pi\otimes I)v),
 \end{aligned}
 \label{eq:sm-hgp-kunneth}
\end{equation}
where $\pi:V_0\to C$ is the quotient map.  For completeness, this map
is an isomorphism: choose splittings
$V_1=K\oplus L$ and
$V_0=\operatorname{im}A_q\oplus C_0$. The restriction
$A_q|_L:L\to\operatorname{im}A_q$ is an isomorphism, so all tensor-product
summands containing the pair $L\xrightarrow{A_q}\operatorname{im}A_q$
are contractible.  The two surviving degree-one summands are precisely
$K\otimes C_0$ and $C_0\otimes K$, which are identified with the target
of Eq.~\eqref{eq:sm-hgp-kunneth}.

\begin{smlemma}[Complete minimum-cycle classification]
\label{lem:sm-hgp-min-cycles}
Here a Pauli centralizer is an operator commuting with every stabilizer.
Every nontrivial $Z$-type centralizer of $\mathcal H_q$ has weight at
least $2q$.  Equality holds only for
\begin{equation}
 (c\otimes e_j,0)
 \quad\text{or}\quad
 (0,e_i\otimes c),
 \label{eq:sm-hgp-min-z}
\end{equation}
where $c$ is one of the three nonzero words of $\ker A_q$ and
$i,j\in\mathbb Z_n$.  Dually, every minimum $X$-type centralizer is
\begin{equation}
 (e_i\otimes c^\vee,0)
 \quad\text{or}\quad
 (0,c^\vee\otimes e_j),
 \label{eq:sm-hgp-min-x}
\end{equation}
where $c^\vee$ is one of the three nonzero words of $\ker A_q^T$.
Consequently there are exactly $18q$ minimum representatives of each
Calderbank--Shor--Steane (CSS) type.
\end{smlemma}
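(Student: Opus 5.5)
The plan is to prove the $Z$-type statement directly from the product structure and the seed lemma, and then obtain the $X$-type statement by the same argument on the transposed complex. A $Z$-type operator $(u,v)\in Q_1$ commutes with every $X$ stabilizer exactly when $\partial_1(u,v)=0$, that is, when
\begin{equation*}
(A_q\otimes I)u+(I\otimes A_q)v=0 .
\end{equation*}
It is logically nontrivial exactly when its class in $H_1(Q)$ is nonzero. By the Kunneth isomorphism $\Phi$ of Eq.~\eqref{eq:sm-hgp-kunneth}, this means that $(I\otimes\pi)u\neq0$ or $(\pi\otimes I)v\neq0$. By the symmetry exchanging the two blocks and tensor factors, I would treat only the case $(I\otimes\pi)u\neq0$.

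\emph{Lower bound.} I would regard $u$ as an $n\times n$ array $u_{i,j}$. Since $\operatorname{coker}A_q$ is dual to $\ker A_q^T$, there is a nonzero word $c^\vee\in\ker A_q^T$ that pairs nontrivially with $(I\otimes\pi)u$. Apply $I\otimes c^{\vee T}$ to the cycle equation. The $v$ term vanishes because $c^{\vee T}A_q=0$. What remains is that $w:=\sum_j c^\vee_j\,u_{\cdot,j}$ satisfies $A_qw=0$, and $w\neq0$ by the choice of $c^\vee$. Lemma~\ref{lem:sm-hgp-seed} then gives $|w|\geq2q$. Every row $i$ in the support of $w$ carries at least one nonzero entry $u_{i,j}$, and distinct rows give distinct qubits. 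Hence $|u|\geq2q$, and so $\operatorname{wt}(u,v)\geq2q$.

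\emph{Equality.} If $\operatorname{wt}(u,v)=2q$, the bound above forces $v=0$ and $|u|=2q$. With $v=0$ the cycle equation reduces to $(A_q\otimes I)u=0$, so every column of $u$ lies in $\ker A_q$. A nonzero column therefore has weight at least $2q$ by Lemma~\ref{lem:sm-hgp-seed}. Since $|u|=2q$, exactly one column $j$ is nonzero, and it is one of the three weight-$2q$ words $c$. This gives $u=c\otimes e_j$. Conversely, each such operator is a cycle. It is nontrivial because $\pi(e_j)\neq0$ by the last assertion of Lemma~\ref{lem:sm-hgp-seed}, so $\Phi$ sends it to $c\otimes\pi(e_j)\neq0$. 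The other block yields $(0,e_i\otimes c)$ in the same way.

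\emph{Counting and the $X$ side.} There are $3$ choices of $c$, $n=3q$ choices of position, and $2$ blocks. All of these operators are distinct, which gives $18q$. For $X$-type centralizers, the condition is $\partial_2^T(u,v)=0$, i.e., $(I\otimes A_q^T)u+(A_q^T\otimes I)v=0$. This is the same problem with $A_q$ replaced by $A_q^T$ and the tensor roles swapped, and Lemma~\ref{lem:sm-hgp-seed} holds verbatim for $A_q^T$. The main point requiring care is the duality step choosing $c^\vee$: one must check that a nonzero element of $K\otimes C$ is detected by some $c^\vee\in\ker A_q^T$ acting on the $C$ factor. This holds because $C^*\cong\ker A_q^T$.
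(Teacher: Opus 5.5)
Your proposal is correct and follows the paper's proof essentially step for step. A cokernel functional contracts $u$ to a nonzero word of $\ker A_q$, which gives $|u|\geq 2q$. Equality forces $v=0$ and a single nonzero column. Nontriviality follows from $e_j\notin\operatorname{im}A_q$, and the count is $3\cdot n\cdot 2=18q$. Your identification $C^*\cong\ker A_q^T$ just makes explicit the paper's ``linear functional on $C$'' step.
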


\begin{proof}
Let $(u,v)\in\ker\partial_1$ represent nonzero homology.  At least one
component of $\Phi[(u,v)]$ is nonzero.  Suppose first that
$(I\otimes\pi)u\ne0$.  A linear functional on $C$ can then be chosen so
that contraction in the second tensor factor gives a nonzero word
\begin{equation}
 c=(I\otimes\lambda)u\in\ker A_q.
\end{equation}
Every nonzero coordinate of $c$ requires a nonempty first-factor slice
of $u$.  Therefore
\begin{equation}
 2q\leq |c|\leq |u|\leq |u|+|v|.
 \label{eq:sm-hgp-slice-bound}
\end{equation}
The other Kunneth component gives the same bound with $u$ and $v$
interchanged.  This proves the lower bound.

If $|u|+|v|=2q$ in the first case, every inequality in
Eq.~\eqref{eq:sm-hgp-slice-bound} is an equality, so $v=0$.  The cycle
condition becomes $(A_q\otimes I)u=0$. Expanding
$u=\sum_j u_j\otimes e_j$ shows that every nonzero $u_j$ lies in
$\ker A_q$ and hence has weight at least $2q$. Since $|u|=2q$, exactly
one $u_j$ is nonzero and it is a minimum seed word.  This gives the
first form in Eq.~\eqref{eq:sm-hgp-min-z}; the second Kunneth component
gives the other form.  Lemma~\ref{lem:sm-hgp-seed} ensures that every
displayed coordinate vector is nonzero in the relevant cokernel, so
none of these operators is a boundary.

Applying the identical argument to the dual complex gives
Eq.~\eqref{eq:sm-hgp-min-x}.  There are three seed words, $n$ coordinate
choices, and two orientations, hence $6n=18q$ representatives of each
CSS type.
\end{proof}

\begin{smlemma}[Pauli purity at minimum distance]
\label{lem:sm-hgp-pauli-purity}
Every weight-$2q$ nontrivial Pauli centralizer of $\mathcal H_q$ is
purely $X$ or purely $Z$.  Hence the complete all-Pauli minimum spectrum
contains exactly $36q$ representatives and contains no $Y$ support.
\end{smlemma}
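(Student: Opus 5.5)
Write a nontrivial Pauli centralizer $P$ of weight $2q$ in symplectic form as $P=(P_X,P_Z)$, where $P_X$ is its $X/Y$ support chain and $P_Z$ its $Z/Y$ support chain. Because the code is CSS, $P$ commutes with every stabilizer exactly when $P_X$ is an $X$-type centralizer and $P_Z$ is a $Z$-type centralizer. Each component is therefore either a stabilizer (trivial class) or a nontrivial centralizer, and at least one is nontrivial. The weight satisfies
\[
\operatorname{wt}(P)=|P_X\cup P_Z|=|P_X|+|P_Z|-|P_X\cap P_Z| .
\]
The plan is to show that if both components are nonzero as vectors, this union exceeds $2q$.

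\emph{Case 1: both components nontrivial.} By Lemma~\ref{lem:sm-hgp-min-cycles}, each component has weight at least $2q$. Since $|P_X\cup P_Z|\ge\max(|P_X|,|P_Z|)$, equality $\operatorname{wt}(P)=2q$ forces $|P_X|=|P_Z|=2q$ and $P_X=P_Z$ as supports. So a single support set would have to be both a minimum $X$ form from Eq.~\eqref{eq:sm-hgp-min-x} and a minimum $Z$ form from Eq.~\eqref{eq:sm-hgp-min-z}. Compare the shapes directly:
\begin{itemize}
\item A $Z$ minimum $(c\otimes e_j,0)$ is a column segment in the first block.
\item The $X$ minima in the first block are $(e_i\otimes c^\vee,0)$, which are row segments.
\end{itemize}
A row meets a column in at most one site, and $2q\ge2$, so these cannot coincide. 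The second block is handled the same way. Cross-block coincidence is impossible because the supports lie in different blocks. Hence the $X$ and $Z$ components cannot share a support, and Case 1 yields weight greater than $2q$.

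\emph{Case 2: exactly one component nontrivial, the other a nonzero stabilizer.} Take $P_X$ nontrivial and $P_Z\ne0$ a $Z$ stabilizer, so $P_Z=\partial_2 w$ in the $Z$-type check space. I would bound
\[
|P_X\cup P_Z|\ge|P_X|+|P_Z\setminus P_X|
\]
and show $|P_Z\setminus P_X|\ge1$ whenever $|P_X|=2q$. More generally, I would show the total union exceeds $2q$ using the slice argument of Eq.~\eqref{eq:sm-hgp-slice-bound}. Contracting $P_X$ with a cokernel functional gives a seed word of weight $2q$, and each of its nonzero coordinates needs a nonempty slice.

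To finish, I plan to use the same cut argument on $P_X$ alone. If $\operatorname{wt}(P)=2q$, then $|P_X|\le 2q$, which forces $P_X$ to be a minimum $X$ form, say a row $e_i\otimes c^\vee$. Then $P_Z$ must be a nonzero element of $\operatorname{im}\partial_2$ supported inside that single row segment of the first block, with empty second-block part. But $\partial_2 w=((I\otimes A_q)w,(A_q\otimes I)w)$ with the second component zero means $(A_q\otimes I)w=0$. So $w$ has columns in $\ker A_q$ of weight $2q$, and $(I\otimes A_q)w$ must then be supported in one row. The final step checks that this is impossible unless $w=0$, using the recurrence structure of Lemma~\ref{lem:sm-hgp-seed}. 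I expect this support-confinement computation to be the main obstacle, because one must rule out cancellations in $(I\otimes A_q)w$. My first attempt would be to note that $w$'s row slices $w_r$ vanish on rows where the kernel words vanish. Since the three kernel words have period-3 patterns, at least two rows $r$ carry nonzero $A_q w_r$, and each $A_q w_r\neq0$ because $\ker A_q$ and the support patterns are compatible. This contradicts confinement to one row.

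\emph{Counting.} Once mixed forms are excluded, the minimum spectrum consists of the $18q$ pure $X$ and $18q$ pure $Z$ representatives from Lemma~\ref{lem:sm-hgp-min-cycles}. This gives $36q$ representatives in total, none with $Y$ support.
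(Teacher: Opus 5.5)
Your case split is exhaustive and Case~1 is correct. Equality forces $P_X=P_Z$ to be at once a minimum $X$ line and a minimum $Z$ line, and in each block these are transverse, meeting in at most one site.

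The gap is at the step you flag yourself in Case~2. The claims offered there do not hold as stated:
\begin{itemize}
\item The row slices $w_r$ need not vanish anywhere. If two columns of $w$ carry the $110$ and $011$ period patterns, no row of $w$ is zero.
\item Saying the ``support patterns are compatible'' does not show that $A_qw_r\neq0$ on two rows.
\item The target should be $\partial_2 w=0$, not $w=0$, since $\ker\partial_2=\ker A_q\otimes\ker A_q\neq0$.
\end{itemize}
The step can be repaired on your route. Every column of $w$ lies in $\ker A_q$, so the rows of $w$ depend only on $r\bmod 3$. Write $w^{(0)},w^{(1)},w^{(2)}$ for the three residue-class rows. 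They sum to zero, because each kernel word has even weight on every period. Hence if one $A_qw^{(\alpha)}$ is nonzero, at least two are. That gives nonzero rows of $(I\otimes A_q)w$ in two different residue classes, which cannot both be row $i$. This argument works for every $q\geq1$.

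The paper avoids both your case split and the preimage $w$. Whatever the logical class of $b$ (your $P_Z$), it is a $Z$ cycle: $\partial_1 b=0$, since stabilizers are cycles because $\partial_1\partial_2=0$. Once $a=(e_i\otimes c^\vee,0)$ is forced and $\operatorname{supp}b\subseteq\operatorname{supp}a$, we have $b=(e_i\otimes v,0)$. Then $\partial_1 b=A_qe_i\otimes v=0$, so $v=0$ because every column of $A_q$ is nonzero. The other orientation and the dual case $[b]\neq0$ go through identically. This single cycle-equation argument handles your Cases~1 and~2 at once. Your transversality picture and the periodicity computation are valid but unnecessary.
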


\begin{proof}
Write a Pauli modulo phase as $X^aZ^b$.  Centrality is equivalent to
$a\in\ker\partial_2^T$ and $b\in\ker\partial_1$.  If the Pauli has a
nontrivial logical label, at least one of $[a]$ and $[b]$ is nonzero.
Suppose $[a]\ne0$.  Lemma~\ref{lem:sm-hgp-min-cycles} and its dual give
$|a|\geq2q$.  If the Pauli weight is $2q$, then $a$ is one of the lines
in Eq.~\eqref{eq:sm-hgp-min-x} and
$\operatorname{supp}b\subseteq\operatorname{supp}a$.

For example, if $a=(e_i\otimes c^\vee,0)$, then
$b=(e_i\otimes v,0)$ for some $v$.  Its cycle equation reads
\begin{equation}
 (A_q\otimes I)(e_i\otimes v)=A_qe_i\otimes v=0.
\end{equation}
Every column $he_i$ is nonzero, so $v=0$.  The other orientation is
identical.  Thus $b=0$.  If instead $[b]\ne0$, the dual argument uses
the nonzero columns of $A_q^T$ and gives $a=0$. The count now follows
from Lemma~\ref{lem:sm-hgp-min-cycles}.
\end{proof}

The preceding lemmas also prove directly that
$\mathcal H_q$ has parameters $[[18q^2,8,2q]]$: there are $2n^2$
qubits, Eq.~\eqref{eq:sm-hgp-kunneth} and its dual give
$\dim H_1=2(\dim K)(\dim C)=8$, and the displayed operators attain the
distance lower bound.

\begin{smtheorem}[All-$q$ HGP half-distance rigidity]
\label{thm:sm-hgp-all-q}
For every $q\geq1$ and every interior memoryless Pauli direction,
the strict MP--MLD disagreement valuation of $\mathcal H_q$ obeys
\begin{equation}
 \boxed{m(\mathcal H_q)\geq q+1=\frac{d_q}{2}+1.}
 \label{eq:sm-hgp-delay}
\end{equation}
Equivalently, no strict decision split is possible at or below half
distance.
\end{smtheorem}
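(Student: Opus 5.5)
The plan is to use Theorem~\ref{thm:sm-lower} to exclude every layer below $h=q$, so only the half-distance layer $w=q$ remains to be shown entropy neutral. Fix a syndrome $s$ with $w_s=q$. If only one logical class contains weight-$q$ representatives, MP and MLD agree. Otherwise, Proposition~\ref{prop:sm-localization} with $d=2q$ even shows that any two weight-$q$ representatives $E,F$ in different classes have a product $EF$ that is a nontrivial logical of weight exactly $2q$. By Lemma~\ref{lem:sm-hgp-pauli-purity}, this product is purely $X$ or purely $Z$ and belongs to the explicit list of $36q$ minimum operators in Lemma~\ref{lem:sm-hgp-min-cycles}. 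Since $\operatorname{wt}(E)+\operatorname{wt}(F)=\operatorname{wt}(EF)$, the supports of $E$ and $F$ are disjoint and together partition $\operatorname{supp}(EF)$. Purity then forces $E$ and $F$ to be single-species operators of the same species, say $E=P^{\otimes}$ on $S$ and $F=P^{\otimes}$ on $\operatorname{supp}(EF)\setminus S$, each of weight $q$. Consequently they have identical composition monomial $p^q$, where $p\in\{x,z\}$.

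Next I would make this matching global over the fiber. Fix a weight-$q$ representative $F$ in one class. Every weight-$q$ representative of any other class has the form $F\cdot L$ with $L$ a pure minimum logical. By the previous paragraph, all of these share $F$'s composition. Applying the same argument with the roles of the classes exchanged, every weight-$q$ representative in $F$'s own class also has this composition, provided at least one competing class exists. Hence every vertex in the fiber has the same monomial $\mu=x^q$ or $z^q$. It follows that $T_{s,L}=\mu$ for every eligible class $L$, so every eligible class is an MP winner. The MP winner set therefore contains all classes with a weight-$q$ representative.

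The lexicographic MLD winners are always among the classes attaining the maximal leading coefficient, and these are eligible classes. By Theorem~\ref{thm:sm-spectrum}, the eventual MLD winner set is therefore nonempty and contained in the MP winner set. The two sets thus intersect no matter how higher coefficients break ties. This certifies the layer $q$ as entropy neutral, and combining it with Theorem~\ref{thm:sm-lower} gives $m\geq q+1$.

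The main obstacle is the step showing that $E$ and $F$ are pure and of the same species. Purity of $EF$ by itself only restricts the product. The key point is the disjointness of supports: on each site exactly one of $E,F$ acts, and its Pauli equals that of $EF$ there, so no $Y$ can appear and no mixed cancellation is possible. This argument needs $\operatorname{wt}(EF)=2q$ exactly, which the even-distance case of Proposition~\ref{prop:sm-localization} supplies.
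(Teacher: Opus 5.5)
Your proposal is correct and follows essentially the same route as the paper. In both, the equality chain $2q\le|EF|\le|E|+|F|\le 2q$ together with the Pauli-purity lemma forces $E$ and $F$ to be disjoint complementary halves of one pure $X$ or $Z$ minimum line. Propagating this through the fiber then gives a single composition $x^q$ or $z^q$, so all represented classes are MP-tied and every lexicographic MLD winner lies inside that tie set.
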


\begin{proof}
The universal distance obstruction excludes weights below $q$.  It
remains to consider a syndrome with representatives of weight at most
$q$ in more than one logical class.  If $E$ and $F$ are representatives
in two such classes, then $EF$ is a nontrivial centralizer and
\begin{equation}
 2q=d_q\leq |EF|\leq |E|+|F|\leq2q.
 \label{eq:sm-hgp-balanced-equality}
\end{equation}
All inequalities are equalities.  In particular $|E|=|F|=q$, their
supports are disjoint, and Lemma~\ref{lem:sm-hgp-pauli-purity} makes
$EF$ a pure minimum logical.  Thus $E$ and $F$ are complementary
$q$-subsets of one pure logical line and have the same Pauli
composition: either $(q,0,0)$ or $(0,0,q)$.

Fixing one representative in each of two classes and repeating the
same argument with every other weight-$q$ representative in the fiber
shows that all of them have the same pure composition.  Hence every
represented logical class has the same individual peak monomial,
$x^q$ or $z^q$.  MP therefore ties all represented classes.  Degenerate
MLD may select a class with larger multiplicity, but every such class
remains in the MP-optimal set.  The two winner sets cannot be disjoint,
independently of tie breaking and of higher-weight coefficients.  The
half-distance layer consequently has no strict split, proving
Eq.~\eqref{eq:sm-hgp-delay}.
\end{proof}

As a regression check only, exhaustive enumeration for $q=1,2,3$
returns $36q$ minimum Pauli centralizers and
$36q\binom{2q}{q}$ ordered half factorizations, with no uncertified
half-distance fiber.  These finite data are not used in the proof of
Theorem~\ref{thm:sm-hgp-all-q}.


\subsection{The excess-two layer of the separable HGP/BB family}
\label{sec:sm-hgp-excess-two}

Let $\mathcal H_q=[[18q^2,8,2q]]$ be the family defined in
Sec.~\ref{sec:sm-bb-hgp-all-q}, and write $d=2q$.  A competition between
two different logical classes at physical weight $q+1$ is generated by a
nontrivial centralizer $U$ of weight at most $2q+2$.

We first sharpen the seed-code facts used below.  For
$a\in\mathbb F_2^{3q}$, let $p_\alpha(a)$ be the parity of its coordinates in
the residue class $\alpha$ modulo three.  Orthogonality to the two-dimensional
kernel of $A_q^T$ gives
\begin{equation}
 a\in\operatorname{im}A_q
 \quad\Longleftrightarrow\quad
 p_0(a)=p_1(a)=p_2(a).
 \label{eq:sm-hgp-image-parities}
\end{equation}
The reverse implication follows because both sides have dimension $3q-2$.
In particular, no singleton lies in $\operatorname{im}A_q$, and a weight-two
vector lies in $\operatorname{im}A_q$ precisely when its two positions have
the same residue modulo three.  For $q\ge2$, every column $he_j$ has three
distinct entries and two different columns of $A_q$ have different supports.

\begin{smlemma}[Four-step CSS systolic gap]
\label{lem:sm-hgp-css-four-step-gap}
For $q\ge2$, every nontrivial $Z$-type centralizer of $\mathcal H_q$ with
weight at most $2q+2$ is one of the weight-$2q$ lines in
Eq.~\eqref{eq:sm-hgp-min-z}.  The dual statement holds for $X$-type
centralizers.
\end{smlemma}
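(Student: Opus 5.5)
We argue by slicing the cycle equation column by column, in the same spirit as the proof of Lemma~\ref{lem:sm-hgp-min-cycles}. Let $(u,v)\in\ker\partial_1$ represent a nontrivial $Z$-type class with $|u|+|v|\le 2q+2$. By the symmetry of the product complex it suffices to treat the case $(I\otimes\pi)u\ne0$; the other Kunneth component is the same argument with the roles of $u,v$ and of the two tensor factors exchanged. The $X$-type statement then follows by running the whole argument on the dual complex, with $A_q^T$ in place of $A_q$; this uses only the transposed versions of Lemma~\ref{lem:sm-hgp-seed} and Eq.~\eqref{eq:sm-hgp-image-parities}. By Eq.~\eqref{eq:sm-hgp-slice-bound} we have $|u|\ge 2q$, so $|v|\le 2$ and $|u|\le 2q+2-|v|$.

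Write $u=\sum_j u_j\otimes e_j$ and $v=\sum_i e_i\otimes f_i$. The cycle condition, read in column $j$, becomes
\begin{equation*}
 A_q u_j=\sum_i (A_q f_i)_j\, e_i ,
\end{equation*}
so every $A_qu_j$ has weight at most $|v|\le2$ and lies in $\operatorname{im}A_q$. By Eq.~\eqref{eq:sm-hgp-image-parities} it is therefore zero or a weight-two vector supported on two rows of equal residue modulo three. We now split on $|v|$.

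\emph{Case $|v|=0$.} Every $u_j$ lies in $\ker A_q$, and each nonzero $u_j$ has weight exactly $2q$ by Lemma~\ref{lem:sm-hgp-seed}. Two nonzero columns would cost $4q>2q+2$ for $q\ge2$, so exactly one column is nonzero. This gives the line $c\otimes e_j$.

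\emph{Case $|v|=1$.} Here $v=e_i\otimes e_k$, so for each $j$ in the three-element support of $A_qe_k$ the vector $A_qu_j$ is the singleton $e_i$. This is impossible, since no singleton lies in $\operatorname{im}A_q$.

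\emph{Case $|v|=2$ with both entries in one row.} Here $v=e_i\otimes(e_k+e_{k'})$ with $k\ne k'$. Since distinct columns of $A_q$ have different supports, $A_q(e_k+e_{k'})\ne0$. Any $j$ in its support again forces the singleton $e_i\in\operatorname{im}A_q$, which is impossible.

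\emph{Case $|v|=2$ with entries in two rows.} Here $v=e_i\otimes e_k+e_{i'}\otimes e_{k'}$ with $i\ne i'$. Any $j$ in the symmetric difference of $\operatorname{supp}A_qe_k$ and $\operatorname{supp}A_qe_{k'}$ forces a singleton. Distinct columns have distinct supports, so this symmetric difference is empty only if $k=k'$. We are left with $v=(e_i+e_{i'})\otimes e_k$ and $i\equiv i'\pmod 3$.

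This residual configuration is the step I expect to be the main obstacle. In it, the three columns $j\in\operatorname{supp}A_qe_k$ each carry a nonzero $u_j$ with $A_qu_j=e_i+e_{i'}$, and the remaining columns lie in $\ker A_q$. The budget is now $|u|\le 2q$, and the slice bound gives $|u|\ge2q$, so $|u|=2q$ exactly and the projection $c=(I\otimes\lambda)u$ must equal its slice bound. That is, every row of $u$ meeting $\operatorname{supp}c$ must carry exactly one entry, and no row outside $\operatorname{supp}c$ may be occupied.

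I would first try to show that this equality pattern is incompatible with three distinct occupied columns whose entries solve $A_qx=e_i+e_{i'}$. Counting shows that such columns contribute at least $3\cdot 2$ entries, and these entries lie on rows $\{a,a+3,\dots\}$. Moreover, by the recurrence in Lemma~\ref{lem:sm-hgp-seed}, nonhomogeneous solutions of $A_qx=e_i+e_{i'}$ are supported on a single residue interval. These facts force repeated rows or rows outside $\operatorname{supp}c$, which the equality pattern forbids.

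If this direct count fails, the fallback is to add the boundary $\partial_2(x\otimes e_k)=(x\otimes A_qe_k,\,A_qx\otimes e_k)$, with $A_qx=e_i+e_{i'}$. This kills $v$ without changing the class. I would then compare the resulting $v=0$ cycle, whose columns are seed words or zero, with the original, and show that $|u|=2q$ cannot be recovered.
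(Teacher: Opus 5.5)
Your reductions up to the last case are correct and match the paper's. The slice bound forces $|v|\le2$, the case $v=0$ yields a single seed-word column, and the singleton-slice argument leaves only $v=(e_i+e_{i'})\otimes e_k$ with $i\equiv i'\pmod 3$. The gap is this residual case, which you do not close.

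Your first sketch relies on support claims that are false. Solutions of $A_qx=e_i+e_{i'}$ are confined neither to the residue class of $i$ nor to an arc:
\begin{itemize}
\item Since $A_q(I+S)=I+S^3$, the vector $e_a+Se_a$ solves $A_qx=e_a+S^3e_a$, yet it meets two residue classes.
\item Adding to a minimal solution the seed word that contains the third residue class gives a solution of weight $2q$ meeting all three classes.
\end{itemize}
So ``repeated rows or rows outside $\operatorname{supp}c$'' does not follow from the facts you list. The fallback, ``show that $|u|=2q$ cannot be recovered,'' restates the goal rather than proving it.

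The missing ingredient is the affine structure of $\mathcal A=\{x:A_qx=e_i+e_{i'}\}$. It is a coset of the four-element $\ker A_q$. Hence two distinct solutions differ by a nonzero seed word and satisfy $|x|+|x'|\ge|x+x'|=2q$, and every solution has weight at least two.

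The paper uses this bound in three steps:
\begin{itemize}
\item A nonzero unaffected column adds another $2q$, which exceeds the budget.
\item If the three affected columns are not all equal, they already weigh at least $2q+2$.
\item If all three equal some $x$ and the other columns vanish, then $(u,v)=\partial_2(x\otimes e_k)$ is a boundary, contradicting nontriviality. This is exactly the boundary your fallback introduces.
\end{itemize}

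Once you have this bound, your own equality pattern also finishes the proof, slightly more directly. The condition $|u|=2q$ forces each row of $u$ to carry at most one entry. So the three nonzero affected columns have pairwise disjoint supports and are pairwise distinct elements of $\mathcal A$. By disjointness, two of them weigh exactly $2q$, and the third adds at least two, contradicting $|u|=2q$. The boundary configuration is excluded automatically, because three equal nonzero columns would share rows.

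Without the coset weight bound, neither of your routes closes.
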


\begin{proof}
Let $(u,v)\in\ker\partial_1$ have nonzero first Kunneth component.  The
slice bound in Eq.~\eqref{eq:sm-hgp-slice-bound} gives $|u|\ge2q$; hence a
cycle of total weight at most $2q+2$ has $|v|\le2$.  Expand both blocks in
second-factor columns. If $v=0$, every column of $u$ belongs to $\ker A_q$.
A nonzero column has weight $2q$, and two such columns already have weight
$4q>2q+2$.  Thus there is exactly one nonzero column and the cycle is a
minimum line.

A weight-one $v$ is impossible: $(I\otimes A_q)v$ has three columns whose
first-factor slices are singletons, whereas every column of
$(A_q\otimes I)u$ lies in $\operatorname{im}A_q$, contradicting
Eq.~\eqref{eq:sm-hgp-image-parities}.  Now let $|v|=2$.  If the two bits of
$v$ have different second coordinates, the two distinct three-column
supports produced by $I\otimes A_q$ have a column occupied by only one bit,
giving the same singleton contradiction.  The bits must therefore have a
common second coordinate and distinct first coordinates $a,a'$ in the same
residue class.  Each of the three affected columns of $u$ then solves
\begin{equation}
 A_q u_j=e_a+e_{a'}.
 \label{eq:sm-hgp-affine-column}
\end{equation}
Let $\mathcal A$ be this four-element affine solution set, and let
$w_1\le w_2\le w_3\le w_4$ be its ordered weights.  The syndrome on the
right of Eq.~\eqref{eq:sm-hgp-affine-column} has weight two, so $w_1\ge2$.
Any two distinct elements of $\mathcal A$ differ by a nonzero word of
$\ker A_q$ and consequently
\begin{equation}
 w_1+w_2\ge2q.
 \label{eq:sm-hgp-affine-pair}
\end{equation}

If the three affected columns of $u$ are equal and every unaffected column
vanishes, then $u=a\otimes he_j$ and $v=ha\otimes e_j$ for some
$a\in\mathcal A$; the pair is exactly the boundary
$\partial_2(a\otimes e_j)$.  If an unaffected column is nonzero, it is a
kernel word of weight at least $2q$, while each of the three affected
columns has weight at least two.  Including $|v|=2$ gives total weight at
least $2q+8$, a contradiction.  We may therefore assume that all unaffected
columns vanish.  If the three affected columns are not all equal, they have
total
weight at least
\begin{equation}
 2w_1+w_2\ge2q+w_1\ge2q+2.
\end{equation}
Adding the two bits of $v$ gives total weight at least $2q+4$, a
contradiction.  This proves the claim when the first Kunneth component is
nonzero.  Interchanging the tensor factors treats the second component,
and dualizing proves the $X$ statement.
\end{proof}

\begin{smlemma}[Pauli four-step gap]
\label{lem:sm-hgp-pauli-four-step-gap}
For $q\ge2$, every nontrivial all-Pauli centralizer of $\mathcal H_q$ with
weight at most $2q+2$ is a pure-$X$ or pure-$Z$ minimum logical line.
Equivalently, there are no nontrivial centralizer representatives of weight
$2q+1$ or $2q+2$.
\end{smlemma}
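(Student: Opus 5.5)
I will reduce the statement to the CSS gap of Lemma~\ref{lem:sm-hgp-css-four-step-gap} and then rule out mixed supports, in the same way that Lemma~\ref{lem:sm-hgp-pauli-purity} was obtained from Lemma~\ref{lem:sm-hgp-min-cycles}.

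\emph{Reduction to one CSS component.} Write a nontrivial centralizer as $X^aZ^b$ with $a\in\ker\partial_2^T$, $b\in\ker\partial_1$, and weight $|\operatorname{supp}a\cup\operatorname{supp}b|\le 2q+2$. At least one of $[a],[b]$ is nonzero. Suppose $[a]\neq0$; the case $[b]\neq0$ is dual. Since $|a|$ is at most the Pauli weight, we have $|a|\le2q+2$. The dual statement of Lemma~\ref{lem:sm-hgp-css-four-step-gap} then makes $a$ a minimum line, say $a=(e_i\otimes c^\vee,0)$ or its other orientation, of weight $2q$.

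\emph{Analysis of the $Z$ part.} The budget now leaves at most two positions of $b$ outside $\operatorname{supp}a$. I split into two cases.

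In the first case $[b]\neq0$. By the same lemma $b$ is also a minimum $Z$ line, of weight $2q$. Its union with $a$ has weight at least $4q-|\operatorname{supp}a\cap\operatorname{supp}b|$. An $X$ line $e_i\otimes c^\vee$ lies in the first block along a fixed first coordinate $i$. A $Z$ line in the first block, $c\otimes e_j$, fixes the second coordinate, so the two lines meet in at most one position. A $Z$ line in the second block is disjoint from $a$. Hence the weight is at least $4q-1>2q+2$ for $q\ge2$, which is a contradiction.

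In the second case $[b]=0$, so $b$ is a $Z$ stabilizer, $b\in\operatorname{im}\partial_2^T$, with $\operatorname{supp}b\subseteq\operatorname{supp}a\cup R$ for a set $R$ of at most two extra qubits. The claim is that $b=0$. If $b=0$, then $X^aZ^b$ is pure $X$ and the lemma follows. If $b\neq0$, then $b$ is a nonzero $Z$ stabilizer, so the whole operator is equivalent to the pure line $X^a$. That would contradict the "equivalently" clause, which speaks of representatives, so I must genuinely show $b=0$.

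\emph{Main obstacle: showing $b=0$.} I expect the weight bound on nonzero $Z$ stabilizers confined to such a set to be the hard step. First I write $b=\partial_2^T w$ for $w\in V_1\otimes V_1$. Each nonzero $w$ produces supports in both blocks through $A_q^T\otimes I$ and $I\otimes A_q^T$.

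Next I use that $\operatorname{supp}a$ lies in one block, the first, within the single slice $\{i\}\times\operatorname{supp}c^\vee$. So the second-block part of $b$, namely $(A_q^T\otimes I)w$ up to the ordering convention, must be supported on at most two qubits. Every nonzero column or row image under $A_q^T$ has weight at least three: its columns have three distinct entries for $q\ge2$, and the image-parity characterization \eqref{eq:sm-hgp-image-parities} applies. From this I would argue slice by slice that any slice of $w$ of weight at most two forces the corresponding image to have weight at least two or three outside the allowed set. The cleanest route I would try first is to contract with the kernel words of $A_q$, as in the slice bound \eqref{eq:sm-hgp-slice-bound}: a nonzero $w$ forces $|(A_q^T\otimes I)w|\ge3$ in some row unless $w$ is itself large, and large $w$ in turn forces a large first-block image.

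If this fails, the fallback is the direct commutation argument used in the proof of Lemma~\ref{lem:sm-hgp-pauli-purity}. The constraint $\partial_1 b=0$, applied to a $b$ supported on the line plus two qubits, should give the column equation $A_q e_i\otimes v=$ (a correction of weight $\le2$). The nonzero three-entry columns of $A_q$ together with the singleton and weight-two image criteria then force $v=0$ and the correction to vanish, which gives $b=0$.

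Finally, the "equivalently" clause follows: all surviving operators have weight exactly $2q$, so no nontrivial centralizer has weight $2q+1$ or $2q+2$.
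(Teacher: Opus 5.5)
\textbf{There is a genuine gap: the case $[b]=0$ is left as a plan, and that case is the real content of the lemma.}

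Your reduction to the CSS gap is correct. So is your case $[b]\neq0$: the two opposite-type lines meet in at most one site, so the union has weight at least $4q-1>2q+2$. The problem is the case $[b]=0$, where you need to show $b=0$.

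Your primary route there rests on a false claim. It is not true that every nonzero image under $A_q$ has weight at least three; only the individual columns $A_qe_j$ do. General images can have weight two. For instance, $A_q(e_0+e_1)$ has weight two, and by Eq.~\eqref{eq:sm-hgp-image-parities}, $e_a+e_{a'}\in\operatorname{im}A_q$ whenever $a\equiv a'\pmod 3$. So a small nonzero preimage can give a second-block part of weight exactly two. Your dichotomy (``weight $\ge3$ in some row unless $w$ is large'') therefore fails, and the weight-two case is exactly the one that must be excluded by looking at the other block. Two smaller points: $Z$ stabilizers form $\operatorname{im}\partial_2$, not $\operatorname{im}\partial_2^T$; and passing through a preimage $w$, which is defined only modulo $\ker A_q\otimes\ker A_q$, is an avoidable complication. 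Your fallback names the right tools but does not carry them out. Its ``correction of weight $\le2$'' is also mis-sized, since each of the two outside sites contributes three check entries.

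The missing step is short, and it works for any $Z$ cycle, so it also makes your separate case $[b]\neq0$ unnecessary. Take $\Lambda=(e_i\otimes c^\vee,0)$ and write $b=(u,v)$ by blocks. All of $v$ lies outside $\Lambda$, so $|v|\le2$. The cycle condition says that column $j$ of $(I\otimes A_q)v$ equals $A_qu_j\in\operatorname{im}A_q$, and $\operatorname{im}A_q$ contains no singleton.
\begin{itemize}
\item If $|v|=1$, the three affected columns are singletons, which is impossible.
\item If $|v|=2$ with distinct second coordinates, the two distinct three-column supports leave a singleton column, again impossible.
\item If the two sites of $v$ share a second coordinate, three columns must satisfy $A_qu_j=e_a+e_{a'}$. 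But $v$ has used the entire outside budget, so $u_j\in\{0,e_i\}$, and $A_qu_j$ has weight $0$ or $3$, never $2$. This is a contradiction.
\item If $v=0$, each nonzero column $u_j$ lies in $\ker A_q$ and so has weight $2q$. It meets $\Lambda$ in at most the site $(i,j)$, leaving at least $2q-1\ge3$ outside sites, so $u=0$.
\end{itemize}
Hence $b=0$. The other orientation of $\Lambda$ follows by swapping tensor factors, and the case where $b$ carries the nonzero label follows by exchanging $X$ and $Z$.
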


\begin{proof}
Write the phase-free Pauli as $P=X^aZ^b$.  Because the code is CSS,
commutation with the two stabilizer types implies separately that $a$ is an
$X$ cycle and $b$ is a $Z$ cycle.  Their logical labels form a direct sum,
so at least one is nonzero; suppose it is the label of $a$.  Since
$|a|\leq|\operatorname{supp}(a)\cup\operatorname{supp}(b)|=|P|\leq2q+2$,
Lemma~\ref{lem:sm-hgp-css-four-step-gap} makes $a$ a minimum $X$ line
$\Lambda$ of weight $2q$.  Consequently $b$ has at most two occupied sites
outside $\Lambda$.

We now prove that any $Z$ cycle with this property is zero.  Take first
$\Lambda=(e_i\otimes c^\vee,0)$ and write the $Z$ cycle as $(u,v)$ in the
two HGP qubit blocks.  The whole $v$ block lies outside $\Lambda$, hence
$|v|\leq2$.  If $v=0$, each nonzero second-factor column of $u$ belongs to
$\ker A_q$, has weight $2q$, and intersects $\Lambda$ in at most one site.
It would therefore have at least $2q-1\geq3$ sites outside $\Lambda$, which
is impossible; thus $u=0$.

If $|v|=1$, one of the three column slices of $(I\otimes A_q)v$ is a
singleton. Every slice of $(A_q\otimes I)u$ lies in
$\operatorname{im}A_q$, whereas Eq.~\eqref{eq:sm-hgp-image-parities}
excludes a singleton from $\operatorname{im}A_q$.
Hence this case is impossible.  Suppose finally that $|v|=2$.  If its two
sites have different second coordinates, the two distinct three-term
column supports of $A_q$ leave at least one singleton slice, giving the same
contradiction.  The sites must therefore have one common second coordinate,
and their first coordinates $a,a'$ have the same residue modulo three.
Exactly three columns are then affected, and each obeys
$A_q u_j=e_a+e_{a'}$. The two sites of $v$ already exhaust the outside-site
budget, so every $u_j$ is supported on $\Lambda$ and is either $0$ or the
singleton $e_i$.  Its image has weight zero or three, never two, contradicting
the displayed equation.  Thus $b=0$.

Interchanging the tensor factors treats the other orientation of $\Lambda$,
and exchanging $X$ and $Z$ treats the case in which $b$ has nonzero logical
label.  Therefore every nontrivial Pauli centralizer through weight $2q+2$
is one of the pure minimum logical lines.
\end{proof}

\begin{smtheorem}[Excess-two balance]
\label{lem:sm-hgp-excess-two-balance}
For every $q\ge2$, no strict MP--MLD split occurs at physical weight $q+1$,
for any interior memoryless Pauli direction.  Consequently
\begin{equation}
 m(\mathcal H_q)\ge q+2.
 \label{eq:sm-hgp-conditional-lower}
\end{equation}
\end{smtheorem}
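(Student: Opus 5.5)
Fix a syndrome $s$ with $w_s=q+1$ in which at least two logical classes carry weight-$(q+1)$ representatives; if only one class does, MP and MLD both select it and there is nothing to prove. For any $E,F$ of weight $q+1$ in different classes, $EF$ is a nontrivial centralizer of weight at most $2q+2$. By Lemma~\ref{lem:sm-hgp-pauli-four-step-gap}, $EF$ is then a pure-$X$ or pure-$Z$ minimum line $\Lambda$ of weight exactly $2q$. Counting excess as in the toric Step~2, $2(q+1)=\operatorname{wt}(E)+\operatorname{wt}(F)=2q+q'+2u$. Here $q'$ counts sites of $\Lambda$ occupied by both endpoints, which carry different Pauli labels there. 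The quantity $u$ counts common sites off $\Lambda$, which carry equal labels there. Hence $q'+2u=2$, and there are two local shapes.

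In shape (i), $q'=0$ and $u=1$. The endpoints partition $\Lambda$ into complementary halves and share one extra site with a common Pauli label $\sigma$. In shape (ii), $q'=2$ and $u=0$. Both endpoints lie inside $\operatorname{supp}\Lambda$ and overlap at two sites, where their labels multiply to the pure type of $\Lambda$.

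The plan is to show that in both shapes $E$ and $F$ have identical composition up to the placement of labels, which gives matched peak monomials.

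\emph{Shape (i).} Off-line, both endpoints carry the common $\sigma$. On $\Lambda$ they carry the pure label $P\in\{X,Z\}$ on complementary sets of sizes $k$ and $2q-k$, so the compositions are $P^k\sigma$ and $P^{2q-k}\sigma$. Both endpoints have weight $q+1$, so $k=q$ and the compositions agree.

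\emph{Shape (ii).} Take $\Lambda$ pure-$Z$. At each of the two overlap sites the labels are a pair from $\{(X,Y),(Y,X),(I,Z)\}$; the pair $(I,Z)$ is excluded because it is not an overlap. Each endpoint therefore has one $X$ or $Y$ at each overlap site and $Z$ elsewhere. The weight constraint forces $E$ and $F$ to have $q-1$ private $Z$ sites each, so their compositions differ only through how $\{X,Y\}$ is distributed on the two overlap sites. Shape (ii) is the real danger: $E$ could be $XX\,Z^{q-1}$ while $F$ is $YY\,Z^{q-1}$.

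To exclude shape (ii) I would use the syndrome. $E$ restricted to $\Lambda$ differs from a $Z$-only operator by $X$ components on two sites, so $E$ has a nonzero $X$ part $a$ of weight two. The $X$ part of $F$ equals $a$, and the $Z$ parts of $E$ and $F$ differ by $\Lambda$. The obstruction I expect to prove is that the $X$-syndrome of a weight-two $X$ error supported on a minimum $Z$ line is never matched by a weight-$\le q+1$ configuration of the required type. Failing that, I would aim for the weaker conclusion that MP ties are unaffected, namely $\sigma$-pairing symmetry: the map $E\mapsto E\Lambda$ acts on the fiber as an involution that exchanges $X$ and $Y$ labels at the two overlap sites. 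I expect excluding shape (ii), or neutralizing it through this pairing, to be the main obstacle.

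Given composition matching across classes, I would finish as in Theorem~\ref{thm:sm-hgp-all-q}. Fix one representative in each of two classes and compare every other weight-$(q+1)$ representative with a representative of a different class. This shows that all represented classes share the same peak monomial, so MP ties them all and every lexicographic MLD winner lies in that tie. Hence the layer $q+1$ is neutral, and together with Theorem~\ref{thm:sm-hgp-all-q} this gives $m\ge q+2$.
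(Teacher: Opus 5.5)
Your reduction to the two shapes via Lemma~\ref{lem:sm-hgp-pauli-four-step-gap} and your treatment of shape (i) agree with the paper. Shape (ii), however, is a genuine gap, and neither of your fallbacks can close it, because in shape (ii) the compositions genuinely differ and the configuration is not excluded by the syndrome.

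Already for $q=2$ ($n=6$, coordinates of Sec.~\ref{sec:sm-hgp-uniform-gadget}), let $\Lambda$ be the pure-$Z$ line $u_{01}$ on $R(i,5)$, $i\in\{0,1,3,4\}$. Put $E=X_{R(0,5)}X_{R(1,5)}Z_{R(3,5)}$ and $F=E\Lambda$, which up to phase is $Y_{R(0,5)}Y_{R(1,5)}Z_{R(4,5)}$.

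The $X$ part of $E$ flips six $Z$ checks. A one- or three-site $X$ part flips an odd number of checks, and the two-site decomposition of those six checks is unique. Hence every representative of weight at most three has exactly this $X$ part. Solving the $X$-check equations for the $Z$ part then leaves only $E$ and $F$. The minimum fiber is therefore $\{E,F\}$ at weight $3=q+1$, with compositions $(2,0,1)$ and $(0,2,1)$.

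So the composition matching across classes on which your final step rests is false at this layer. The map $E\mapsto E\Lambda$ is just this edge itself: it changes the composition and certifies nothing.

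The missing idea is that a shape-(ii) edge isolates its endpoints. Neutrality then comes from singleton classes, not from matched compositions. The paper proves three facts:
\begin{enumerate}
\item Distinct minimum lines of one Pauli type meet in at most $q$ sites. So a weight-$(q+1)$ vertex lies inside at most one such line, and its shape-(ii) partner is unique.
\item Relative to a pure-$Z$ product, a shape-(ii) vertex has an $X$ component on two sites. A vertex of a shape-(i) edge has one on at most its single off-line site. So no vertex carries both kinds of edge of the same CSS type.
\item If $E$ had a pure-$X$ product edge to $F$ and a pure-$Z$ product edge to $G$, then $F^\dagger G=U_XU_Z$ would be a nontrivial, non-pure centralizer of weight at most $2q+2$. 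This contradicts Lemma~\ref{lem:sm-hgp-pauli-four-step-gap}.
\end{enumerate}
Hence both endpoints of a shape-(ii) edge have degree one in the complete multipartite competition graph. This forces the fiber to be exactly that pair, with two singleton classes. The singleton clause of Theorem~\ref{thm:sm-spectrum} then yields a common MP--MLD winner. On the tie locus, every lexicographic MLD winner stays inside the MP tie set.

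Only when no shape-(ii) edge is present does your composition-matching and connectivity argument finish the proof.
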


\begin{proof}
Fix a syndrome whose minimum-weight fiber $\mathcal V_s$ occurs at weight
$q+1$ and contains at least two logical classes.  Define its competition
graph $G_s$ to have vertex set $\mathcal V_s$, joining two vertices exactly
when their logical labels differ.  Thus $G_s$ is complete multipartite, with
its parts equal to the logical classes represented in the fiber; in
particular, $G_s$ is connected.

For any edge $E--F$, the product $U=E^\dagger F$ is a nontrivial
centralizer.  Lemma~\ref{lem:sm-hgp-pauli-four-step-gap} makes $U$ a pure-$X$
or pure-$Z$ minimum logical line of weight $2q$.  The joint excess is
\begin{equation}
 |E|+|F|-|U|=2.
 \label{eq:sm-hgp-excess-two}
\end{equation}
On $\operatorname{supp}(U)$, let $c$ count sites on which both factors are nonidentity;
outside $\operatorname{supp}(U)$, the identity product forces the factors to agree, and
let $o$ count their common nonidentity sites.  Summing the four possible
single-site occupancy patterns gives the exact phase-free Pauli identity
\begin{equation}
 c+2o=2,
 \label{eq:sm-hgp-cr-two}
\end{equation}
so there are only two cases.

If $(c,o)=(0,1)$, each factor occupies one common outside site and $q$ sites
of $\operatorname{supp}(U)$.  On the latter sites exactly one factor is occupied and must
carry the Pauli type of the pure line $U$; on the former site both factors
carry the same Pauli.  Hence $E$ and $F$ have identical Pauli compositions.

If $(c,o)=(2,0)$ and, without loss of generality, $U$ is pure $X$, each
factor is supported inside $U$: it has $q-1$ sites carrying $X$ and two
overlapping twisted sites carrying $Y$ or $Z$, with the two local Paulis
multiplying to $X$.  Distinct
minimum pure-$X$ lines in $\mathcal H_q$ intersect in at most $q$ sites.
Indeed, parallel lines at different transverse coordinates are disjoint,
while two distinct period-three seed words on the same line share exactly
one of the three residue classes, of size $q$; lines in the two HGP qubit
blocks are disjoint.  Hence a weight-$(q+1)$ vertex cannot be contained in
two distinct minimum pure-$X$ lines.  For a fixed line, multiplication by
that line uniquely determines the opposite endpoint.  The dual statement
holds for pure $Z$ lines.

Such a vertex cannot also participate in a $(0,1)$ edge of the same CSS
type.  Relative to a pure-$X$ product, a $(2,0)$ vertex has $Z$ support on
both twisted sites, whereas a $(0,1)$ vertex is pure $X$ on the logical
line and has $Z$ support on at most its one common outside site.  These
support descriptions are incompatible.  Exchanging $X$ and $Z$ gives the
dual statement.

Nor can a vertex be incident to both a pure-$X$ and a pure-$Z$ product
edge.  If $E^\dagger F=U_X$ and $E^\dagger G=U_Z$, then
$F^\dagger G=U_XU_Z$ has nonzero components in both CSS logical summands
and weight at most $|F|+|G|=2q+2$.  This contradicts
Lemma~\ref{lem:sm-hgp-pauli-four-step-gap}.  Therefore every vertex incident
to a $(2,0)$ edge has degree one.  Since $G_s$ is complete multipartite and
connected, the presence of such an edge forces $G_s$ to consist of precisely
that isolated pair.  Its two represented classes are classwise singletons,
so their leading coset masses equal their peak monomials and MP and MLD have
the same winner set whenever the monomials are unequal.  On their equality
locus, both classes remain in the MP tie set, so higher-order coset terms can
select at most one of those same classes and still cannot create strict
best-tie separation.

If no $(2,0)$ edge occurs, every edge is of type $(0,1)$ and hence joins
vertices with identical composition.  Connectivity of $G_s$ then makes the
entire fiber composition-pure.  Every represented class has the same peak
monomial, so all are MP-optimal, whereas MLD selects those with largest
multiplicity; the two optimal sets necessarily intersect.  Because the
single-error monomials are exactly equal, not merely equal in valuation,
higher-order MLD terms can only refine a choice inside this persistent MP tie
set.  Thus neither possible graph structure gives strict separation at
weight $q+1$.  Together with the universal half-distance obstruction, this proves
Eq.~\eqref{eq:sm-hgp-conditional-lower}.
\end{proof}

For $q=1$, the separate complete certificate gives
$m(\mathcal H_1)=3=q+2$.  Combining it with
Theorem~\ref{lem:sm-hgp-excess-two-balance} therefore proves the all-size
lower bound
\begin{equation}
 \boxed{m(\mathcal H_q)\ge q+2\quad\text{for every }q\ge1.}
 \label{eq:sm-hgp-all-q-strengthened-lower}
\end{equation}
The $q=1$ code does contain weight-four nontrivial representatives, so the
four-step Pauli gap itself is correctly restricted to $q\ge2$; only the
onset lower bound extends through the separately audited base case.

\subsection{A uniform detector-saturating onset gadget}
\label{sec:sm-hgp-uniform-gadget}

We now match Eq.~\eqref{eq:sm-hgp-all-q-strengthened-lower}.  Put
$n=3q$ and label each HGP qubit by a block $L$ or $R$ and a coordinate
$(i,j)\in\mathbb Z_n^2$.  For $q\ge2$, define
\begin{equation}
 r_k=4+3k,\qquad k=0,\ldots,q-2.
 \label{eq:sm-hgp-rk}
\end{equation}
The reference error $M_R$ has $X$ on
\begin{equation}
 R(1,n-1),\ R(2,n-1),\ R(3,n-1)
 \label{eq:sm-hgp-mr-x}
\end{equation}
and $Z$ on $R(r_k,n-1)$ for every $k$.  Thus
$|M_R|=q+2$ and
\begin{equation}
 \operatorname{comp}(M_R)=(3,0,q-1).
\end{equation}

Consider the $q-1$ $X$-check supports
\begin{equation}
 \begin{split}
 D_k={}&\{L(r_k,n-1),L(r_k,0),L(r_k,1)\}\\
 &\cup\{R(r_k,n-1),R(r_k+1,n-1),R(r_k+2,n-1)\},
 \end{split}
 \label{eq:sm-hgp-long-detectors}
\end{equation}
and the following three $Z$-check supports:
\begin{align}
 A_0={}&\{L(n-1,0),L(0,0),L(1,0),
          R(1,n-2),R(1,n-1),R(1,0)\},\nonumber\\
 A_1={}&\{L(0,1),L(1,1),L(2,1),
          R(2,n-1),R(2,0),R(2,1)\},\nonumber\\
 A_2={}&\{L(1,n-1),L(2,n-1),L(3,n-1),
          R(3,n-3),R(3,n-2),R(3,n-1)\}.
 \label{eq:sm-hgp-end-detectors}
\end{align}
They are the raw BB check rows $(r_k,n-1)$ of $H_X$ and
$(1,0),(2,1),(3,n-1)$ of $H_Z$, respectively.  All $q+2$ physical supports
in Eqs.~\eqref{eq:sm-hgp-long-detectors} and
\eqref{eq:sm-hgp-end-detectors} are pairwise disjoint: the $R$ triples of
the $D_k$ partition the first-coordinate set
$\{4,\ldots,n-1,0\}$, their $L$ parts have distinct first coordinates,
and the $A_i$ use the complementary $R$ coordinates $1,2,3$ and distinct
$L$ second coordinates.  Each support has odd
syndrome on $M_R$: $D_k$ contains the corresponding core $Z$, while each
$A_i$ contains exactly one of the three $X$ sites in
Eq.~\eqref{eq:sm-hgp-mr-x}.  Every error with the same syndrome must
therefore occupy at least one qubit in every support, proving
\begin{equation}
 w_{\min}(s_\star)\ge q+2.
 \label{eq:sm-hgp-detector-lower}
\end{equation}
The reference error attains the bound, so equality holds.

At equality the packing is saturated: there is exactly one occupied qubit
in each $D_k$ and $A_i$, and no occupied qubit outside their union.  In
particular, the $Z$ component has at most one site in each detector cell.
Let $\Omega$ denote this union and let $C$ be the reference $Z$ vector on
the $q-1$ core sites $R(r_k,n-1)$.  We first solve the restricted equation
$H_Xz=H_XC$ exactly.  The kernel of $H_X|_{\Omega}$ has the five-vector
basis
\begin{equation}
 \mathbf 1_{A_0},\quad \mathbf 1_{A_1},\quad \mathbf 1_{A_2},
 \quad u_{01},\quad u_{12},
 \label{eq:sm-hgp-five-z-generators}
\end{equation}
where $\mathbf 1_{A_i}$ is the incidence vector of the $Z$-check support
$A_i$, and $u_{ab}$ occupies $R(j,n-1)$ for the two residue classes
$j\bmod3\in\{a,b\}$.  All five displayed vectors lie in the kernel and
are independent.  Completeness can be read directly from the raw rows.
For a homogeneous solution $w=z+C$, denote its two block components by
$\ell_{i,j}$ and $r_{i,j}$, with indices modulo $n$.  Every $H_X$ row is
\begin{equation}
 \ell_{i,j}+\ell_{i,j+1}+\ell_{i,j+2}
 +r_{i,j}+r_{i+1,j}+r_{i+2,j}=0.
 \label{eq:sm-hgp-raw-hx-recurrence}
\end{equation}
Restricted to $\Omega$, the layer $j=1$ sets every interior
$\ell_{r_k,1}$ to zero and gives
$\ell_{0,1}=\ell_{1,1}=\ell_{2,1}=r_{2,1}\equiv\alpha_1$.
The layer $j=-3$ similarly gives
$\ell_{1,-1}=\ell_{2,-1}=\ell_{3,-1}=r_{3,-3}\equiv\alpha_2$
and kills every interior $\ell_{r_k,-1}$.  Next, $j=0$ kills the
interior $\ell_{r_k,0}$ and forces
\begin{equation}
 r_{2,0}=\alpha_1,\qquad
 \ell_{-1,0}=\ell_{0,0}=\ell_{1,0}=r_{1,0}\equiv\alpha_0.
\end{equation}
Finally, the $j=-2$ layer gives $r_{1,-2}=\alpha_0$ and
$r_{3,-2}=\alpha_2$.  Subtracting
$\sum_i\alpha_i\mathbf 1_{A_i}$ now removes every $L$ coordinate and every
$R$ coordinate off the column $j=-1$.  The remaining $j=-1$ equations are
$r_i+r_{i+1}+r_{i+2}=0$, whose two-dimensional period-three kernel is
spanned by $u_{01}$ and $u_{12}$.  This proves completeness and gives
\begin{equation}
 \operatorname{rank}(H_X|_{\Omega})=6(q+2)-5=6q+7.
 \label{eq:sm-hgp-restricted-z-rank}
\end{equation}
Thus every candidate $Z$ component is uniquely
\begin{equation}
 C+\alpha_0\mathbf 1_{A_0}+\alpha_1\mathbf 1_{A_1}
 +\alpha_2\mathbf 1_{A_2}+\beta_{01}u_{01}+\beta_{12}u_{12}.
 \label{eq:sm-hgp-five-bit-z-form}
\end{equation}
The one-hot inequalities reduce these $2^5$ states to the following
four-case table, independent of $q$:
\begin{equation}
\begin{array}{c|c|c}
 (\beta_{01},\beta_{12}) & \text{admissible }(\alpha_0,\alpha_1,\alpha_2)
 & Z\text{ support}\\ \hline
 (0,0)&(0,0,0)&C\\
 (1,0)&(0,0,0)&C+u_{01}\\
 (0,1)&(0,0,0)&C+u_{12}\\
 (1,1)&\text{none}&C+u_{02}\text{ has three sites in a long cell}
\end{array}
 \label{eq:sm-hgp-five-bit-z-table}
\end{equation}
Every nonzero $\alpha_i$ likewise produces a detector cell with at least
two $Z$-bearing sites.  Hence all long cells are pure $Z$ on the $R$ column
$j=n-1$, and the only possible endpoint $Z$ sites are $R(i,n-1)$,
$i=1,2,3$.  The verification artifact emits the complete 32-state parity
table and checks the rank calculation directly from the raw BB rows.

For $C$, none of the endpoint detectors contains a $Z$ site.  For
$C+u_{01}$, the unique $Z$ sites of $A_0$ and $A_2$ are respectively
$R(1,n-1)$ and $R(3,n-1)$, while $A_1$ contains none; for $C+u_{12}$ the
corresponding occupied detectors are $A_0$ and $A_1$.  A detector already
occupied by $Z$ has two phase-free choices, $Z$ or $Y$, whereas an unoccupied
endpoint detector has six possible $X$ sites.  This gives the pre-parity
counts $6^3$, $2^2\cdot6$, and $2^2\cdot6$ below.

For completeness, write $x_Q\in\mathbb F_2$ for the $X$ component at an
endpoint qubit $Q$.  After the bulk pivots above, the raw $H_Z$ rows at
$(2,n-1),(3,0),(1,1),(3,1),(1,n-1)$ form the following complete reduced cap
system:
\begin{align}
 x_{L(1,n-1)}+x_{L(2,n-1)}+x_{R(2,n-1)}&=1,\nonumber\\
x_{L(1,0)}+x_{R(3,n-2)}+x_{R(3,n-1)}&=1,\nonumber\\
x_{L(0,1)}+x_{L(1,1)}+x_{R(1,0)}+x_{R(1,n-1)}&=1,\nonumber\\
x_{L(1,1)}+x_{L(2,1)}+x_{R(3,n-1)}&=1,\nonumber\\
x_{L(1,n-1)}+x_{R(1,n-2)}+x_{R(1,n-1)}&=1.
 \label{eq:sm-hgp-cap-parities}
\end{align}
Together with one-hot occupancy of $A_0,A_1,A_2$, these equations reduce the
$216$, $24$, and $24$ a priori endpoint choices for
$C,C+u_{01},C+u_{12}$ to respectively two, one, and one completions.  The
complete truth table is
\begin{equation}
\begin{array}{c|c|c|c}
 Z\text{ component} & \text{choices before parity} & X\text{ completion}
 & \text{composition}\\ \hline
 C & 6^3=216 & R(1,n-1),R(2,n-1),R(3,n-1) & (3,0,q-1)\\
 C & 6^3=216 & L(1,0),L(1,1),L(1,n-1) & (3,0,q-1)\\
 C+u_{01} & 2^2\!\cdot6=24 & R(1,n-1),R(2,n-1),R(3,n-1)
 & (1,2,q-1)\\
 C+u_{12} & 2^2\!\cdot6=24 & R(1,n-1),R(2,n-1),R(3,n-1)
 & (1,2,q-1)\\
 C+u_{02} & 0 & \text{excluded by long-cell one-hotness} & --
\end{array}
\label{eq:sm-hgp-four-state-table}
\end{equation}
In the third row the sites $R(1,n-1)$ and $R(3,n-1)$ are $Y$; in the
fourth row $R(1,n-1)$ and $R(2,n-1)$ are $Y$.  All other listed end sites
are $X$, and the long-column sites are $Z$.  Equations
\eqref{eq:sm-hgp-cap-parities} involve only coordinates
$0,1,n-2,n-1$ and are therefore independent of $q$ for every $n=3q\geq6$.
Direct substitution gives the same reduced system for the overlapping cases
$q=2,3$.  An exact row-by-row generator of the $216/24/24\to2/1/1$ reduction
is included in the verification artifact.

The first two representatives differ by the weight-six $X$ stabilizer at
the corner and therefore lie in one logical class $A$.  The last two are
obtained from the first representative by the two distinct nonzero
minimum-$Z$ logical lines $u_{01}$ and $u_{12}$, so they occupy distinct
classes $B$ and $C$.  The complete leading spectrum is consequently
\begin{equation}
 P_A=2x^3z^{q-1},\qquad
 P_B=P_C=xy^2z^{q-1}.
 \label{eq:sm-hgp-uniform-spectrum}
\end{equation}
For every positive Pauli direction satisfying
\begin{equation}
 x<y<\sqrt2\,x,
 \label{eq:sm-hgp-uniform-cone}
\end{equation}
each singleton in $B$ or $C$ has larger peak probability than an individual
$A$ representative, while the twofold mass of $A$ is larger than either
singleton mass.  MP and MLD therefore have disjoint winner sets.

For $q=1$, an exhaustive fixed-syndrome certificate proves the absence of
representatives below weight three and enumerates all 108 leading
representatives.  One logical class has six representatives of composition
$(3,0,0)$, while competing singleton classes have composition $(1,2,0)$.
Thus the same strict split holds throughout the common cone
Eq.~\eqref{eq:sm-hgp-uniform-cone}.  Combining this base certificate and the
all-$q\geq2$ gadget with
Eq.~\eqref{eq:sm-hgp-all-q-strengthened-lower} proves
\begin{smtheorem}[Exact onset of the separable HGP/BB family]
For every $q\ge1$, on the open interior Pauli cone
$0<x<y<\sqrt2x$ with $x+y+z=1$, the strict MP--MLD onset is
\begin{equation}
 \boxed{m(\mathcal H_q)=q+2=\frac{d_q}{2}+2.}
 \label{eq:sm-hgp-exact-onset}
\end{equation}
\end{smtheorem}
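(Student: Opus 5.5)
The plan is to meet the lower bound in Eq.~(\ref{eq:sm-hgp-all-q-strengthened-lower}) with an explicit strict witness at weight $q+2$. Everything will be assembled through part 3 of Theorem~\ref{thm:sm-unified}, with $h=q$ and $r=2$.

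For the lower bound I will quote existing results. Theorem~\ref{thm:sm-lower} excludes all weights below $q$. Theorem~\ref{thm:sm-hgp-all-q} certifies the layer $q$ as entropy neutral. For $q\ge2$, Theorem~\ref{lem:sm-hgp-excess-two-balance} certifies the layer $q+1$ as neutral. Both neutrality statements hold for every interior direction, so they hold in particular on the cone $0<x<y<\sqrt2x$. For $q=1$ I will invoke the separate complete certificate, giving $m(\mathcal H_1)\ge3$. This yields $m(\mathcal H_q)\ge q+2$ for all $q\ge1$.

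For the upper bound with $q\ge2$, I use the syndrome $s_\star$ of the reference error $M_R$. By Eq.~(\ref{eq:sm-hgp-detector-lower}) and its attainment, $w_{\min}(s_\star)=q+2$. The exhaustive equality classification in Eq.~(\ref{eq:sm-hgp-four-state-table}) then gives the complete leading spectrum Eq.~(\ref{eq:sm-hgp-uniform-spectrum}). Two of the representatives differ by a stabilizer, so class $A$ is a doublet of composition $(3,0,q-1)$. Classes $B$ and $C$ are singletons of composition $(1,2,q-1)$, and no other class has weight-$(q+2)$ representatives.

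I then apply the leading-layer sufficient criterion Eq.~(\ref{eq:sm-leading-spectrum-criterion}) of Theorem~\ref{thm:sm-spectrum}. Cancel the common factor $xz^{q-1}$ from all monomials. For the peaks, $T_B=T_C=y^2$ and $T_A=x^2$, so $y>x$ makes $\{B,C\}$ the MP set. For the masses, $P_A=2x^2$ and $P_B=P_C=y^2$, so $y^2<2x^2$ makes $A$ the unique maximizer of $P^{(w_s)}$. These sets are disjoint, and by Theorem~\ref{thm:sm-spectrum} the lexicographic refinement cannot restore an intersection. Hence $m\le q+2$ throughout the open cone.

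For $q=1$ the same comparison applies to the certified 108-representative fiber. There class $A$ has six representatives of composition $(3,0,0)$ and the competing singleton classes have composition $(1,2,0)$. The strict region is then $y^2<6x^2$ together with $y>x$, which contains the cone.

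The main obstacle is not the final inequality but the exhaustiveness of the weight-$(q+2)$ fiber. Some class other than $A,B,C$ could hide a representative, or $A$ could have a third one, and either would change the peak or the mass. This exhaustiveness rests on the detector-packing saturation and on the restricted rank computation Eq.~(\ref{eq:sm-hgp-restricted-z-rank}). I would recheck that the cap system Eq.~(\ref{eq:sm-hgp-cap-parities}), together with the one-hot constraints, really leaves exactly the $2/1/1$ completions uniformly for $n=3q\ge6$. I would also recheck that $B$ and $C$ are distinct from $A$ and from each other. That distinctness follows because $u_{01}$ and $u_{12}$ are distinct nontrivial minimum $Z$ lines, which are nonboundaries by Lemma~\ref{lem:sm-hgp-min-cycles}.
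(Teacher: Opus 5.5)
Your proposal is correct and follows the paper's proof. The lower bound $m(\mathcal H_q)\ge q+2$ comes from the universal obstruction, half-distance rigidity, the excess-two balance theorem for $q\ge2$, and the separate $q=1$ certificate. The upper bound comes from the detector-saturating gadget: its exhaustive $2/1/1$ weight-$(q+2)$ fiber (the sixfold class when $q=1$) gives disjoint leading winner sets on $x<y<\sqrt2x$. One small tightening: $B\neq C$ needs the difference $u_{01}+u_{12}=u_{02}$ to be a nonboundary, not merely $u_{01}\neq u_{12}$ as operators. This follows from the same minimum-cycle lemma, because $u_{02}$ is itself a minimum $Z$ line.
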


\section{Gross BB memory: exact one-order delay}

\subsection{Pauli-purity rigidity at even half distance}
\label{sec:sm-pauli-purity-rigidity}

Call a phase-free Pauli \emph{Pauli-pure} if all its occupied sites carry
the same letter $X$, $Y$, or $Z$.

\begin{smtheorem}[Pauli-purity rigidity]
\label{thm:sm-pauli-purity-rigidity}
Let $\mathcal Q$ be an even-distance stabilizer code of distance $d=2h$.
If every nontrivial centralizer of weight $d$ is Pauli-pure, then no strict
MP--MLD split occurs at physical weight $h$, for any interior memoryless
Pauli direction.  Consequently
\begin{equation}
 m(\mathcal Q)\geq h+1=\frac d2+1.
 \label{eq:sm-pauli-purity-delay}
\end{equation}
\end{smtheorem}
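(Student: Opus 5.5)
The plan mirrors the half-distance step of Theorem~\ref{thm:sm-hgp-all-q}, with Pauli purity taken as a hypothesis rather than derived. By Theorem~\ref{thm:sm-lower}, no strict split can occur below $h$. It therefore suffices to show that the weight-$h$ layer is entropy neutral. Fix a syndrome $s$ with $w_s=h$ whose weight-$h$ fiber of $\mathfrak F_h$ meets at least two logical classes; if only one class is represented, MP and MLD trivially share it. By Theorem~\ref{thm:sm-spectrum}, the decision is governed by the peak $T_{s,L}$ and by the lexicographic spectrum. The goal is to show that every class represented at weight $h$ has exactly the same peak monomial, so that all such classes lie in $\argmax_L T_{s,L}$.

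\emph{Key step.} Take an edge $E$--$F$ of the fiber. Then $U=E^\dagger F$ is a nontrivial centralizer with $2h=d\le|U|\le|E|+|F|=2h$. Equality throughout forces $|U|=d$ and $\operatorname{supp}E\cap\operatorname{supp}F=\varnothing$, with $\operatorname{supp}U=\operatorname{supp}E\sqcup\operatorname{supp}F$. On each site of $U$, exactly one factor is nonidentity and equals the letter of $U$. By hypothesis $U$ is Pauli-pure with a single letter $\sigma\in\{X,Y,Z\}$. Hence $E$ and $F$ are both pure $\sigma$ of weight $h$, and they have identical composition and identical monomial $x^h$, $y^h$ or $z^h$. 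The competition graph is complete multipartite on at least two parts, hence connected. Propagating along edges therefore shows that every vertex of the fiber, in every represented class, carries the same monomial $\mu_\sigma^h$. (Vertices within one class are each joined to a vertex of another class, so this step needs no within-class argument.)

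\emph{Conclusion.} Every represented class attains the same tropical peak, so $\argmax_L T_{s,L}$ equals the full set of represented classes. Classes with no weight-$h$ representative lose both comparisons for small $t$. Every lexicographic MLD maximizer must have a nonzero leading coefficient, so it is a represented class and lies in the MP set. The intersection in Eq.~(\ref{eq:sm-spectrum-criterion}) is therefore nonempty. This holds regardless of higher-order coefficients and of the interior direction $(x,y,z)$, because the monomials coincide exactly rather than merely in valuation. Combined with Theorem~\ref{thm:sm-lower}, this gives $m\ge h+1$.

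The main obstacle is the disjointness and equality bookkeeping in the key step. I must check that a phase-free product with $|U|=|E|+|F|$ forces disjoint supports, since any common site would reduce the weight by at least one. I must also treat the pure-$Y$ case carefully: the argument only needs every site letter of $U$ to be $\sigma$, so it goes through verbatim.
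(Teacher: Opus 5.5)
Your proposal is correct and follows essentially the same route as the paper. The equality chain $d\le|U|\le|E|+|F|=d$ forces $E$ and $F$ to be disjoint complementary halves of a Pauli-pure minimum logical; composition then propagates across the connected complete multipartite competition graph, and the resulting MP tie over all represented classes contains every MLD winner (the paper phrases the disjointness step through the phase-free identity $|E|+|F|-|U|=c+2o$, which is the same bookkeeping you describe).
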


\begin{proof}
Fix a syndrome whose minimum fiber $\mathcal V_s$ has weight $h$ and
contains more than one logical class.  Join two representatives whenever
their logical classes differ.  As in the excess-two argument, the resulting
competition graph is complete multipartite and connected.

For every edge $E$--$F$, the product $U=E^\dagger F$ is a nontrivial
centralizer.  Hence $|U|\geq d$, while
$|U|\leq |E|+|F|=2h=d$; therefore $|U|=d$.  On
$\operatorname{supp}(U)$, let $c$ count sites occupied by both factors.
Outside $\operatorname{supp}(U)$, the identity product
forces the factors to agree; let $o$ count their common occupied sites.
The exact phase-free Pauli identity is
\begin{equation}
 |E|+|F|-|U|=c+2o.
 \label{eq:sm-zero-excess-identity}
\end{equation}
Its left-hand side vanishes, so $c=o=0$.  Thus $E$ and $F$ are disjoint
complementary halves of $\operatorname{supp}(U)$.  By hypothesis $U$ has one Pauli letter
$\sigma\in\{X,Y,Z\}$ on its entire support.  At a site occupied by only one
factor, that factor must also carry $\sigma$.  Both endpoints consequently
have the same composition: $h$ copies of $\sigma$.

Every edge of the connected competition graph preserves composition, so the
entire minimum fiber has a single composition.  All represented logical
classes are therefore MP-tied.  MLD may prefer classes of larger
multiplicity, but its winner set remains a nonempty subset of this MP tie
set.  Strict best-tie separation is impossible.  Together with the universal
half-distance obstruction this proves Eq.~\eqref{eq:sm-pauli-purity-delay}.
\end{proof}

\subsection{Certified Gross-code delay}
\label{sec:sm-gross-purity-certificate}

For the standard Gross memory of Ref.~\cite{BravyiEtAl2024Gross}
\begin{equation}
 \mathcal G=Q(x^3+y+y^2,y^3+x+x^2;12,6)=[[144,12,12]],
 \label{eq:sm-gross-code}
\end{equation}
the exact decision problem for a non-Pauli-pure weight-twelve nontrivial
centralizer was partitioned into the 24 disjoint sectors defined by the first
nonzero logical-label bit. In each sector the Boolean satisfiability (SAT)
instance imposes zero
syndrome, a nonzero quotient label, exact physical weight twelve, and the
presence of at least two of the three occupied-site categories $X$-only,
$Y$, and $Z$-only. All 24 sectors are exhaustively unsatisfiable (UNSAT).
Hence every
minimum Gross logical is Pauli-pure, and
Theorem~\ref{thm:sm-pauli-purity-rigidity} gives the unconditional bound
\begin{equation}
 \boxed{m(\mathcal G)\geq7.}
 \label{eq:sm-gross-certified-lower}
\end{equation}

As a separate regression certificate, translation closure followed by
seeded exhaustive completion yields all 3780 weight-twelve nontrivial
representatives: 1884 pure $X$, 12 pure $Y$, and 1884 pure $Z$, spanning 504
logical labels.  Their complete complementary-factor audit contains
1,452,000 multiclass weight-six syndromes and certifies every one either by
classwise singleton structure or by a common peak composition.  This finite
atlas independently reproduces the analytic no-split conclusion at
half distance. The released JavaScript Object Notation (JSON) summaries record the shard terminations,
counts, and spectrum digest; the scripts reconstruct every Boolean instance
from the published Gross check matrices.

\subsection{Odd short-centralizer gap}
\label{sec:sm-gross-weight-thirteen-gap}

The Gross code has no nontrivial centralizer of weight thirteen.  The pure
$X$ and pure $Z$ sectors are direct exhaustive UNSAT instances.  The pure
$Y$ sector is partitioned by the 24 first-nonzero logical-label bits, and all
24 shards are exhaustive UNSAT.  For the remaining non-Pauli-pure sector, a
smaller exact symmetry cover suffices.  Such an operator contains at least
one $Y$ or $Z$-only site.  Translations act transitively on the $12\times6$
coordinates within each of the two Gross qubit blocks, so translating that
site to the block origin reduces the complete decision problem to four
anchors: left/right block times $Y/Z$.  All four anchored instances are
exhaustive UNSAT.  These mutually exhaustive sectors prove
\begin{equation}
 \nexists\,U\in N(\mathcal S)\setminus\mathcal S
 \quad\text{with}\quad |U|=13.
 \label{eq:sm-gross-no-weight-thirteen}
\end{equation}
Consequently every inter-class competition between weight-seven errors has
product weight twelve or fourteen.  Equation~\eqref{eq:sm-gross-no-weight-thirteen}
does not by itself exclude a split at weight seven; it reduces that question
exactly to two balanced-factorization atlas subgraphs: the excess-two graph of
the complete weight-twelve list and the complementary-factor graph of
weight-fourteen centralizers.

\subsection{Exact Gross onset from an excess-two star}
\label{sec:sm-gross-exact-onset}

Label the two Gross qubit blocks by $L(a,b)$ and $R(a,b)$, with
$a\in\mathbb Z_{12}$ and $b\in\mathbb Z_6$.  Define
\begin{align}
 E_A={}&Y_{L(0,4)}Y_{L(0,5)}
 X_{L(1,2)}X_{L(1,4)}X_{L(2,0)}X_{L(2,2)}X_{R(0,0)},
 \nonumber\\
 E_{B,1}={}&Z_{L(0,4)}Z_{L(0,5)}X_{L(3,1)}
 X_{R(1,0)}X_{R(1,1)}X_{R(1,2)}X_{R(2,1)},
 \nonumber\\
 E_{B,2}={}&Z_{L(0,4)}Z_{L(0,5)}X_{L(0,2)}X_{L(0,3)}
 X_{R(0,4)}X_{R(1,0)}X_{R(1,2)}.
 \label{eq:sm-gross-star-errors}
\end{align}

\begin{smtheorem}[Exact Gross onset]
\label{thm:sm-gross-exact-onset}
For the standard Gross code in Eq.~\eqref{eq:sm-gross-code}, throughout
\begin{equation}
 x>0,\qquad z>0,\qquad z<y<\sqrt2\,z,\qquad x+y+z=1,
 \label{eq:sm-gross-onset-cone}
\end{equation}
the strict best-tie onset is
\begin{equation}
 \boxed{m(\mathcal G)=7=\frac d2+1.}
 \label{eq:sm-gross-exact-onset}
\end{equation}
\end{smtheorem}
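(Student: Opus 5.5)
The lower bound is already available: Eq.~\eqref{eq:sm-gross-certified-lower}, obtained from the 24-sector purity certificate and Theorem~\ref{thm:sm-pauli-purity-rigidity}, gives $m(\mathcal G)\geq7$ on every interior direction. It therefore remains to construct a strict witness at weight $7$ whose MP and MLD winner sets are disjoint throughout the cone~\eqref{eq:sm-gross-onset-cone}. The natural candidate is the syndrome $s_\star=\operatorname{syn}(E_A)$ of the three errors in Eq.~\eqref{eq:sm-gross-star-errors}. I would carry out the argument in the following order.

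\textbf{Step 1: the three errors share a syndrome in exactly two classes.} I would check that $E_A$, $E_{B,1}$ and $E_{B,2}$ all have syndrome $s_\star$, which is a finite parity computation against the published Gross check matrices. Next I would check that $E_A^\dagger E_{B,i}$ is a nontrivial centralizer, by showing it has zero syndrome and a nonzero logical label. I would also check that $E_{B,1}^\dagger E_{B,2}$ is a stabilizer. Since $E_{B,1}$ and $E_{B,2}$ agree on $Z_{L(0,4)}Z_{L(0,5)}$ and on $X_{R(1,0)},X_{R(1,2)}$, their product is a pure-$X$ operator of small weight, and verifying membership in the stabilizer row space is direct.

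\textbf{Step 2: $w_{\min}(s_\star)=7$.} This needs a lower bound, and I would obtain it by a detector-packing argument as in the HGP gadget. The plan is to exhibit seven pairwise disjoint check supports, each with odd syndrome on $E_A$. Every compatible error then occupies at least one qubit in each support, which gives weight at least $7$. If no such packing exists, I would fall back on an exhaustive SAT/UNSAT certificate ruling out weight at most $6$ for this syndrome.

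\textbf{Step 3: classify the whole weight-$7$ fiber.} This is the main obstacle. I must show that the fiber contains exactly $E_A$ in class $A$ and exactly $E_{B,1},E_{B,2}$ in class $B$, and nothing in any other class. I would use the reduction already established. By Eq.~\eqref{eq:sm-gross-no-weight-thirteen}, any competitor $F$ of $E_A$ satisfies $|E_A^\dagger F|\in\{12,14\}$. I would handle the two weights as follows.
\begin{itemize}
\item For product weight $12$, I would scan the complete list of 3780 pure minimum logicals, together with same-class multiplication by those logicals, and solve the excess-two identity $c+2o=2$ against $E_A$ and against the two $B$ endpoints.
\item For product weight $14$, there is no complete list to use. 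I would therefore pose a fixed-syndrome SAT instance: weight exactly $7$, syndrome $s_\star$, and a logical label differing from the labels of $A$ and $B$, or lying in $A$ or $B$ but distinct from the three listed errors. Each such instance should be shown UNSAT, sharded by logical-label bits as in the purity certificate.
\end{itemize}
Alternatively, a saturated packing from Step 2 may reduce the fiber to a small local linear system, as in the HGP cap table, which can then be solved exactly.

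\textbf{Step 4: compare peak and mass.} Given the fiber, the compositions are $(5,2,0)$ for $E_A$ and $(5,0,2)$ for each of $E_{B,1},E_{B,2}$. Hence
\[T_A=P_A=x^5y^2,\qquad T_B=x^5z^2,\qquad P_B=2x^5z^2.\]
When $y>z$, MP uniquely selects $A$. When $y^2<2z^2$, MLD selects $B$ already at the leading layer. The leading-layer criterion~\eqref{eq:sm-leading-spectrum-criterion} of Theorem~\ref{thm:sm-spectrum} then gives eventual strict disagreement throughout~\eqref{eq:sm-gross-onset-cone}. Combining this witness with the lower bound proves $m(\mathcal G)=7$.
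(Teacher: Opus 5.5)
Your proposal is correct and follows essentially the same route as the paper. The lower bound comes from the 24-sector purity certificate via Theorem~\ref{thm:sm-pauli-purity-rigidity}, and the upper bound comes from the star witness of Eq.~\eqref{eq:sm-gross-star-errors}, whose fiber the paper closes with one exhaustive fixed-syndrome certificate (weight at most six UNSAT, exact weight seven returning precisely the three operators); this matches your Step~2 SAT fallback and your Step~3 SAT instance. The SAT instance you pose for the product-weight-$14$ case already excludes every other weight-$7$ error with syndrome $s_\star$, so the separate product-weight-$12$ scan is redundant (and it relies on the 3780-element list that the paper uses only as a regression check).
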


\begin{proof}
Equation~\eqref{eq:sm-gross-certified-lower} gives $m(\mathcal G)\ge7$.
Direct substitution of Eq.~\eqref{eq:sm-gross-star-errors} into the Gross
check matrix shows that all three errors have the same syndrome.  The two
$B$ errors have the same quotient label, while the $A$ error has a different
label.  Their compositions are
\begin{equation}
 \operatorname{comp}(E_A)=(5,2,0),\qquad
 \operatorname{comp}(E_{B,1})=
 \operatorname{comp}(E_{B,2})=(5,0,2).
 \label{eq:sm-gross-star-compositions}
\end{equation}

For completeness, the fixed-syndrome certificate reconstructs all 132 Gross
parity equations directly from Eq.~\eqref{eq:sm-gross-code}.  The instance
with physical weight at most six is UNSAT.  The exact-weight-seven instance
terminates exhaustively after precisely the three operators in
Eq.~\eqref{eq:sm-gross-star-errors}; no candidate-search assumption enters
this enumeration.  Hence the complete leading logical-class spectrum is
\begin{equation}
 P_A=x^5y^2,\qquad P_B=2x^5z^2.
 \label{eq:sm-gross-star-spectrum}
\end{equation}
Condition $y>z$ makes the largest individual monomial occur uniquely in
class $A$, whereas $y^2<2z^2$ makes the total leading coset mass larger in
class $B$.  The two winner sets are therefore disjoint on the open cone
\eqref{eq:sm-gross-onset-cone}, proving the matching upper bound and the
theorem.
\end{proof}

For this syndrome the positive leading mass advantage that enters the
operational-gap formula of Sec.~S9 is
\begin{equation}
 P_B-P_A=x^5(2z^2-y^2)>0.
 \label{eq:sm-gross-star-margin}
\end{equation}
The released verifier uses the rational interior point
$(x,y,z)=(1/10,23/50,11/25)$ and records the complete spectrum, representative
supports, solver termination states, and Secure Hash Algorithm 256-bit
(SHA-256) digests of the source
certificates.

\section{Tie-safe operational failure gap}

Give MP its best possible tie rule.  Define
\begin{align}
 S_{\MP}^*(t)&=\sum_s\max_{L\in D_{\MP}(s,t)}M_{s,L}(t),\nonumber\\
 S_{\MLD}(t)&=\sum_s\max_LM_{s,L}(t),\\
 \Delta(t)&=P_{\rm fail}^{\MP,*}(t)-P_{\rm fail}^{\MLD}(t)
 =S_{\MLD}(t)-S_{\MP}^*(t).\nonumber
\end{align}
For each syndrome,
\begin{equation}
 \delta_s(t)=\max_LM_{s,L}(t)
 -\max_{L\in D_{\MP}(s,t)}M_{s,L}(t)\geq0.
\end{equation}
The constrained and unrestricted maxima agree exactly when
$D_{\MP}\cap D_{\MLD}\neq\varnothing$.  Therefore
\begin{equation}
 \delta_s>0\quad\Longleftrightarrow\quad
 D_{\MP}(s,t)\cap D_{\MLD}(s,t)=\varnothing,
\qquad \Delta=\sum_s\delta_s.
\end{equation}
No cancellation between syndromes is possible.  The explicit strict
witness in each family is a leading strict split in the sense of
Theorem~\ref{thm:sm-spectrum}.  Put
\begin{equation}
 {\cal A}_s=\argmax_LT_{s,L},\qquad
 \kappa_s=\max_LP_{s,L}
 -\max_{L\in{\cal A}_s}P_{s,L}.
 \label{eq:sm-kappa}
\end{equation}
Taking the leading term in Eq.~(\ref{eq:sm-exact-expansion}) gives
\begin{equation}
 \delta_s(t)=\kappa_st^{w_s}
 +O(t^{w_s+1}).
\end{equation}
Therefore, if no syndrome with $w_s<m$ exhibits strict disagreement and
$m=\min\{w_s:\kappa_s>0\}$, as established for the six families
studied here, then
\begin{equation}
 \Delta(t)=K_mt^m+O(t^{m+1}),\qquad
 K_m=\sum_{s:w_s=m}\kappa_s>0.
 \label{eq:sm-exact-K}
\end{equation}
This is an exact coefficient formula, not merely a positivity statement.
The lower-order exclusions and strict witnesses proved in Secs.~S3--S8
therefore give
\begin{align}
 \Delta_\ell^{(4)}(t)
 &=A_\ell t^{\lceil d_\ell/2\rceil}
 +O(t^{\lceil d_\ell/2\rceil+1}),\nonumber\\
 \Delta_\ell^{(5)}(t)
 &=B_\ell t^{(3^\ell+1)/2}
 +O(t^{(3^\ell+3)/2}),\nonumber\\
 \Delta_d^{\rm planar}(t)
 &=C_dt^{(d+1)/2}+O(t^{(d+3)/2}),\nonumber\\
 \Delta_d^{\rm toric}(t)
 &=D_dt^{(d+3)/2}+O(t^{(d+5)/2}),\nonumber\\
 \Delta_q^{\rm HGP}(t)
 &=E_qt^{q+2}+O(t^{q+3}),\nonumber\\
 \Delta^{\rm Gross}(t)
 &=Ft^7+O(t^8),
\end{align}
with $A_\ell,B_\ell,C_d,D_d,E_q,F>0$ on the corresponding open regions.

For the recurrent five-qubit witness, the coefficient ratio in
Eq.~(\ref{eq:sm-five-ratios}) gives the normalized margin
\begin{equation}
 \eta_5=1-\frac{x^2+y^2}{2yz}.
\end{equation}
For the toric gadget at $d=2\rho+1\geq5$ and the planar corridor, respectively,
\begin{align}
 \kappa_s^{\rm toric}
 &=x^{\rho-1}(2z^3-y^3),&
 \eta_{\rm toric}
 &=1-\frac{y^3}{2z^3},\nonumber\\
 \eta_{\rm planar}
 &=1-\frac{P_A}{P_B}
 >1-\frac{y}{2z(1-x^2/y^2)},\nonumber\\
 \eta_{\rm HGP}
 &=1-\frac{y^2}{2x^2}\quad(q\geq2),&
 \eta_{\rm Gross}
 &=1-\frac{y^2}{2z^2}.
\end{align}
For the $q=1$ HGP base case, whose mass-winning class has six rather
than two representatives, $\eta_{\rm HGP}=1-y^2/(6x^2)$.
Here $\eta_s=\kappa_s/\max_LP_{s,L}$.  Within each stated family regime,
with the $q=1$ HGP base case treated separately, these normalized margins
are independent of depth or distance.  Hence, on compact subsets away from
a decision boundary, they have a uniform positive lower bound.

\section{Independent finite-noise benchmark and distance scaling}

Equation~(\ref{eq:sm-exact-expansion}) first yields an explicit rigorous
finite-noise window. Because $x+y+z=1$, the total composition weight of all
weight-$k$ Pauli strings is $\binom nk$. Thus every higher-weight tail in one
coset obeys
\begin{equation}
 0\leq\sum_{j=1}^{n-w}\xi^jP_{s,L}^{(w+j)}
 \leq B_{n,w}(\xi):=
 \sum_{j=1}^{n-w}\binom n{w+j}\xi^j.
 \label{eq:sm-tail-bound}
\end{equation}
If $B_{n,w_s}(\xi)<\kappa_s$ and
$\xi<\max_LT_{s,L}$, the leading MLD class remains strictly above every
MP-allowed class and no higher-weight microscopic error changes the MP set.
This sufficient window is conservative but completely explicit. Replacing
$B_{n,w}$ by the syndrome-resolved enumerator gives the sharp computable
crossover tested below.

As an independent backend check, we reconstructed one explicit witness
syndrome in each matched $d=3$ planar and toric memory using only Stim
Pauli multiplication and commutation operations \cite{Gidney2021Stim};
none of the composition-spectrum routines used in the analytic
certificates is called.  The complete logical-coset orbit is then summed
exactly at finite $t$ along
$(x,y,z)=(1/10,46/100,44/100)$.  The resulting single-syndrome leading
coefficients are
\begin{align}
 \kappa_s^{\rm planar,\ d=3}&=\frac{21}{250},
 &m_s^{\rm planar}&=2,\nonumber\\
 \kappa_s^{\rm toric,\ d=3}&=\frac{21}{5000},
 &m_s^{\rm toric}&=3.
\end{align}
The planar calculation exhausts its $4^9$ Pauli errors, whereas the
toric calculation exhausts the fixed syndrome stabilizer orbit of
$2^{n-k}2^{2k}=2^{20}$ elements. Figure~3 of the main article
compares the exact gap with $\kappa_st^{m_s}$ over
$4\times10^{-4}\leq t\leq0.05$.  At $t=0.03$, the exact normalized
ratios are $0.855737\ldots$ and $0.726495\ldots$; the leading term stays
within ten percent through approximately $t=0.0204$ and $t=0.00990$ for
the planar and toric witnesses, respectively.  These are exact
fixed syndrome sums, not Monte Carlo estimates or fitted exponents.

To quantify the finite-$t$ window across distance, we next contract the
same analytic witness fibers without enumerating their stabilizer orbits.
Each physical qubit supplies a local factor over the incident independent
stabilizer-generator bits.  Sum-product elimination gives the exact
logical-coset mass, while replacing each sum by a maximum gives the exact
MP peak on the identical factor graph.  At $d=3$, both contractions agree
pointwise with the independent Stim orbit sums above.  We then evaluate
planar distances $d=3,5,7,9$ and toric distances $d=3,5,7,9$. For toric
$d=9$, an exact sliced contraction evaluates $4096$ slices of the same local
factor graph and avoids the prohibitive dense intermediate. No syndrome
sampling or curve fitting is used.

For each distance, let $\delta_{s,d}(t)$ denote the fixed-witness gap
$\delta_s(t)$ for the distance-$d$ member and let $\kappa_d,m_d$ be its leading
coefficient and onset exponent. Define
\begin{equation}
 R_d(t)=\frac{\delta_{s,d}(t)}{\kappa_dt^{m_d}},\qquad
 t_\epsilon(d)=\sup\{t:|R_d(u)-1|\leq\epsilon
 \ \text{for all }0<u\leq t\}.
 \label{eq:sm-distance-window}
\end{equation}
Here $t_\epsilon(d)$ is a relative accuracy window for the leading
asymptotic formula for a fixed syndrome, not a decoding threshold or a physical
threshold for the noise tolerance of the code. Its contraction with distance
reflects the growing volume factor from qubits with no error; it does not imply weaker
absolute protection, since the gap itself remains suppressed by
$t^{m_d}$ with increasing $m_d$.
The raw leading approximation omits the identity probability carried by
every representative of minimum weight. Retaining that factor exactly gives the
LO+I approximation in Fig.~6(c), which has no fitted parameters,
\begin{equation}
 \delta_{s,d}^{(0,\mathrm{id})}(t)
 =\kappa_dt^{m_d}(1-t)^{n_d-m_d}.
 \label{eq:sm-lo-plus-i}
\end{equation}
This is not a new onset theorem; it is the exact contribution of the
imbalance at minimum weight with the factors for no error unexpanded. It
separates the elementary volume correction, of scale $(n_d-m_d)t$, from
genuine corrections from the coset spectrum at higher weights.

After this identity factor is removed, write the exact gap for a fixed syndrome as
\begin{equation}
 \delta_{s,d}(t)
 =\kappa_dt^{m_d}(1-t)^{n_d-m_d}
 \left[1+c_dt+O(t^2)\right].
 \label{eq:sm-lo-plus-i-correction}
\end{equation}
The coefficient $c_d$ is the normalized net contribution of the next weight
layer. It depends on the multiplicities and Pauli compositions of the selected
witness and is not fixed by distance alone. Thus the LO validity window
contracts clearly with distance, while LO+I removes the dominant volume
dependence and leaves a window controlled by the discrete local spectrum at
higher weight. The latter need not vary monotonically with $d$.

Along the displayed ray, the raw ten percent windows for planar
$d=3,5,7,9$ are respectively
 $0.0204$, $0.00618$, $0.00247$, and $0.00151$; for toric $d=3,5,7,9$ they are
 $0.00990$, $0.00270$, $0.00116$, and $0.000659$. The LO+I values increase to
 $0.0497$, $0.0192$, $0.0380$, and $0.0148$ for the planar witnesses and
 $0.0211$, $0.0139$, $0.0346$, and $0.0330$ for the toric witnesses. The alternating planar
 values reflect the checkerboard parity between two rows already present in the
 corridor spectra. LO+I windows are specific to the witness and need not be
monotone in distance because they measure the residual spectrum at higher
weight after removing the common identity factor, not a code threshold. The
toric $d=7$ and $d=9$ values are close because their first residual corrections
are nearly equal. The slightly smaller $d=9$ window is set by subsequent terms
in the local spectrum.
 Thus the bare $\kappa_dt^{m_d}$ window contracts
roughly with the code area, while Eq.~(\ref{eq:sm-lo-plus-i})
removes the dominant trivial volume dependence and provides a more useful
estimate at finite noise.

The resulting finite noise curves, distance ratios, and ten percent validity
windows are displayed in Fig.~6 of the main article. The LO+I
curve retains the complete factor for no error in the contribution at minimum
weight, but not genuine corrections from the coset spectrum at higher weights.
The distance panels use exact sum product and max product contractions on
factor graphs, checked
against the independent $d=3$ orbit sums, without syndrome sampling or curve
fitting.

\FloatBarrier

\section{Exact verification boundary and released artifact}

The released computational artifact serves exclusively to verify the
finite statements and hypotheses whose analytic role is established
above; it is not used to infer unproved all-distance or all-depth claims.
The canonical certificate-verification commands perform the following
checks:
\begin{enumerate}
\item reconstructs the five-qubit entrance from the seed stabilizer;
\item checks the two-phase ten-state support and logical-label closure;
\item verifies 144 unique valuation branches, 36 coefficient returns,
factors all 19 rational atlas inequalities, and derives the analytic
region in Eq.~(\ref{eq:sm-five-region}) from six primitive signs;
\item reconstructs the four-qubit fixed witnesses through level 11 on a
$2\times2$ cover, verifies the level-12 entrance on a $4\times2$
log-monotonicity cover, and checks eight coefficient plus 278 valuation
Farkas rows;
\item recomputes the planar corridor spectra for the recorded finite
regression distances;
\item performs exact all-logical-class toric minimization at
$d=3,5,7,9,11$ and compares it with
Eq.~(\ref{eq:sm-toric-spectra});
\item verifies the HGP/BB four-step seed lemmas, detector packing, reduced
five-variable fiber, and endpoint-cap table underlying
$m(\mathcal H_q)=q+2=d/2+2$;
\item reconstructs the Gross excess-two star, exhausts its fixed syndrome
weight-seven fiber, and validates the complete half-distance purity summary;
\item reconstructs the complete small-code composition spectra and the
finite $t$ operational gap data in Fig.~6(a) of the main article.
\end{enumerate}
\emph{Reproducibility.}
The versioned verification artifact provides exact, machine-checkable
reproductions of all finite certificates used in the proofs. Running
\texttt{python verify.py --quick} validates the archived certificates and
replays the inexpensive HGP/BB and Gross checks, while
\texttt{python verify.py} additionally regenerates the invariant-cone,
Farkas, and toric certificates. The documented optional workflows reproduce
the complete HGP/BB and 24-sector Gross audits, as well as the independent
finite $t$ benchmarks in Fig.~6 of the main article. The checks required for the theorem
calculations use integer, rational, Boolean, symbolic, or rigorously
outward-rounded interval arithmetic; ordinary floating point is used only
to plot finite-$t$ curves after their exact spectra have been fixed. The
finite topological computations serve as regression checks of the analytic
all-distance proofs, rather than as induction over distance. Full commands,
certificate hashes, expected outputs, and software requirements are
documented in the artifact README~\cite{ZhaoYanZenodo2026}.


\end{document}